\LetLtxMacro{\ORIGselectlanguage}{\selectlanguage}
\DeclareRobustCommand{\selectlanguage}[1]{%
  \@ifundefined{alias@\string#1}
    {\ORIGselectlanguage{#1}}
    {\begingroup\edef\x{\endgroup
       \noexpand\ORIGselectlanguage{\@nameuse{alias@#1}}}\x}%
}
\newcommand{\definelanguagealias}[2]{%
  \@namedef{alias@#1}{#2}%
}
\newtheorem{lemma}{Lemma}
\newtheorem{theorem}{Theorem}
\newtheorem*{theorem*}{Theorem}
\newtheorem*{lemma*}{Lemma}
\def\L{\left} 
\def\R{\right}
\newcommand{\id}{\mathbb{I}}
\newcommand{\swap}{\mathcal{S}}
\newcommand{\Ebb}{\mathbb{E}}
\newcommand{\pdiagz}{p_\text{avg}(z)}
\newcommand{\pdiag}[1]{p_\text{avg}({#1})}
\newcommand{\FXEBd}{F_\text{XEB,d}}
\newcommand{\Fidd}{F_\text{id,d}}
\newcommand{\Freld}{F_e}
\newcommand{\be}{\begin{equation}}
\newcommand{\ee}{\end{equation}}
\newcommand{\bea}{\begin{eqnarray}}
\newcommand{\eea}{\end{eqnarray}}
\newcommand{\Caltech}{California Institute of Technology, Pasadena, CA 91125, USA}
\newcommand{\MIT}{Center for Theoretical Physics, Massachusetts Institute of Technology, Cambridge, MA 02139, USA}
\begin{document}
\title{
Benchmarking quantum simulators using ergodic quantum dynamics
}
\author{Daniel K.~Mark}
\affiliation{\MIT}
\author{Joonhee Choi}
\affiliation{\Caltech}
\author{Adam L.~Shaw}
\affiliation{\Caltech}
\author{Manuel Endres}
\affiliation{\Caltech}
\author{Soonwon Choi}\email{soonwon@mit.edu}
\affiliation{\MIT}

\begin{abstract}
We propose and analyze a sample-efficient protocol to estimate the fidelity between an experimentally prepared state and an ideal target state, applicable to a wide class of analog quantum simulators without advanced spatiotemporal control.
Our protocol relies on universal fluctuations emerging from generic Hamiltonian dynamics, that we discover in the present work. It does not require fine-tuned control over state preparation, quantum evolution, or readout capability, while achieving near optimal sample complexity: a percent-level precision is obtained with $\sim 10^3$ measurements, independent of system size. Furthermore, the accuracy of our fidelity estimation improves exponentially with increasing system size.
We numerically demonstrate our protocol in a variety of quantum simulator platforms including quantum gas microscopes, trapped ions, and Rydberg atom arrays.
We discuss applications of our method for tasks such as multi-parameter estimation of quantum states and processes.
\end{abstract}

\maketitle

\textit{Introduction.---}Recent advances in quantum technology have opened new ways to probe quantum many-body physics, leading to the first observations of novel phases of  matter~\cite{greiner2002quantum,aidelsburger2013realization,jotzu2014experimental,gross2017quantum,ebadi2021quantum}, quantum thermalization~\cite{kaufman2016quantum,tang2018thermalization,chen2021observation}, and non-equilibrium phenomena~\cite{neyenhuis2017observation,zhang2017observation,choi2019probing,rubio2020floquet,peng2021floquet,mi2021time}.
However, in order to advance to the stage where quantum devices produce highly accurate data, it is important to quantify the performance of said devices. One method to do so is quantum device benchmarking~\cite{eisert2020quantum}---verifying that a device accurately produces a state $\rho$ close to the desired state $\ket{\Psi}$, in the presence of imperfections and noise, measured by the fidelity $F = \bra{\Psi}\rho\ket{\Psi}$. A high fidelity certifies that any property of the prepared state is close to that of the target state~\cite{nielsen_chuang_2010}, hence is widely used in theory to quantify the goodness of state preparation. Experimentally measuring the fidelity is important for building, characterizing, and improving increasingly complex and precise systems.

Several methods to benchmark quantum devices have been proposed. A na\"ive approach is to perform quantum state tomography~\cite{nielsen_chuang_2010,cramer2010efficient,gross2010quantum,christandl2012reliable}, in which an experimental state is fully characterized by measurements in many different bases.
This approach, however, is impractical even for relatively small systems as it requires prohibitively many measurements. 
Alternatively, recent proposals pointed out that one can directly estimate the fidelity with a small number of measurements in randomly chosen bases~\cite{flammia2011direct,daSilva2011practical,aaronson2019shadow,huang2020predicting,elben2019statistical,elben2020cross,liu2021benchmarking,elben2022randomized}.
These methods rely on implementing highly engineered quantum gates that satisfy certain statistical properties and are not readily applicable to quantum devices with limited controllability. Other existing benchmarking protocols require sophisticated controls and are challenging to implement~\cite{ohliger2013efficient,brydges2019probing,boixo2018characterizing,arute2019quantum,li2020hamiltonian,kokail2021entanglement,gluza2021recovering,hu2021classical,choi2021emergent}. In particular, we emphasize that analog quantum simulators are typically designed to realize specific forms of many-body Hamiltonians and lack the ability to implement arbitrary unitary operations. Thus, it remains an outstanding challenge to develop a general benchmarking method with minimal requirements on hardware capability.

\begin{figure}
    \centering
    \includegraphics[width=\columnwidth]{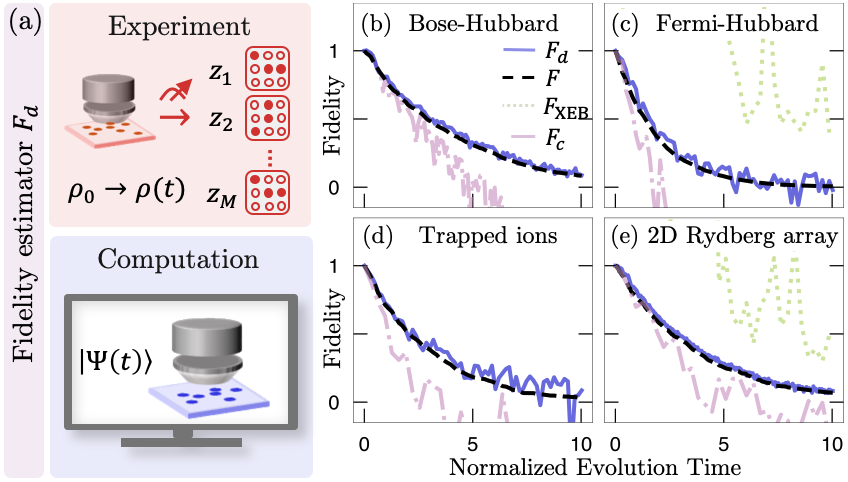}
    \vspace{-15pt}
    \caption{
    (a) Schematic of our benchmarking protocol. The fidelity estimator $F_d$ is evaluated from experimental snapshots $\{z_1,\dots, z_M\}$ of a state $\rho(t)$ obtained after quench dynamics, compared against classical computation of the ideal state $\ket{\Psi(t)}$ in the absence of error (see Table~\ref{tab:protocol}). 
    (b-e) Numerical demonstrations. $F_d$ closely tracks the fidelity decay over evolution time normalized in units of Rabi frequency or tunnelling strength (black dashed) between noisy and ideal quench dynamics in a wide class of analog simulators, including 1D Bose-Hubbard, integrable 1D Fermi-Hubbard, 1D trapped-ion, and 2D Rydberg array models at finite effective temperature, see SM~\cite{SM} for details. Previously proposed benchmarks $F_\textrm{XEB}$~\cite{boixo2018characterizing} [green dotted, out of scale in (b,d)] or $F_c$~\cite{choi2021emergent} (purple dot-dashed) fail to estimate $F$ for these systems. 
}
\label{fig:Fig1}
\end{figure}

In this work, we propose and analyze a benchmarking protocol that requires minimal experimental control: one prepares an initial state, time-evolves it under a natural Hamiltonian of the system, and performs measurements in a fixed basis (Fig.~\ref{fig:Fig1}).
We show that, with appropriate data-processing (enabled by classical computation), this simple experiment gives an estimate for the fidelity $F$---encapsulating the combined effects of errors in state preparation, quench evolution, and readout~\footnote{More accurately, the fidelity is a quantity defined between two quantum states and does not include readout errors. Our benchmark estimates the state fidelity when readout errors are negligible.}---with a small number of measurements. Most importantly, our method works for generic quench dynamics far from fine-tuned cases, including: at finite effective temperatures, in the presence of symmetries, and in non-qubit based systems such as itinerant particles on optical lattices, making it suitable for a wide class of existing platforms.

The key behind our approach is our discovery of universal statistical fluctuations in the measurement outcome distributions $p(z)$ that arise from generic quantum dynamics (Fig.~\ref{fig:FigPT}).
Previously, such universal fluctuations in $p(z)$ were only known to occur in ideal, controlled dynamics such as random unitary circuits (RUCs), where $\{p(z)\}$ approximately follows the Porter-Thomas distribution~\cite{porter1956fluctuations,arute2019quantum}. Leveraging our discovery and classical computation, we design a novel statistic: a real number $f(z)$ associated to every measurement outcome $z$ such that its average over experimentally obtained samples, $\hat{F}_d \equiv \langle f(z) \rangle_\textrm{exp}$, converges quickly to the many-body fidelity $F$. In other words, $\hat{F}_d$ is a computationally-assisted, efficiently measurable observable~\cite{garratt2022measurements,lee2022decoding} that estimates the fidelity.

The ability to estimate fidelity serves as a foundation for two tasks: (i) \emph{target state benchmarking}, where the overlap between an experimentally prepared state and a pure target state is measured via a high-fidelity quench time evolution, and (ii) \emph{quantum process benchmarking}, in which the fidelity decay of quench dynamics is monitored over the course of evolution.

\textit{Protocol.---}We focus on describing and numerically demonstrating our protocol, before returning to why it works. Our benchmarking method consists of three steps: experiment, computation, and data processing [Table~\ref{tab:protocol} and Fig.~\ref{fig:Fig1}(a)]. The initial state of our protocol can either be an easy-to-prepare state or a more complex state that one wishes to benchmark. After quench evolution for a fixed time $t$, the experimental state $\rho(t)$ is measured in any fixed basis $\{\ket{z}\}$. Convenient choices of $\{\ket{z}\}$ include the set of bitstrings in two-level (qubit) systems or real-space particle number configurations in quantum gas microscopes. Repeating the state preparation and measurement $M$ times, one obtains measured configurations $\{z_1,$\,$\dots$\,$, z_M\}$, with each $z_i$ sampled from the distribution $q(z,t)$\,$\equiv$\,$\bra{z}\rho(t)\ket{z}$.
Our protocol estimates the fidelity by using a small number of samples to compare the empirical distribution $q(z,t)$ against a theoretical, target distribution $p(z,t)$. By classical computation, we obtain $p(z,t)$ and its infinite-time average $\pdiagz \equiv \lim_{T \rightarrow \infty} \frac{1}{T}\int_0^T p(z,t) dt$. In practice, one may average over a finite duration $T$ as an approximation, at the expense of slightly larger statistical errors. 
Then, we evaluate the rescaled outcome probabilities $\tilde{p}(z,t) \equiv p(z,t)/\pdiagz$ and the normalization factor $\mathcal{Z}(t) \equiv \sum_z \pdiagz \tilde{p}(z,t)^2$.

The classical computation determines our statistic $f(z)$, while the experimental samples determine which outcomes $z$'s to use when evaluating the statistic. More specifically, we estimate the fidelity with the empirical average 
\begin{equation}
\hat{F}_d(t) = \langle f(z) \rangle_\textrm{exp}= \frac{1}{M} \sum_{i=1}^{M} 2\tilde{p}(z_i,t)/\mathcal{Z}(t) - 1~.
\end{equation}
This explicitly defines the statistic $f(z)$, that also depends on $\ket{\Psi_0}$, $H$, and $t$. In the limit $M\!\rightarrow\!\infty$, this converges to our benchmark $F_d(t) = 2\Big[\sum_z q(z,t) \tilde{p}(z,t)\Big]/\mathcal{Z}(t) - 1$~.
This benchmark can be understood as a weighted covariance between the empirical and ideal distributions. We show that $F_d(t)$ approximates the fidelity $F$ for a wide class of quantum systems, both for uncorrelated, infrequent incoherent errors, or local and weak global coherent errors~\cite{gao2021limitations,dalzell2021random,Noh2020efficientclassical,SM}, and rigorously prove our statement for isolated single errors and long evolution times.

\renewcommand{\arraystretch}{1.3}
\begin{table}[t!]
    \centering
    \vspace{-0.1in}
    \caption{Proposed benchmarking protocol.}
    \vspace{0.2in}
        \label{tab:protocol}
    \begin{tabular}{p{0.175in} p{3in}}
\hline
\multicolumn{2}{l}{\textbf{Experiment:}}\\
&1. Prepare an initial state $\rho_0$, which approximates a pure state $\ketbra{\Psi_0}{\Psi_0}$.\\
&2. Evolve the system under its natural Hamiltonian $H$ for a time $t$.\\
&3. Measure the evolved state $\rho(t)$ in a natural basis, obtaining configurations $\{z_j\}_{j=1}^M$.\\
\multicolumn{2}{l}{\textbf{Computation:} Classically compute}\\
&1. $p(z,t) \equiv \abs{\bra{z}\ket{\Psi(t)}}^2 = \abs{\bra{z}\exp(-iHt)\ket{\Psi_0}}^2$,\\
&2. $\pdiagz \equiv \lim_{T \rightarrow \infty} \frac{1}{T}\int_0^T p(z,t)~dt$,\\
&~~~~$~\tilde{p}(z,t) \equiv p(z,t)/\pdiagz$.\\
&3. $\mathcal{Z}(t) \equiv  \sum_z \pdiagz \tilde{p}(z,t)^2$.\\
\multicolumn{2}{l}{\textbf{Data processing:} Evaluate}\\
\multicolumn{2}{c}{$\hat{F}_d(t) \equiv  \frac{2}{M} \Big[\sum_{i=1}^M \tilde{p}(z_i,t) \Big]/\mathcal{Z}(t) - 1 \approx F_d(t)$~,}\\
\multicolumn{2}{p{3.175in}}{which approximates the fidelity $F=\bra{\Psi(t)}\rho(t)\ket{\Psi(t)}$.}\\
\hline
    \end{tabular}
\vspace{-0.2in}
\end{table}

Figure~\ref{fig:Fig1}(b-e) numerically demonstrates the use of our estimator for process benchmarking: tracking the decay of fidelity over time in four different quantum simulation platforms.
For each platform, we simulate an initial product state undergoing natural Hamiltonian dynamics in the presence of experimentally relevant errors~\cite{SM}.
We confirm that $F_d$ successfully traces the fidelity decay in regimes where previously proposed fidelity estimators $F_\textrm{XEB}$~\cite{boixo2018characterizing} and $F_c$~\cite{choi2021emergent} do not.
This is because $F_c$ and $F_\textrm{XEB}$ (reviewed in the SM) assume that $p(z,t)$ satisfy statistical properties (discussed below) that are in general not satisfied by natural Hamiltonian dynamics, e.g. $F_c$ requires the system to evolve at infinite effective temperature. We now turn to the underlying principles of our protocol: \textit{emergent universal statistics}, \textit{speckle based benchmarking}, and \textit{measurement-basis independence}.

\begin{figure}[h]
    \centering
    \includegraphics[width=\columnwidth]{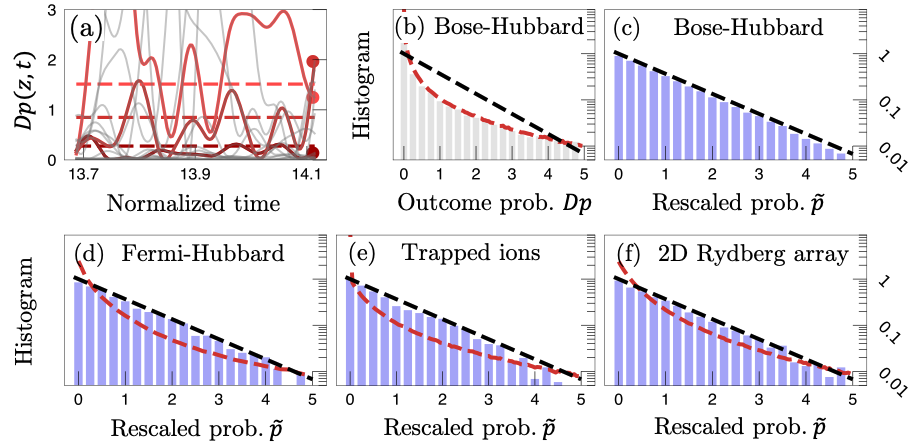}
    \vspace{-15pt}
    \caption{Emergent universal statistics. (a) During many-body Hamiltonian evolution of a pure state, the probability $p(z,t)$ of measuring an outcome $z$ fluctuates around its average value $p_\text{avg}(z)$ (dashed, three distinct $z$'s highlighted).
    Here, we consider a Bose-Hubbard model and present $p(z,t)$ rescaled by the Hilbert space dimension $D$.
    (b) The histogram of $p(z,t)$ over all $z$ at a fixed $t$ [red dots in (a)], is non-universal (red dashed), far from the Porter-Thomas (PT) distribution (black dashed). (c) Rescaling each $p(z,t)$ by $p_\text{avg}(z)$ yields $\tilde{p}(z,t)$, whose histogram follows the PT distribution.
    (d,e,f) The histograms of $\tilde{p}(z,t)$ (blue bars) follow the universal PT distribution in all models considered in this work, whereas those of the bare $p(z,t)$ rescaled by $D$ are non-universal (red dashed).}
\label{fig:FigPT}
\end{figure}

\textit{Emergent universal statistics.---}The statistical properties of $p(z)$ have been extensively studied in deep random unitary circuits (RUCs). When the output state of a typical deep RUC is measured, $p(z)$ is not perfectly uniform, but exhibits a \textit{speckle pattern}: over different $z$'s, $p(z)$ fluctuates about $1/D$ due to random interference in coherent quantum dynamics, with $D$ the Hilbert space dimension. While the details of the fluctuations---which $p(z)$'s are larger---sensitively depends on the particular choice of RUC, the statistical properties of $p(z)$ are universal. Specifically, the fraction of $p(z)$'s in a given interval $p(z)\in [x,x+dx]$ is given by the Porter-Thomas (PT) distribution: $P[p(z) = x]dx = \mu^{-1} \exp(-x/\mu) dx$ with mean $\mu = 1/D$. This enables the existing benchmarks $F_\textrm{XEB}$ and $F_c$. Specifically, they utilize the fact that the PT distribution has a second moment equal to two~\cite{boixo2018characterizing,arute2019quantum,choi2021emergent}. Previously, it was unclear under what conditions the PT distribution can arise, other than from RUCs and fine-tuned Hamiltonian dynamics. 

In fact, for generic time-independent Hamiltonian dynamics, the raw distribution $p(z)$ does not follow the PT distribution [Fig.~\ref{fig:FigPT}(a,b)]. This is due to the presence of energy conservation or symmetries, which causes systematic trends in $p(z)$ and distorts its distribution away from PT. For example, in any state with positive effective temperature, low-energy configurations are measured more frequently than high-energy ones~\cite{SM}. While previous work discovered PT distributions in certain \textit{local observables}~\cite{choi2021emergent}, this work concerns global observables under general conditions such as finite effective temperature.

Our key insight is that the systematic trends in $p(z)$ can be removed simply by rescaling $p(z)$ with its time-averaged value, leaving only random relative fluctuations $\tilde{p}(z) \equiv p(z)/\pdiagz$ that follow the PT distribution with mean $\mu = 1$ [Fig.~\ref{fig:FigPT}(c-f)].

\begin{theorem*} (informal)
Consider an initial state $\ket{\Psi_0}$, evolved for time $t$ under a Hamiltonian $H$ satisfying the \textup{$k$-th no-resonance condition} for a large integer $k$, and measured in a complete basis $\{\ket{z}\}$.
For sufficiently late times $t$, the rescaled probabilities $\tilde{p}(z,t)$ follow the Porter-Thomas distribution, up to a correction bounded by the \textup{inverse effective Hilbert space dimension} $D_\beta^{-1}\equiv\sum_{z,E} \abs{\braket{z}{E}}^4 \abs{\braket{E}{\Psi_0}}^4/\pdiagz$, where $\{\ket{E}\}$ are the eigenstates of $H$.
\end{theorem*}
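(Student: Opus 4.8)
The plan is to prove the theorem by the method of moments. First I would expand the amplitude $\braket{z}{\Psi(t)} = \sum_E \braket{z}{E}\braket{E}{\Psi_0}\,e^{-iEt}$ in the eigenbasis $\{\ket{E}\}$ of $H$, so that the infinite-time average becomes the diagonal-ensemble weight $\pdiagz = \sum_E \abs{\braket{z}{E}}^2\abs{\braket{E}{\Psi_0}}^2$. Since the Porter-Thomas distribution with unit mean is the exponential distribution, whose $k$-th moment is $k!$, it suffices to show that the moments of $\tilde{p}(z,t)$, computed against the natural weight $\pdiagz$ (the same weight appearing in $\mathcal{Z}(t)$), converge to $k!$ with an error controlled by $D_\beta^{-1}$. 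Concretely, I would study
\begin{equation}
\sum_z \pdiagz\,\tilde{p}(z,t)^k = \sum_z \frac{p(z,t)^k}{\pdiagz^{\,k-1}},
\end{equation}
whose time average is exactly the quantity the no-resonance condition is designed to control.

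The central step is to expand $p(z,t)^k$ as a sum over $2k$ energy indices $E_1,\dots,E_k,E_1',\dots,E_k'$ carrying the phase $\exp[-i(\sum_j E_j-\sum_j E_j')t]$, and then take the infinite-time average, which retains only terms with $\sum_j E_j=\sum_j E_j'$. Here the $k$-th no-resonance condition enters decisively: it forces $\{E_j'\}$ to be a permutation of $\{E_j\}$, so the surviving contributions are indexed by the $k!$ elements of the symmetric group. For each permutation, summing over \emph{distinct} indices yields a factor $\pdiagz^k$ to leading order; dividing by $\pdiagz^{\,k-1}$ and using $\sum_z \pdiagz=1$ produces exactly $k!$, the Porter-Thomas moment. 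I would present the $k=2$ case explicitly to fix the structure: the two pairings each give $\pdiagz^2$, the coincidence $E_1=E_2=E_3=E_4$ is double-counted and must be subtracted once, and the resulting correction is precisely $\sum_{z,E}\abs{\braket{z}{E}}^4\abs{\braket{E}{\Psi_0}}^4/\pdiagz = D_\beta^{-1}$, so the second moment equals $2-D_\beta^{-1}$.

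The corrections to the general moment arise from the same mechanism: replacing the sum over distinct energy indices by the unrestricted sum requires inclusion--exclusion, and every coincidence of two indices pays a factor that, after summing over $z$, is bounded by $D_\beta^{-1}$. I would organize these terms by the partition of the index set into coinciding blocks, show that each nontrivial partition contributes at most a $k$-dependent combinatorial factor times $D_\beta^{-1}$, and conclude that the time-averaged $k$-th moment is $k!\,(1+O_k(D_\beta^{-1}))$. Finally, because the exponential distribution is uniquely determined by its moments (the values $k!$ satisfy the Carleman criterion), matching the first $k$ moments---with $k$ as large as the no-resonance order permits---up to a vanishing $D_\beta^{-1}$ correction yields convergence of the $\pdiagz$-weighted distribution of $\tilde{p}(z,t)$ to Porter-Thomas by the moment method.

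The hard part will be twofold. First, controlling the correction terms uniformly enough in $k$: the number of partitions grows rapidly, so I must verify that the combinatorial prefactors do not overwhelm the $D_\beta^{-1}$ suppression before the moment-problem argument can be invoked, which should require $D_\beta$ to be exponentially larger than the moment order $k$. Second, the computation above delivers the \emph{time-averaged} moments, whereas the statement is for a fixed late time $t$; I would close this gap by bounding the temporal variance of each moment---using the no-resonance condition one order higher to suppress the cross terms---and arguing via Chebyshev that for almost all sufficiently late $t$ the instantaneous moments lie within $O(D_\beta^{-1})$ of their time averages.
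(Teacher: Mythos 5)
Your proposal follows essentially the same route as the paper's proof: expanding $p(z,t)^k$ over $2k$ energy indices, using the $k$-th no-resonance condition to reduce the infinite-time average to $k!$ permutation terms plus collision (coinciding-energy) corrections bounded by $D_\beta^{-1}$ is precisely the paper's $k$-th Hamiltonian twirling identity and its Theorem~1, and your variance-plus-Chebyshev step to pass from time-averaged to fixed-time weighted moments is exactly the paper's Theorem~2. Two small quantitative notes: the higher condition needed for the variance bound is the $2k$-th no-resonance condition (not merely one order higher), and Chebyshev with temporal variance $O(D_\beta^{-1})$ gives fixed-time deviations of size $O(D_\beta^{-1/2})$ — which is what the paper's formal statement asserts — rather than the $O(D_\beta^{-1})$ you claim in your final step.
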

Rigorous statements of our Theorem and their proofs are presented in the SM~\cite{SM}. The only assumption of this Theorem is the $k$-th no-resonance condition, stating that $\sum_{i=1}^k E_{\alpha_i}$\,$=$\,$\sum_{i=1}^k E_{\beta_i}$ if and only if the $k$ indices $(\alpha_i)$ are a permutation of $(\beta_i)$~\cite{goldstein2006distribution,reimann2008foundation,linden2009quantum,kaneko2020characterizing,huang2021extensive}. That is, the eigenvalues $\{E_j\}$ of $H$ possess no resonant structures. This condition is expected to hold for generic ergodic Hamiltonians~\cite{goldstein2006distribution,reimann2008foundation}, and we find that it even holds in some integrable systems such as the 1D Fermi-Hubbard model [Fig.~\ref{fig:Fig1}(c)]~\cite{SM,essler2005one}.
The effective dimension $D_\beta$ quantifies the size of the Hilbert space explored during quench evolution (that can be probed in the $\{\ket{z}\}$ basis). {$D_\beta$ is similar to a participation ratio of $\ket{\Psi_0}$ and $\ket{z}$, when they are decomposed in the energy eigenbasis, which generically} increases exponentially with increasing system size, leading to a better agreement with the PT distribution and enabling our protocol to be increasingly accurate.

Our Theorem states that the outcome distribution factorizes into two parts $p(z,t) =p_\text{avg}(z) \times \tilde{p}(z,t)$: systematic values $p_\text{avg}(z)$ and random Porter-Thomas fluctuations $\tilde{p}(z,t)$. The systematic value is related to thermalization and does not distinguish between pure and mixed states, usually set by coarse-grained information such as the total energy {$\bra{\Psi_0}H\ket{\Psi_0}$}. Meanwhile, the fluctuations originate from random interference and average away in a mixed state~\cite{SM}.
These fluctuations are highly sensitive to details of the initial state and evolution, serving as a ``fingerprint" that enables their benchmarking. 

\textit{Speckle-based benchmarking.---}We provide an intuitive explanation of our benchmark $F_d$, based on two properties:
(i) the second moment of the rescaled $\tilde{p}(z)$ is $\mathcal{Z} = \sum_z p_\text{avg}(z) \tilde{p}(z)^2 \approx 2$ for $p(z)$ arising from ideal unitary evolution, (ii) the experimental distribution $q(z)$ in the presence of errors can be expressed as a linear combination $q(z)=F p(z) + (1-F) p_\perp (z)$, where $p_\perp (z)$ is uncorrelated with the ideal distribution $p(z)$ in the following sense:
$\mathbb{E}_z[ \tilde{p}(z)\tilde{p}_\perp(z)] \approx  \mathbb{E}_z[\tilde{p}(z)]\mathbb{E}_z[ \tilde{p}_\perp(z)] = 1$, with $\mathbb{E}_z [\cdot] \equiv \sum_z \pdiagz[\cdot]$ and $\tilde{p}_\perp \equiv p_\perp /p_\text{avg}$.
The second property relies on an assumption that the speckle pattern in $\tilde{p}$ significantly changes under any error, which has been rigorously proven for RUCs~\cite{dalzell2021random,gao2021limitations}.

Using these properties, our estimator $F_d$ is designed to isolate the desired ``fingerprint," taking value 1 when $q(z) = p(z)$ and 0 when $q(z) = p_\perp(z)$, which in turn implies $F_d\approx F$~\cite{SM}. We emphasize that it is essential to use the rescaled $\tilde{p}(z)$ to estimate the fidelity; otherwise $p(z)$ and $p_\perp(z)$ exhibit large correlations due to their shared systematic values $\pdiagz$.

In fact, the second condition can be relaxed. Under local coherent or incoherent errors, the relation $F_d \approx F$ can be shown at late times without any assumption on $q(z)$, solely based on the Eigenstate Thermalization Hypothesis and no-resonance conditions~\cite{SM}. We also argue that this result extends to multiple stochastic errors, and small coherent errors in the quench Hamiltonian, as verified by various numerical simulations~\cite{SM}. Furthermore, we also verify that $F_d \approx F$ holds even for relatively short quenches, well before the PT distribution emerges in $\tilde{p}(z)$, owing to the time-dependent adjustment factor $\mathcal{Z}(t)$ in $F_d$. This factor is inspired by $F_c$~\cite{choi2021emergent} and its effect is illustrated in~Fig.~\ref{fig:Fig1}(b-e) and the SM.

\textit{Measurement-basis independence.}---A surprising feature of our approach is that the fidelity is estimated from measurements in a fixed basis, despite the fact that the fidelity also depends on phase information not accessed from such measurements.
Nevertheless, our protocol works because quench evolution transforms the effects of physically relevant errors, including phase errors, into a form detectable by generic local measurements, after a short delay time [Fig.~\ref{fig:Fig3}(a)].

In our examples above, the quench dynamics plays two roles simultaneously: it enables our protocol, but also generates imperfect quantum evolution (due to errors), whose fidelity decay is measured.  
If the quench evolution were perfect, the measured fidelity would only reflect the state preparation error of the (potentially interesting) target initial state.
We present numerical demonstrations of such target state benchmarking in the SM~\cite{SM}.

\begin{figure}
    \centering
    \includegraphics[width=\columnwidth]{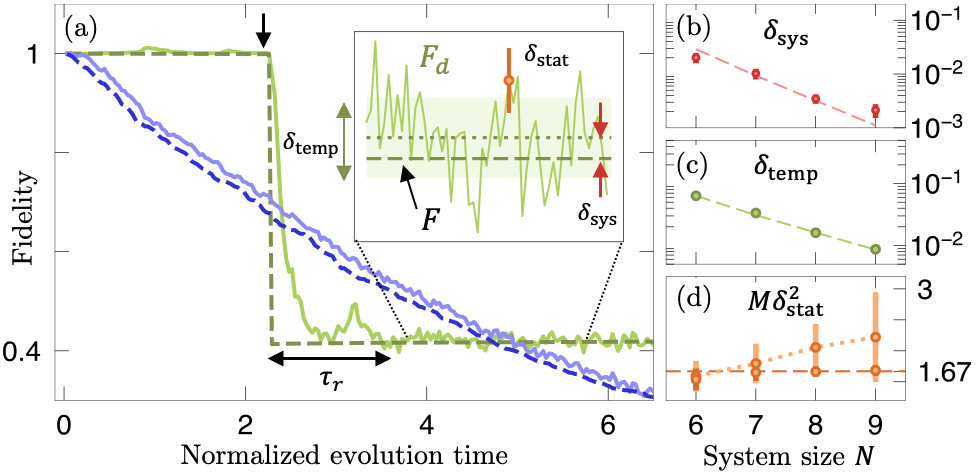}
    \vspace{-15pt}
    \caption{
    Performance analysis. (a)
    Numerical simulation of open system dynamics of a 1D Bose-Hubbard model with $N = 9$ particles on $N$ sites ($D= 24310$).  
    When a single error occurs (black arrow), $F_d$ (solid line, green) closely approximates the fidelity $F$ (dashed line, green) after a short delay time $\tau_r$.
    Averaging $F_d$ and $F$ over (potentially many) stochastic errors at different times gives their values for the mixed state $\rho$ (blue). The inset shows uncertainties and errors in our method. In particular, $F_d$ slightly deviates from $F$, quantified by the difference $\delta_\textrm{sys}$ between $F$ and the time-averaged $F_d$ (red arrow) and by the fluctuations $\delta_\textrm{temp}$ of $F_d$ over time (green arrow). Furthermore, a finite number of samples $M$ results in a statistical uncertainty $\delta_\textrm{stat}$ in the unbiased estimator $\hat{F}_d$ (orange error bar). (b,c) Both $\delta_\textrm{sys}$ and $\delta_\textrm{temp}$ decrease exponentially with system size, quantitatively agreeing with analytic predictions (dashed lines).
    (d) The sample complexity $M \delta^2_\textrm{stat}$ increases weakly with $N$ at fixed, early times (dotted line). At late times, it approaches the $N$-independent value $1+2F-F^2$ (dashed line).
    Error bars in (b-d) indicate variations over an ensemble of disordered Hamiltonians. See SM~\cite{SM} for details.}
    \label{fig:Fig3}
\end{figure}

\textit{Performance analysis.---}Our Theorem enables us to predict how well $F_d$ estimates the fidelity. First, we point out that it suffices to study the effect of a single error when the error rate is sufficiently small. This is because incoherent noisy dynamics can be ``unravelled" into an ensemble of stochastic pure state trajectories~\cite{daley2014quantum} each corresponding to a fixed occurrence of errors [Fig.~\ref{fig:Fig3}(a)]. As long as they are sufficiently infrequent, the effect of multiple errors can be understood from that of a single error~\cite{gao2021limitations,dalzell2021random}.

We showcase the performance of $F_d$ under realistic conditions by numerically simulating the 1D Bose-Hubbard model. See Figure~\ref{fig:Fig3}(a). We quantify the performance of $F_d$ along several axes~[Fig.~\ref{fig:Fig3}(a), inset]: the systematic error $\delta_\textrm{sys}$ refers to the difference between the true fidelity $F$ and the time-averaged $F_d$,
while the $\delta_\textrm{temp}$ quantifies how $F_d$ fluctuates over time. 
Finally, the statistical fluctuation (or, \textit{sample complexity}) $\delta_\textrm{stat}$ measures the uncertainty of the estimated $\hat{F}_d$ associated with a finite number of samples $M$, and hence the number of samples required to determine $F_d$ up to a desired precision. See SM~\cite{SM} for further details. 

Our Theorem allows analytical estimation of these quantities in the limit of long evolution: $\delta_\textrm{sys}$ and $\delta_\textrm{temp}$ are respectively $O(D_\beta^{-1})$ and $O(D_\beta^{-1/2})$~\cite{SM}. Hence, both the accuracy and precision of our benchmark improve exponentially with increasing system size, explicitly confirmed in our numerical simulations~[Fig.~\ref{fig:Fig3}(b,c)]. Meanwhile, the sample complexity has optimal scaling. It is system size independent for long evolutions: $M\delta^2_\textrm{stat}\approx 1+2F-F^2$~[Fig.~\ref{fig:Fig3}(d), dashed line]. For short quench evolution, the sample complexity grows weakly with system size [Fig.~\ref{fig:Fig3}(d), dotted line]. Finally, in the presence of incoherent errors, the finite response time $\tau_r$ leads to a slight delay between $F_d$ and the continuously-decaying fidelity~\cite{gao2021limitations,mi2021information,SM}.

\begin{figure}[tpb]
    \centering
    \includegraphics[width=\columnwidth]{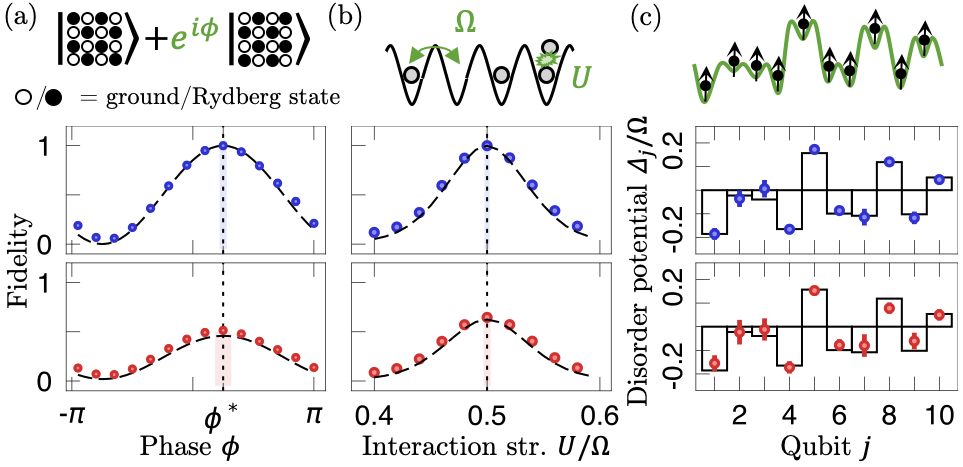}
    \vspace{-15pt}
    \caption{
    State and process benchmarking with $F_d$.
    We numerically simulate the estimation of: (a) the phase $\phi$ of a GHZ-like initial state in a 2D Rydberg system; (b) the ratio of the interaction and tunneling strengths $U/\Omega$ in a 1D Bose-Hubbard model; and (c) ten disordered on-site potentials in a trapped ion model.
    Parameters are estimated by maximizing $\hat{F}_d$ over simulated parameter values, using measurements after error-free (blue) or noisy (red) quench evolution.
    The error bars and shaded regions indicate the statistical uncertainties in $\hat{F}_d$ and the parameter values with 1000 samples. See SM~\cite{SM} for details.
    (a,b) Both $F$ (black lines) and $\hat{F}_d$ (markers) are consistent and simultaneously maximized at the true parameter value (dotted line).
    (c) Reconstructed disorder potential values (markers) are consistent with their true values (lines).
    }
    \label{fig:Fig4}
\end{figure}

\textit{Limitations.---}While our protocol is applicable to generic quantum many-body systems, it may fail in special cases in which our Theorem is not applicable.
Examples include systems with weakly- or non-ergodic dynamics, or the presence of correlated nonlocal errors. We provide detailed analysis and potential resolutions of known failure cases in the SM~\cite{SM}.

\textit{Applications.---}
The ability to measure the fidelity enables further applications. As examples, we show that one can simultaneously estimate multiple parameters of quantum states or Hamiltonians. The key observation is that, given the ability to measure the fidelity between a theoretical model and experiment, one can vary classical simulation model parameters to maximize the estimated fidelity~\cite{arute2019quantum,choi2021emergent}.
This optimization requires classical computation but no further data acquisition.
We numerically demonstrate this idea in three different examples and verify that the extracted parameter values are close to the actual ones, even in the presence of noise. See Figure~\ref{fig:Fig4}.

{\it Acknowledgments.---} 
We would like to thank Wen~Wei Ho, Anant Kale, Eun~Jong Kim, Sherry Zhang, and Peter Zoller for useful discussions. 
In particular, we thank Sherry Zhang for a careful reading of this manuscript and Anant Kale for insightful discussion at the early stage of the project. We acknowledge funding provided by the NSF Physics Frontiers Centers IQIM (NSF PHY-1733907) and CUA (NSF PHY-1734011), the AFOSR YIP (FA9550-19-1-0044), the DARPA ONISQ program (W911NF2010021), the Army Research Office MURI program (W911NF2010136), the NSF QLCI program (2016245), the DOE (DE-SC0021951), NSF CAREER award 1753386 and NSF CAREER award 2237244. Support is also acknowledged from the U.S. Department of Energy, Office of Science, National Quantum Information Science Research Centers, Quantum Systems Accelerator. JC acknowledges support from the IQIM postdoctoral fellowship. ALS acknowledges support from the Eddleman Quantum graduate fellowship.

\let\oldaddcontentsline\addcontentsline
\renewcommand{\addcontentsline}[3]{}
\let\addcontentsline\oldaddcontentsline
\renewcommand{\theequation}{S\arabic{equation}}
\renewcommand{\thefigure}{S\arabic{figure}}
\setcounter{equation}{0}
\setcounter{figure}{0}

\newpage
\begin{widetext}
\appendix
\clearpage
\maketitle
\vspace{-20pt}
{
    \center \bf \large 
    Supplemental Material for: \\
Benchmarking quantum simulators using ergodic quantum dynamics\vspace*{0.1cm}\\ 
    \vspace*{0.0cm}
}
\tableofcontents
{\section{Previous fidelity benchmarks}
The cross-entropy benchmark was introduced in Ref.~\cite{boixo2018characterizing} and was prominently used in Ref.~\cite{arute2019quantum} to demonstrate quantum computational advantage. It is defined as
\begin{equation}
    F_\text{XEB} \equiv D \sum_{z} p(z) q(z) - 1~,
\end{equation}
where as in the main text, $p(z)$ and $q(z)$ are the probabilities of obtaining the outcome $z$ from the ideal simulated state and experimental states respectively, and $D$ is the Hilbert space dimension, here $D=2^N$ where $N$ is the number of qubits. $F_\text{XEB}$ has been argued to estimate the fidelity in deep random unitary circuits (RUCs)~\cite{boixo2018characterizing,arute2019quantum,dalzell2021random,gao2021limitations}.

The estimator $F_c$ was introduced in Ref.~\cite{choi2021emergent}, defined as
\begin{equation}
    F_c \equiv 2 \frac{\sum_z p(z) q(z)}{\sum_z p(z)^2}-1~.
\end{equation}
$F_c$ is a slight modification of $F_\text{XEB}$ to reliably estimate the fidelity in shallow circuits. This is because of the onset of ``local scrambling" even in shallow circuits, discussed in Ref.~\cite{cotler2021emergent}. In Ref.~\cite{choi2021emergent}, $F_c$ was experimentally demonstrated to estimate the fidelity in a Rydberg atom array quantum simulator, at infinite effective temperature and short quench evolution times. $F_d$ can be taken as a further improvement of $F_c$, that works for arbitrary quench initial states and Hamiltonians.}

\section{Emergence of universal statistical properties from Hamiltonian dynamics}
\label{app:Hamil_k_design}
In this section, we present precise statements of our Theorem in the main text and prove them.
The significance of our Theorems~\ref{thm:temporal_PT} and~\ref{thm:bitstring_PT} is that a Porter-Thomas distribution emerges from measurement outcomes $p(z,t) = \abs{\braket{z}{\Psi(t)}}^2$ after generic Hamiltonian dynamics, upon appropriate data processing.
To prove our theorems, we first establish a new identity, the \textit{$k$-th Hamiltonian twirling identity}, which only requires one assumption: the eigenvalues of the Hamiltonian $H$ satisfy the \textit{$k$-th no-resonance condition}. We utilize this identity to establish theorems on the emergence of a Porter-Thomas distribution in the rescaled measurement outcomes $\tilde{p}(z,t)$. 
\subsection{$k$-th no-resonance condition}
\noindent \textbf{Definition ($k$-th no-resonance):} Let $E_j$ be the energy eigenvalues of a Hamiltonian $H$.
$H$ satisfies the $k$-th \textit{no-resonance condition}~\cite{reimann2008foundation,linden2009quantum,kaneko2020characterizing,huang2021extensive} if
\setcounter{equation}{2}
\begin{equation}
    E_{\alpha_1} + E_{\alpha_2} + \cdots + E_{\alpha_k} = E_{\beta_1} + E_{\beta_2} + \cdots + E_{\beta_k}~ \label{eq:k_no_resonance}
\end{equation}
implies that the indices $(\alpha_1, \alpha_2, \dots, \alpha_k)$ and  $(\beta_1, \beta_2, \dots, \beta_k)$ are equal up to reordering.
In other words, $\beta_j = \alpha_{\sigma(j)}$ for some permutation $\sigma \in S_k$, where $S_k$ is the group of permutations of $k$ elements (the symmetric group). It is easy to see that if the $k$-th no-resonance condition is satisfied, then so are the $k'$-th no-resonance conditions, for $k'<k$.
This assumption is considered generically satisfied: it is believed that if the condition does not hold in a given ergodic system, any small perturbation will generically break any $k$-th resonances~\cite{kaneko2020characterizing}.

\subsection{$k$-th no resonance with degeneracies}
The $k$-th no-resonance conditions are, strictly speaking, violated if the Hamiltonian contains degeneracies: $E_i = E_j, i \neq j$. This violates the first no-resonance condition and so all higher $k$-th no-resonance conditions. This seems to preclude systems with symmetry-protected degeneracies, which can still be ergodic. For our purposes, however, we will consider \textit{pure states} $\rho = \ketbra{\Psi_0}{\Psi_0}$. In this case, one can replace any set of degenerate states $\mathcal{E} = \{\ket{\Psi_E}\}$ with a \emph{single} representative state, defined as the projection of $\ket{\Psi_0}$ onto the degenerate subspace: $\mathcal{P}_\mathcal{E}\ket{\Psi_0}$. 
In this way, all degeneracies are effectively eliminated.
Therefore, while Lemma~\ref{lemma:Hamiltonian_k_design} (below) requires the full $k$-th no-resonance condition, one can modify this condition by demanding the $k$-th no-resonance condition only after eliminating all degenerates as prescribed above. 
Our main Theorems only require this weakened version of the $k$-th no-resonance condition.
Hence, our Theorems holds generically even in the presence of degeneracies protected by symmetries.

The $k$-th no-resonance condition is also notably violated for $k\geq 2$ if there is a \textit{particle-hole symmetry}, in which the energy spectrum is symmetric about $E=0$. This case is discussed in Section~\ref{app:PHS}.

{We finally remark that there are other, indirect ways in which degeneracies can affect the performance of our benchmark. For example, if there is a large degeneracy of eigenstates, the quench dynamics will be less ergodic and our protocol will have larger uncertainties. However, this is entirely captured by our effective dimension parameter $D_\beta$, and no further modification is required.}

\subsection{$k$-th Hamiltonian twirling identity}
We first present the $k$-th Hamiltonian twirling identity. This identity is a rewriting of a $k$-fold quantum channel: we take a state $\rho^{(k)}$ in $k$-copies of the Hilbert space $\mathcal{H}^{\otimes k}$, and average it over all possible times of time-evolution $U_t^{\otimes k}$ over $k$ copies, with $U_t \equiv \exp(-iHt)$. That is, the channel $C^{(k)}$ is $C^{(k)}[\rho^{(k)}] = \Ebb_t\L[U_t^{\otimes k} \rho^{(k)} U_{-t}^{\otimes k}\R]$ where $\Ebb_t[f(t)] \equiv \lim_{T\rightarrow\infty} \frac{1}{T} \int_0^T f(t) dt$ here and below.
We dub this the $k$-th Hamiltonian twirling channel, in analogy to twirling channels, in which a state is averaged over all possible unitaries in a group, most commonly the ensemble of Haar unitaries or the Clifford or Pauli groups~\cite{dankert2009exact}.

We find that the $k$-th Hamiltonian twirling channel can be written as sum of (i) a dephasing operator in the energy eigenbasis, followed by a sum of permutations of $k$ elements and (ii) additional terms associated with the double counting of certain permutations. We associate these terms with additional correlations associated with energy conservation, hence we dub these ``energy-conservation terms".

Such a rewriting is motivated by \textit{Haar-unitary twirling channels}, which can be written as a sum of permutations~\cite{gross2007evenly}. In our Theorem below, we consider a specific quantity: $\Ebb_t[p^k(z,t)] = \bra{z}^{\otimes k} C^{(k)}[\ketbra{\Psi_0}{\Psi_0}^{\otimes k}] \ket{z}^{\otimes k}$. In this case, the terms in (i) are the dominant contributions and we are able to bound the magnitude of the energy-conservation terms in (ii).

We explicitly state our identity for $k=2$.
\begin{lemma*}[Second Hamiltonian twirling identity] If a Hamiltonian $H$ satisfies the second no-resonance condition, the following identity holds:
\begin{align}
    &C^{(2)}\big[\rho^{(2)} \big] \equiv \Ebb_t \L[ U_t^{\otimes 2} \rho^{(2)} U_{-t}^{\otimes 2}\R] \label{eq:Ham_2_design}\\
    &= \sum_{E,E'} \bigg[ \ketbra{E,E'}{E,E'} \rho^{(2)} \ketbra{E,E'}{E,E'} + \ketbra{E,E'}{E,E'} \rho^{(2)} \ketbra{E',E}{E',E} \bigg] - \sum_E \ketbra{E,E}{E,E} \rho^{(2)} \ketbra{E,E}{E,E} \nonumber\\
    &=  \mathcal{D}^{\otimes 2}\big[\rho^{(2)}\big]  \mathbb{I} + \mathcal{D}^{\otimes 2}\big[\rho^{(2)} \mathcal{S}\big] \mathcal{S} - \sum_E \ketbra{E,E}{E,E} \rho^{(2)} \ketbra{E,E}{E,E}~, \nonumber
\end{align}
where $U_t \equiv \exp(-iHt)$, $\mathbb{E}_t[f(t)] = \lim_{T\rightarrow \infty} \frac{1}{T} \int_0^T f(t) dt$ is the average of $f(t)$ over infinite time. $\rho^{(2)}$ is any state on two copies of the Hilbert space, and $\ket{E}$ are energy eigenstates of the Hamiltonian. $\mathbb{I}$ and $\mathcal{S}$ are the identity and swap operators on $\mathcal{H}^{\otimes 2}$, defined as $\mathbb{I}(\ket{\psi_1} \otimes \ket{\psi_2}) = \ket{\psi_1} \otimes \ket{\psi_2}$ and $\mathcal{S}(\ket{\psi_1} \otimes \ket{\psi_2}) = \ket{\psi_2} \otimes \ket{\psi_1}$ for any $\ket{\psi_1},\ket{\psi_2}$. Lastly, $\mathcal{D}^{\otimes 2}$ is the dephasing channel in the energy eigenbasis: $\mathcal{D}^{\otimes 2 }\big[\rho^{(2)}\big] \equiv \sum_{E,E'} \ketbra{E,E'}{E,E'}\rho^{(2)}\ketbra{E,E'}{E,E'}$. Finally, we remark that the state $\rho^{(2)}$ on $\mathcal{H}^{\otimes 2}$ can be generalized to any operator $A$ acting on $\mathcal{H}^{\otimes 2}$.
\end{lemma*}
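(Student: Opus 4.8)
The plan is to diagonalize the action of the twirling channel by passing to the product energy eigenbasis of $\mathcal{H}^{\otimes 2}$, where $U_t^{\otimes 2}$ acts by pure phases. Expanding $\rho^{(2)} = \sum_{E_1,E_2,E_3,E_4} \rho_{(E_1,E_2),(E_3,E_4)}\,\ketbra{E_1,E_2}{E_3,E_4}$ and using $U_t\ket{E}=e^{-iEt}\ket{E}$ (so that $\bra{E}U_{-t}=e^{iEt}\bra{E}$), each matrix element picks up a single frequency:
\begin{equation}
U_t^{\otimes 2}\rho^{(2)}U_{-t}^{\otimes 2} = \sum_{E_1,E_2,E_3,E_4} e^{-i(E_1+E_2-E_3-E_4)t}\,\rho_{(E_1,E_2),(E_3,E_4)}\,\ketbra{E_1,E_2}{E_3,E_4}~.
\end{equation}

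First I would take the infinite-time average term by term. Because $\mathcal{H}$ is finite-dimensional there are only finitely many frequencies $\omega = E_1+E_2-E_3-E_4$, so the sum and the limit may be interchanged, and $\mathbb{E}_t[e^{-i\omega t}] = \lim_{T\to\infty}\frac{1}{T}\int_0^T e^{-i\omega t}\,dt$ equals $1$ when $\omega=0$ and $0$ otherwise. The average therefore projects onto the \emph{resonant} index tuples satisfying $E_1+E_2=E_3+E_4$, leaving $C^{(2)}[\rho^{(2)}] = \sum_{E_1+E_2=E_3+E_4}\rho_{(E_1,E_2),(E_3,E_4)}\ketbra{E_1,E_2}{E_3,E_4}$.

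Next I would invoke the second no-resonance condition, which forces every resonant tuple to satisfy either $(E_3,E_4)=(E_1,E_2)$ or $(E_3,E_4)=(E_2,E_1)$. Summing these two families reproduces exactly the two sums $\sum_{E,E'}\ketbra{E,E'}{E,E'}\rho^{(2)}\ketbra{E,E'}{E,E'}$ and $\sum_{E,E'}\ketbra{E,E'}{E,E'}\rho^{(2)}\ketbra{E',E}{E',E}$ of the claimed first form. The only delicate point is inclusion--exclusion along the diagonal $E_1=E_2$: there the two families coincide, and since the second condition entails the first (hence nondegeneracy), $E_3+E_4=2E$ admits only $E_3=E_4=E$, so no extra solutions intrude. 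Each diagonal term $\ketbra{E,E}{E,E}\rho^{(2)}\ketbra{E,E}{E,E}$ is thereby counted twice, and subtracting $\sum_E \ketbra{E,E}{E,E}\rho^{(2)}\ketbra{E,E}{E,E}$ once corrects the overcount, giving the first equality.

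Finally I would recast the three sums in operator form to obtain the second equality. The first sum is by definition $\mathcal{D}^{\otimes 2}[\rho^{(2)}]$, and right-multiplication by $\mathbb{I}$ changes nothing. For the second, using $\mathcal{S}\ket{E,E'}=\ket{E',E}$ and $\mathcal{S}^\dagger=\mathcal{S}$ one verifies $\bra{E,E'}\rho^{(2)}\mathcal{S}\ket{E,E'}=\bra{E,E'}\rho^{(2)}\ket{E',E}$ and $\bra{E,E'}\mathcal{S}=\bra{E',E}$, whence $\mathcal{D}^{\otimes 2}[\rho^{(2)}\mathcal{S}]\,\mathcal{S}=\sum_{E,E'}\bra{E,E'}\rho^{(2)}\ket{E',E}\,\ketbra{E,E'}{E',E}$, matching the second sum. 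Since the derivation never uses positivity or normalization of $\rho^{(2)}$, the extension to an arbitrary operator $A$ on $\mathcal{H}^{\otimes 2}$ is immediate. I expect the main obstacle to be not analytic but combinatorial: correctly tracking the diagonal double-counting and confirming that the degeneracy-eliminated no-resonance condition leaves precisely the two permutation families and nothing more; the phase averaging and operator rewriting are routine once that bookkeeping is pinned down.
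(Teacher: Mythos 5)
Your proof is correct and follows essentially the same route as the paper's: expand in the product energy eigenbasis, use the infinite-time average to enforce the resonance condition $E_1+E_2=E_3+E_4$, invoke the second no-resonance condition to reduce the resonant tuples to the identity and swap families, and subtract the diagonal terms $\ketbra{E,E}{E,E}\rho^{(2)}\ketbra{E,E}{E,E}$ once to correct the double counting. The only differences are minor additions on your part --- explicitly justifying the sum/limit interchange via finite dimensionality and verifying the $\mathcal{D}^{\otimes 2}$/$\mathcal{S}$ operator rewriting --- which the paper leaves implicit.
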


\begin{proof}
\begin{align}
    &\mathbb{E}_t \L[U_t^{\otimes 2} \rho^{(2)}  U_{-t}^{\otimes 2} \R] =  \lim_{T\rightarrow \infty}\frac{1}{T}\int^T_0 dt~U_t^{\otimes 2} \rho^{(2)}  U_{-t}^{\otimes 2} \\
    &=  \sum_{E_1,E_2,E_3,E_4} \L[\lim_{T\rightarrow \infty}\frac{1}{T}\int^T_0 dt~e^{-i(E_1-E_2+E_3-E_4)t}\R] \ketbra{E_1,E_3}{E_1,E_3}\rho^{(2)} \ketbra{E_2,E_4}{E_2,E_4} \\
    &=  \sum_{E_1,E_2,E_3,E_4} \delta(E_1-E_2+E_3-E_4)~ \ketbra{E_1,E_3}{E_1,E_3}\rho^{(2)} \ketbra{E_2,E_4}{E_2,E_4}\\
    &\overset{\text{$2$-NR}}{=}  \sum_{E,E'} \bigg[ \ketbra{E,E'}{E,E'} \rho^{(2)}  \ketbra{E,E'}{E,E'} + \ketbra{E,E'}{E,E'} \rho^{(2)}  \ketbra{E',E}{E',E} \bigg] - \sum_E \ketbra{E,E}{E,E} \rho^{(2)}  \ketbra{E,E}{E,E}~, \label{eq:lemma_2}
\end{align}
where ``(2-NR)" indicates the use of the second no-resonance condition. The second no-resonance condition states that the delta function $\delta(E_1-E_2+E_3-E_4)$ is non-zero in only two cases: $E_1 = E_2, E_3=E_4$ or $E_1 = E_4, E_3=E_2$, which give the two terms above. The final, energy-conservation term in Eq.~\eqref{eq:lemma_2} corresponds to $E_1=E_2=E_3=E_4$. 
This special case belongs to the above two terms simultaneously and is double counted.
Therefore, the energy-conservation term is subtracted to correct for this double counting.\end{proof}
A diagrammatic representation of this Lemma is shown in  Fig.~\ref{fig:2design}(a,b). 
The energy-conservation terms are illustrated in Fig.~\ref{fig:2design}(c).
As with the second Hamiltonian twirling identity, one can define the $k$-th twirling identity.
\begin{lemma}[$k$-th Hamiltonian twirling identity] \label{lemma:Hamiltonian_k_design}
If the Hamiltonian satisfies the $k$-th no-resonance condition, its $k$-th twirling channel is:
\begin{align}
    C^{(k)}[\rho^{(k)}] &\equiv \mathbb{E}_t \L[ U_t^{\otimes k} \rho^{(k)} U_{-t}^{\otimes k}\R] \nonumber\\
    &= \sum_{M = (E_1,\cdots, E_k)} \bigg[\sum_{\sigma \in \pi(M)} \ketbra{E_1, \cdots, E_k}{E_1, \cdots, E_k} \rho^{(k)}  \ketbra{E_{\sigma(1)}, \cdots, E_{\sigma(k)}}{E_{\sigma(1)}, \cdots, E_{\sigma(k)}} \bigg],  \label{eq:multiset_line}\\
    &= \mathcal{D}^{\otimes k}\big[\rho^{(k)}\hat{\sigma}^{-1} \big]\sum_{\sigma \in S_k} \hat{\sigma} + \mathrm{E.C.} \label{eq:Ham_k_design}
\end{align}
where $\mathbb{E}_t[f(t)] = \lim_{T\rightarrow \infty} \frac{1}{T} \int_0^T f(t) dt$ is averaging over infinite time, $\rho^{(k)}$ is any state on $k$ copies of the Hilbert space $\mathcal{H}^{\otimes k}$, and $\hat{\sigma}$---with $\sigma \in S_k$---is a permutation operator acting on the $k$-copy Hilbert space: 
\begin{align}
    \hat{\sigma} = \sum_{\{E_i\}} \ketbra{E_{1},E_{2}, \dots, E_{k}}{E_{\sigma(1)},E_{\sigma(2)}, \dots, E_{\sigma(k)}}.
\end{align} 
In~\eqref{eq:multiset_line}, the sum is taken over all possible lists of $k$ energies $M=(E_1,\cdots,E_k)$. If $M$ has $r$ unique energies with multiplicities $n_1,\cdots,n_r$ (such that $\sum_{i=1}^n n_r = k$), its group of permutations $\pi(M)$ has $k!/\prod_{i=1}^n n_i!$ elements and is a subgroup of the symmetric group $S_k$. In~\eqref{eq:Ham_k_design}, this identity is re-written in terms of permutations, the dephasing channel $\mathcal{D}^{\otimes k}\big[\rho^{(k)}\big] \equiv \sum_{E_1,\cdots, E_k} \ketbra{E_1,\cdots, E_k}{E_1,\cdots, E_k} \rho^{(k)} \ketbra{E_1,\cdots, E_k}{E_1,\cdots, E_k}$, and energy-conservation terms $\text{E.C.}$ associated with double, or multiple, counting (similar to the $k=2$ case). 
All terms can be explicitly derived from \eqref{eq:multiset_line}.
\end{lemma}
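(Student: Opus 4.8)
The plan is to mirror the $k=2$ proof already given, upgrading the two-index resonance analysis to a general multiset argument. First I would insert resolutions of the identity in the energy eigenbasis on each side of $\rho^{(k)}$, using $U_t^{\otimes k} = \sum_{E_1,\dots,E_k} e^{-i(E_1+\cdots+E_k)t}\ketbra{E_1,\dots,E_k}{E_1,\dots,E_k}$. Sandwiching $\rho^{(k)}$ between the outer (ket) and inner (bra) copies produces a sum over $2k$ energy labels weighted by the phase $e^{-i(\sum_i E_{\alpha_i}-\sum_i E_{\beta_i})t}$, where $(\alpha_i)$ are the ket indices and $(\beta_i)$ the bra indices. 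Performing the infinite-time average $\Ebb_t$ collapses each phase to a Kronecker delta $\delta(\sum_i E_{\alpha_i}-\sum_i E_{\beta_i})$, exactly as in the displayed $k=2$ computation.

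Next I would invoke the $k$-th no-resonance condition to evaluate the surviving deltas. By definition, $\sum_i E_{\alpha_i}=\sum_i E_{\beta_i}$ forces $(\beta_i)$ to be a reordering of $(\alpha_i)$, i.e. $\beta_i=\alpha_{\sigma(i)}$ for some $\sigma\in S_k$. Organizing the nonzero terms by the common list of ket energies $M=(E_1,\dots,E_k)$ and the allowed reorderings then yields Eq.~\eqref{eq:multiset_line} directly, with the inner sum running over the \emph{distinct} permutations $\pi(M)$. The key point is that two permutations lying in the same coset of the stabilizer of $M$ produce identical bra-projectors, so for a list with multiplicities $n_1,\dots,n_r$ only $|\pi(M)|=k!/\prod_i n_i!$ distinct terms arise; this form therefore has no overcounting and constitutes the rigorous content of the lemma.

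The remaining work is to recast Eq.~\eqref{eq:multiset_line} into the permutation-operator form of Eq.~\eqref{eq:Ham_k_design}. Here I would read that expression as $\sum_{\sigma\in S_k}\mathcal{D}^{\otimes k}[\rho^{(k)}\hat{\sigma}^{-1}]\hat{\sigma}$ and expand each $\mathcal{D}^{\otimes k}$ and $\hat{\sigma}$ in the energy basis, checking that it reproduces the summand of Eq.~\eqref{eq:multiset_line} for every $M$ and $\sigma$. The one discrepancy is that the full sum ranges over all of $S_k$ rather than over $\pi(M)$: for a list $M$ each distinct term is generated $\prod_i n_i!$ times, since the stabilizer of $M$ has this order. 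The energy-conservation terms $\mathrm{E.C.}$ are precisely the inclusion–exclusion corrections removing this overcounting, reducing each $\prod_i n_i!$-fold repetition to a single copy. For $k=2$ this collapses to the one subtracted diagonal term $-\sum_E \ketbra{E,E}{E,E}\rho^{(2)}\ketbra{E,E}{E,E}$ already exhibited.

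I expect the main obstacle to be the explicit bookkeeping of $\mathrm{E.C.}$ for general $k$. Unlike $k=2$, where only the single coincidence $E_1=E_2$ must be subtracted once, the coincidence patterns for larger $k$ form a lattice of set-partitions, and the correction is an alternating inclusion–exclusion sum over this lattice. The cleanest strategy is to prove the identity in the unambiguous multiset form Eq.~\eqref{eq:multiset_line}, which is overcounting-free, and then regard Eq.~\eqref{eq:Ham_k_design} as an equivalent regrouping whose $\mathrm{E.C.}$ terms are \emph{defined} by matching the two expressions term by term, as the lemma statement already anticipates. This sidesteps the need to write $\mathrm{E.C.}$ in closed form while still establishing the claimed identity.
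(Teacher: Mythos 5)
Your proposal is correct and follows essentially the same route as the paper: expand $U_t^{\otimes k}$ in the energy eigenbasis, let the infinite-time average produce the resonance delta, invoke the $k$-th no-resonance condition to reduce the surviving terms to reorderings, and treat the multiset form of Eq.~\eqref{eq:multiset_line} as the rigorous, overcounting-free statement from which the $S_k$-sum form and its energy-conservation corrections are derived by regrouping. The paper proves $k=2$ explicitly and declares the general case analogous; your coset-of-stabilizer and inclusion--exclusion bookkeeping is simply a more explicit rendering of that same argument (and your check that it collapses to the single subtracted diagonal term at $k=2$ confirms the match).
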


The proof of the $k$-th twirling identity is analogous to that of the second twirling identity: in this case, there are $k!$ permutations relevant for the $k$-th no resonance theorem, along with more energy-conservation terms than the simplest one in the $k=2$ case. 
As an example, the third twirling identity is illustrated diagrammatically in Fig.~\ref{fig:2design}(d). In addition to the energy-conservation terms associated with cases where two energies are the same $E_i = E_j$, there is a higher order energy-conservation term when all three energies are the same: $E_1 = E_2 = E_3$.

Lastly, we note that this identity also holds in Floquet systems, as long the $k$-th no resonance condition is tightened to $\sum_i E_{\alpha_i} \equiv \sum_i E_{\beta_i} (\text{mod }2\pi) \iff \{\alpha_i\} = \{\beta_i\}$, and the integral over all time is replaced by a discrete sum over integer Floquet periods $nT$.

\begin{figure}[tb]
    \centering
    \includegraphics[width=0.95\textwidth]{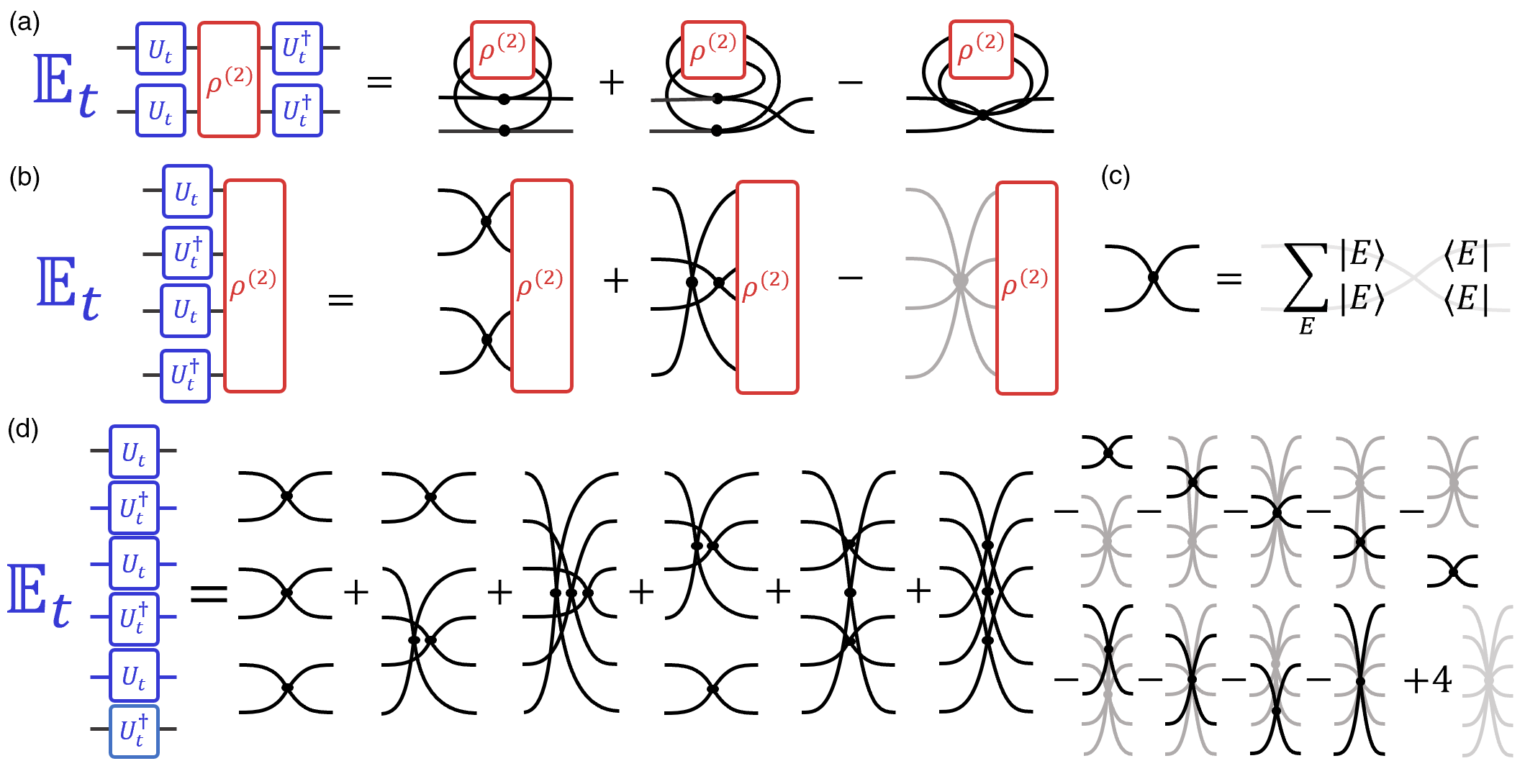}
    \caption{Second and third Hamiltonian twirling identities.
    (a) The second Hamiltonian twirling identity comprises identity and swap permutations terms, along with an \textit{energy-conservation} correction term that in our context will have an exponentially small contribution.
    (b) The same as (a), but open legs representing the channel input/outputs are rearranged in order to simplify the diagram. We keep this convention for the rest of diagrams in this document. 
    (c) Diagrammatic notation for the basic building block in our diagrams: the dephasing operator in the energy eigenbasis: $D[\rho] = \sum_E \ketbra{E}{E}\rho\ketbra{E}{E}$.
    (d) The third Hamiltonian twirling identity comprises of six permutation and several energy-conservation terms. For our purposes, these additional terms will have small contributions.}
    \label{fig:2design}
\end{figure}

\begin{figure}[t]
    \centering
    \includegraphics[width=0.95\textwidth]{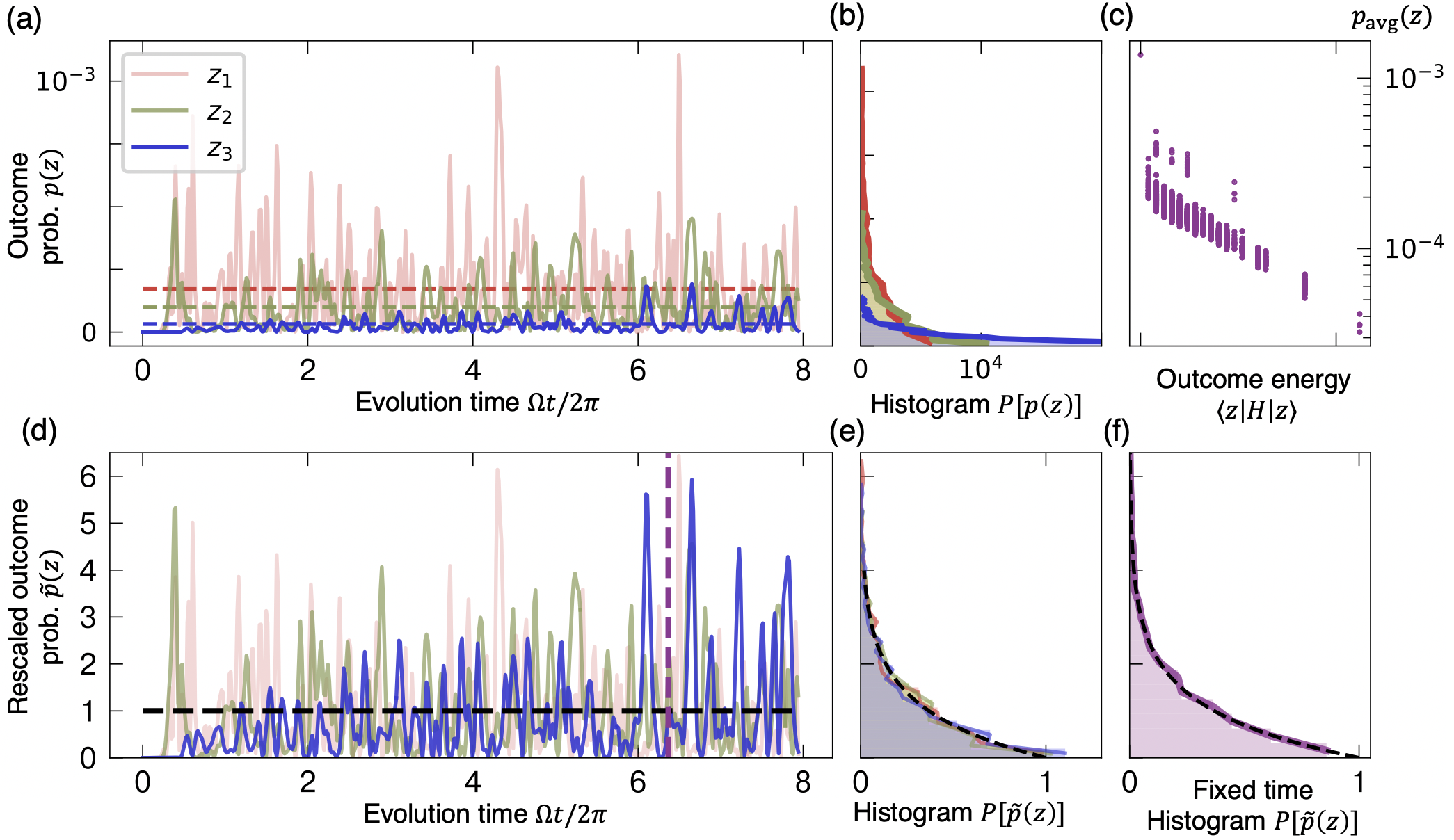}
    \caption{(a) Outcome probabilities $p(z,t)$ over time, for three outcomes $z_1,z_2,z_3$ in a Bose-Hubbard model with 8 particles on 8 sites in red, green and blue respectively. For each $z$, $p(z,t)$ fluctuates about its mean value $\pdiagz$ (dashed lines). (b) Plotting the histogram of $p(z,t)$ over time, each histogram follows an exponential distribution with differing means $\pdiagz$. (c) $\pdiagz$ is approximately exponentially dependent on the outcome energy $\bra{z}H\ket{z}$, analogous to a Boltzmann distribution. (d) The rescaled outcome probabilities $\tilde{p}(z,t)$ follow (e) a universal exponential (Porter Thomas) distribution, which follows from Theorem~\ref{thm:temporal_PT}. (f) Taking a histogram at fixed time [purple, (d)] over bitstrings $z$ gives the same distribution (Theorem~\ref{thm:bitstring_PT}).}
    \label{fig:theorem_graphic}
\end{figure}

\subsection{Porter-Thomas distribution over time}
We will use the $k$-th Hamiltonian twirling identity, with $\rho^{\otimes k} = \ketbra{\Psi_0}{\Psi_0}^{\otimes k}$, to prove that the quantity $\tilde{p}(z,t)\equiv p(z,t)/\pdiagz$ asymptotically follows the \textit{Porter-Thomas}, or exponential distribution~\cite{porter1956fluctuations}.

The Porter-Thomas (PT) distribution naturally arises from the distribution of overlaps $p$ of the Haar ensemble --- the unique distribution of states on the Hilbert space that is invariant under any unitary --- to a fixed vector $\ket{\Phi}$: $p = \abs{\langle \Psi | \Phi \rangle}^2$, with $\ket{\Psi}$ drawn from the Haar ensemble. A given $\ket{\Psi}$ and $\ket{\Phi}$ have overlap $\abs{\langle \Psi | \Phi \rangle}^2$. The probability density of obtaining an overlap with value $p$ is $P(p) = \int d\Psi \delta(\abs{\langle \Psi | \Phi \rangle}^2 - p)$. In deep random circuits, the wavefunction $\ket{\Psi}$ is approximately Haar-random and we can evaluate $P(p)$ for any Hilbert space dimension $D$, which converges to the Porter-Thomas distribution in the limit $D\rightarrow \infty$.
\begin{align}
    P(p) dp = (D-1)(1-p)^{D-2} dp \xrightarrow{D\rightarrow\infty} \exp(-D p) d(Dp)~,
    \label{eq:PorterThomas}
\end{align}
It is natural to rescale the quantity $p$ by $D$. In this case, the exponential distribution has a $k$-th moment $\int_0^\infty (Dp)^k P(D p) d(Dp) = k!$. 

We formally state the Theorem from the main text in Theorems~\ref{thm:temporal_PT} and \ref{thm:bitstring_PT} below. Informally, our theorems show that the rescaled distribution $\tilde{p}(z,t) \equiv p(z,t)/\pdiagz$ asymptotically follows the PT distribution. As a reminder, $p(z,t) = \abs{\braket{z}{\Psi(t)}}^2$ is the probability of measuring outcome $z$, and $\pdiagz = \Ebb_t\L[p(z,t)\R]$ is its time-averaged value. By showing that the $k$-th moments of $\tilde{p}(z,t)$ are close to $k!$, we show that $\tilde{p}(z,t)$ asymptotically approaches the Porter-Thomas distribution. We can choose to fix the bitstring $z$, and consider the ensemble over times $t$ [Theorem~\ref{thm:temporal_PT}, Fig.~\ref{fig:theorem_graphic}(d,e)], or fix a large but arbitrary time $t$, and consider the ensemble of $\tilde{p}(z,t)$ over outcomes $z$ [Theorem~\ref{thm:bitstring_PT}, Fig.~\ref{fig:theorem_graphic}(d,f)]. For brevity, we will occasionally denote these ensembles as $\{\tilde{p}(z,t)\}_t$ and $\{\tilde{p}(z,t)\}_z$ respectively, with the subscript denoting the variable over which the ensemble is formed (the other variable is kept fixed). Formally stated, our first Theorem is the following:

\begin{theorem}[Temporal Porter-Thomas distribution] \label{thm:temporal_PT}
For an initial state $\ket{\Psi_0}$ and a Hamiltonian $H$ satisfying the $k$-th \textit{no-resonance condition}, the temporal $k$-th moment of $\tilde{p}(z,t)$ (with fixed $z$), approaches a universal value $\Ebb_t\L[\tilde{p}(z,t)^k\R] = k! +  O(D_\beta^{-1}(z))$. $D_\beta(z)$ is the effective Hilbert space dimension of the outcome $z$ defined as $D_\beta^{-1}(z) \equiv \L(\sum_{E} \abs{\braket{z}{E}}^4 \abs{\braket{E}{\Psi_0}}^4\R)/\pdiagz^2$.
\end{theorem}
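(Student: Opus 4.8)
The plan is to reduce the statement about the temporal moment to a single application of the $k$-th Hamiltonian twirling identity (Lemma~\ref{lemma:Hamiltonian_k_design}). First I would write $\tilde p(z,t) = p(z,t)/\pdiagz$, so that $\Ebb_t[\tilde p(z,t)^k] = \Ebb_t[p(z,t)^k]/\pdiagz^{\,k}$, reducing the problem to the raw temporal moment $\Ebb_t[p(z,t)^k]$. Since $p(z,t) = \bra{z}U_t\ketbra{\Psi_0}{\Psi_0}U_{-t}\ket{z}$, its $k$-th power is a diagonal matrix element of the $k$-fold evolved operator, $p(z,t)^k = \bra{z}^{\otimes k}U_t^{\otimes k}\ketbra{\Psi_0}{\Psi_0}^{\otimes k}U_{-t}^{\otimes k}\ket{z}^{\otimes k}$. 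Averaging over time produces exactly the twirling channel, $\Ebb_t[p(z,t)^k] = \bra{z}^{\otimes k}C^{(k)}[\ketbra{\Psi_0}{\Psi_0}^{\otimes k}]\ket{z}^{\otimes k}$, at which point I invoke the $k$-th no-resonance hypothesis through Lemma~\ref{lemma:Hamiltonian_k_design}.

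The key simplification I would exploit is that, once the channel is sandwiched between $\ket{z}^{\otimes k}$ on both sides and the input is the product of projectors $\ketbra{\Psi_0}{\Psi_0}$, every permutation term contributes the \emph{same} real, positive value. Writing $w_E \equiv \abs{\braket{z}{E}}^2 \abs{\braket{E}{\Psi_0}}^2$, each term indexed by an energy list $(E_1,\dots,E_k)$ and a permutation $\sigma$ collapses, after reindexing the permuted product over all $k$ factors, to $\prod_i w_{E_i}$ independently of $\sigma$; all phases cancel because both $\ket{z}\!\bra{z}$ and $\ket{\Psi_0}\!\bra{\Psi_0}$ enter diagonally. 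This turns the twirling identity into a purely combinatorial sum. Reorganizing the sum over ordered energy lists by the set partition $P$ of $\{1,\dots,k\}$ that records which positions carry equal energies, I obtain the exact expression
\[
\Ebb_t[p(z,t)^k] = \sum_{P}\frac{k!}{\prod_l b_l!}\sum_{\text{distinct }E_l}\prod_l w_{E_l}^{b_l},
\]
where $b_l$ are the block sizes of $P$. Since $\pdiagz = \sum_E w_E$, the all-singletons partition yields the leading term $k!\,\pdiagz^{\,k}$, and I would verify this reproduces the $k=2$ result $\Ebb_t[p(z,t)^2] = 2\pdiagz^{\,2} - \sum_E w_E^2$ already implied by the second twirling identity.

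The remaining work is to bound the deviation from $k!\,\pdiagz^{\,k}$, i.e.\ the energy-conservation terms. Setting $p_m \equiv \sum_E w_E^m$ (so $p_1 = \pdiagz$ and $D_\beta^{-1}(z) = p_2/p_1^2$), I note that $p_1^k = \sum_P Q_P$ with $Q_P \equiv \sum_{\text{distinct}}\prod_l w_{E_l}^{b_l}\ge 0$, so subtracting gives $\Ebb_t[p(z,t)^k] - k!\,p_1^k = -k!\sum_{P\neq\text{singletons}}\L(1-1/\prod_l b_l!\R) Q_P$. Dropping the distinctness constraint gives $Q_P \le \prod_l p_{b_l}$, and the elementary inequalities $p_b \le p_1^{\,b-2}p_2$ for $b\ge 2$ (from $w_E \le p_1$) together with $p_2 \le p_1^2$ yield $\prod_l p_{b_l} \le p_1^{k}\,(p_2/p_1^2)^{s(P)} \le p_1^k\,D_\beta^{-1}(z)$ for any non-singleton $P$, where $s(P)\ge 1$ counts the blocks of size $\ge 2$. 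Summing over the finitely many (Bell-number-many) partitions, I conclude $\abs{\Ebb_t[p(z,t)^k] - k!\,p_1^k} \le C_k\, p_1^k\, D_\beta^{-1}(z)$; dividing by $\pdiagz^{\,k}=p_1^k$ gives $\Ebb_t[\tilde p(z,t)^k] = k! + O(D_\beta^{-1}(z))$.

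I expect the main obstacle to be the combinatorial bookkeeping of the energy-conservation terms: establishing that not only the leading one-coincidence correction but \emph{every} higher coincidence pattern is controlled by the same first power of $D_\beta^{-1}(z)$. This is exactly where the inequalities $p_b \le p_1^{\,b-2}p_2$ and $p_2 \le p_1^2$ do the heavy lifting, collapsing all factors $(p_2/p_1^2)^{s}$ with $s\ge1$ to $O(D_\beta^{-1}(z))$. Finally, since the exponential (Porter-Thomas) distribution is uniquely determined by its moments $k!$, matching every moment up to the no-resonance order and $O(D_\beta^{-1}(z))$ corrections justifies the informal claim that $\{\tilde p(z,t)\}_t$ converges to Porter-Thomas as $D_\beta(z)\to\infty$.
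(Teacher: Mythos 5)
Your proof is correct and follows essentially the same route as the paper's: both reduce the temporal moment to the $k$-th Hamiltonian twirling identity, observe that every permutation term collapses to the same non-negative weight $\prod_i w_{E_i}$ (phases cancel by reindexing), identify the leading $k!$ contribution, and bound the repeated-energy (energy-conservation) terms by $O(D_\beta^{-1}(z))$. The only difference is bookkeeping in the error estimate — you organize by set partitions and use the power-sum inequalities $p_b \le p_1^{b-2}p_2$ and $p_2 \le p_1^2$ (giving a constant of order $k!\,B_k$ with $B_k$ the Bell number), whereas the paper bounds $|\pi(M)-k!|\le k!$ and takes a union bound over the $\binom{k}{2}$ coinciding pairs, giving the sharper constant $k!\binom{k}{2}$ — but this is cosmetic and both yield the stated $k! + O(D_\beta^{-1}(z))$.
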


\begin{proof}
Our proof is a direct application of our $k$-th Hamiltonian twirling identity on the temporal $k$-th moment $\Ebb_t \L[\tilde{p}(z,t)^k\R] = k! + O(D_\beta^{-1})$:
To compute $\Ebb_t\L[\tilde{p}(z,t)^k\R]$, we apply Lemma~\ref{lemma:Hamiltonian_k_design} on $\rho^{(k)} = \ketbra{\Psi_0}{\Psi_0}^{\otimes k}$:
\begin{align}
    &\mathbb{E}_t\L[ \tilde{p}(z,t)^k\R] =  \frac{\bra{z}^{\otimes k} \mathbb{E}_t\L[U_t^{\otimes k} \ketbra{\Psi_0}{\Psi_0}^{\otimes k} U_{-t}^{\otimes k} \R] \ket{z}^{\otimes k}}{\pdiagz^{k}} \\
 &= \frac{\sum_{M = (E_1,\cdots, E_k)} \sum_{\sigma\in \pi(M)} ~\L[\prod_{i=1}^k \abs{\braket{z}{E_i}\braket{E_{\sigma_i}}{\Psi_0}}^2\R]}{\pdiagz^k}  = \frac{\sum_{M = (E_1,\cdots, E_k)} \abs{\pi(M)}~\L[\prod_{i=1}^k \abs{\braket{z}{E_i}\braket{E_i}{\Psi_0}}^2\R]}{\pdiagz^k}
\end{align}
where we have used Eq.~\eqref{eq:multiset_line}.
Now using Eq.~\eqref{eq:Ham_k_design}, where the contribution from the first term is equivalent to replacing $\abs{\pi(M)}$ with $k!$. This gives:
\begin{align}
    \frac{
    \bra{z}^{\otimes k} 
    \mathcal{D}^{\otimes k}
    \L[ \ket{\Psi_0}\bra{\Psi_0}^{\otimes k}
    \R]
    \L[\sum_{\sigma \in S_k} \hat{\sigma}\R]
    \ket{z}^{\otimes k}
    }
    {\pdiagz^k}
    =k! 
    \frac{\sum_{\{E_j\}}
    \prod_{j=1}^k \abs{\braket{z}{E_j}\!\braket{E_j}{\Psi_0}}^2  }
    {\pdiagz^k}
    = k!~,
\end{align}
where in the first equality we used the fact that $\hat{\sigma}$ operator acts trivially on the product state $\ket{z}^{\otimes k}$ and in the second equality, we used the relation 
\begin{equation}
\pdiagz \equiv \mathbb{E}_t \L[ p(z,t)\R] = \sum_E \abs{\braket{z}{E}\!\braket{E}{\Psi_0}}^2~.
\end{equation}
This relation is used several times, motivating our definition of the quantity 
\begin{equation}
    q_z(E) \equiv \frac{\abs{\braket{z}{E}\!\braket{E}{\Psi_0}}^2}{\pdiagz}
\end{equation}
In fact, $q_z(E)$ has the properties of a probability distribution over the energies $E$: $0 \leq q_z(E) \leq 1$ and $\sum_E q_z(E) = 1$.

Now we wish to bound the deviation $\abs{\mathbb{E}_t\L[ \tilde{p}(z,t)^k\R]-k!}$, that is the contribution from all energy-conservation (E.C.) terms. To do so, we note the E.C. terms must come from lists of energies $M = (E_1,\cdots,E_k)$ with at least one repeated energy. That is:
\begin{align}
    \abs{\mathbb{E}_t\L[ \tilde{p}(z,t)^k\R]-k!} \equiv \abs{\frac{\bra{z}^{\otimes k}(\text{E.C.})\ket{z}^{\otimes k}}{\pdiagz^k}} &= \sum_{\substack{M=(E_1,\cdots,E_k)\\\text{not all $E_j$ distinct}}} \abs{\pi(M)-k!} \prod_{j=1}^k q_z(E_j)\\
    &\leq k!\L[ \sum_{E_1=E_2}  \sum_{E_3,\cdots, E_k}\prod_{j=1}^k q_z(E_j) + \sum_{E_1=E_3}  \sum_{E_2,E_4,\cdots, E_k}\prod_{j=1}^k q_z(E_j) + \cdots \R] \nonumber
\end{align}

In the last line, we have used two separate inequalities: we bound the difference $\abs{\pi(M)-k!}\leq k!$. This enables us to contain all possible lists $M$ with at least one repeated energy in the sums above, where we have only explicitly demanded that one pair of energies coincide: $E_a = E_b$. There are ${k \choose 2}$ choices of $a$ and $b$. Each choice has the same value. Therefore without loss of generality, we set $a=1$ and $b=2$. Then we obtain:
\begin{align}
    \abs{\mathbb{E}_t\L[ \tilde{p}(z,t)^k\R]-k!} \leq k! {k \choose 2} \sum_{E_1} q_z(E_1)^2  \prod_{j=3}^k \sum_{E_j} q_z(E_j) =  k! {k \choose 2} \sum_{E_1} q_z(E_1)^2 \equiv k! {k \choose 2} D_\beta^{-1}(z)~.
\end{align}
Therefore, we obtain the desired bound
\begin{equation}
    \abs{\mathbb{E}_t\L[ \tilde{p}(z,t)^k\R]-k!} \leq k! {k \choose 2} D_\beta^{-1}(z) = O(D_\beta^{-1}(z))~. \label{eq:}
\end{equation}
\end{proof}
In fact, we expect this bound to be close to tight: the contribution from the simplest energy-conservation terms --- those with exactly one repeated energy --- is close to our bound. Other energy-conservation terms are expected to be exponentially smaller than these simplest terms. Here, the factor $\pi(M) = k!/2$, and we obtain:
\begin{equation}
    \mathbb{E}_t\L[\tilde{p}(z,t)^k\R] \approx k! \L( 1- \frac{k(k-1)}{2} \frac{D_\beta^{-1}(z)}{2} \R)~.
    \label{eq:PT_subleading}
\end{equation}
Eq.~\eqref{eq:PT_subleading} agrees with the leading order (in $D^{-1}$) correction to the $k$-th moment of the Porter-Thomas distribution:
\begin{align}
   \int_0^1 dp~p^k (D-1)(1-p)^{D-2} = k! \frac{D^k}{D \cdots (D+k-1)} =k! \L[ 1- \frac{k(k-1)}{2} D^{-1} + O(D^{-2}) \R]~. \label{eq:PT_kth_moment_expand}
\end{align}
Therefore, it is apt to call  $D_\beta(z)$ as the effective dimension: the deviation of the $k$-th moment from $k!$ for the temporal distribution is the same as that of the Porter-Thomas distribution for a system with finite dimension $2D_\beta$. $\{\tilde{p}(z,t)\}_t$ approximately follows a Porter-Thomas distribution over a vector space of dimension $2D_\beta(z)$. Note, however, that this agreement does not extend to corrections of higher orders in $D^{-1}$.

For a fixed $z$, the inverse effective dimension $D_\beta^{-1}(z)$ can be understood as the \textit{collision probability} associated with $q_z(E)$.
More explicitly, we can define $D_\beta^{-1}(z) = \sum_E q_z(E)^2$ such that $D_\beta (z)$ describes the participation ratio (or equivalently the size of effective Hilbert space dimension) corresponding to a particular bitstring $z$.
Then the overall inverse participation ratio (inverse effective dimension) $D_\beta^{-1}$ is the weighted average: $D_\beta^{-1} = \sum_z \pdiagz D_\beta^{-1}(z)$.

\subsection{Porter-Thomas distribution over outcomes}
Theorem~\ref{thm:bitstring_PT} concerns the distribution of $\tilde{p}(z,t)$ for fixed $t$, treated as samples labeled by $z$. 
\begin{theorem}[Porter-Thomas distribution over outcomes] \label{thm:bitstring_PT}
Consider an initial state $\ket{\Psi_0}$ and a Hamiltonian $H$ satisfying the $2k$-th \textit{no-resonance condition}.
At a typical time $t\in [0,\infty)$, the weighted $k$-th moment of $\tilde{p}(z,t)$ approaches a universal value: $P^{(k)}(t) \equiv \sum_z \pdiagz \tilde{p}(z,t)^k = k! +  O(D_\beta^{-1/2})$, where $D_\beta^{-1} \equiv \sum_{z,E} \abs{\braket{z}{E}}^4 \abs{\braket{E}{\Psi_0}}^4/\pdiagz$ is the inverse effective Hilbert space dimension associated with the quench evolution, and $\{\ket{E}\}$ are the eigenstates of $H$.
\end{theorem}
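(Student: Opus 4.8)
The plan is to reduce the ``typical time'' statement to a second-moment (concentration) estimate. I will control the root-mean-square temporal deviation of $P^{(k)}(t)$ from $k!$ by splitting it into a squared bias and a variance,
\begin{equation}
\Ebb_t\L[\L(P^{(k)}(t)-k!\R)^2\R] = \text{Var}_t\L[P^{(k)}(t)\R] + \L(\Ebb_t\L[P^{(k)}(t)\R]-k!\R)^2 ,
\end{equation}
and show the right-hand side is $O(D_\beta^{-1})$. Then, by Chebyshev's inequality, for any $\eta\in(0,1)$ the fraction of times with $|P^{(k)}(t)-k!| > \eta^{-1/2}\,\sqrt{\Ebb_t[(P^{(k)}-k!)^2]}$ is at most $\eta$, so a typical time obeys $P^{(k)}(t) = k! + O(D_\beta^{-1/2})$, which is the claim.

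The bias is immediate from Theorem~\ref{thm:temporal_PT}. By linearity $\Ebb_t[P^{(k)}(t)] = \sum_z \pdiagz\,\Ebb_t[\tilde p(z,t)^k]$, and the per-outcome bound $|\Ebb_t[\tilde p(z,t)^k]-k!| \le k!\binom{k}{2}D_\beta^{-1}(z)$ together with $\sum_z \pdiagz D_\beta^{-1}(z) = D_\beta^{-1}$ gives $\Ebb_t[P^{(k)}(t)] = k! + O(D_\beta^{-1})$; hence the squared bias is $O(D_\beta^{-2})$, negligible against the variance. The heart of the argument is therefore the variance, which I write as a sum of connected temporal correlators,
\begin{equation}
\text{Var}_t\L[P^{(k)}\R] = \sum_{z,z'}\pdiagz\,\pdiag{z'}\L(\Ebb_t\L[\tilde p(z,t)^k \tilde p(z',t)^k\R] - \Ebb_t\L[\tilde p(z,t)^k\R]\Ebb_t\L[\tilde p(z',t)^k\R]\R).
\end{equation}
I evaluate $\Ebb_t[\tilde p(z,t)^k \tilde p(z',t)^k] = \Ebb_t[p(z,t)^k p(z',t)^k]/(\pdiagz^k \pdiag{z'}^k)$ by applying the $2k$-th Hamiltonian twirling identity (Lemma~\ref{lemma:Hamiltonian_k_design}) to $\rho^{(2k)} = \ketbra{\Psi_0}{\Psi_0}^{\otimes 2k}$ sandwiched against the product outcome $\ket{\Phi} = \ket{z}^{\otimes k}\otimes\ket{z'}^{\otimes k}$. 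This is precisely where the $2k$-th (rather than $k$-th) no-resonance condition is required, since the variance is a $2k$-copy object. Organizing the $S_{2k}$ permutation sum, a permutation $\hat{\sigma}$ contributes, after dephasing, a product whose factor on copy $b$ is $\sum_E |\braket{E}{\Psi_0}|^2\braket{w_b}{E}\braket{E}{w_{\sigma^{-1}(b)}}$, with $w_b=z$ on the first $k$ copies and $w_b=z'$ on the last $k$. For the $(k!)^2$ permutations preserving the bipartition into $z$-copies and $z'$-copies, every factor is a diagonal $\pdiagz$ or $\pdiag{z'}$, so these reproduce exactly the disconnected value $(k!)^2$ that cancels $\Ebb_t[\tilde p(z)^k]\Ebb_t[\tilde p(z')^k]$ in the covariance.

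The remaining contributions are what must be bounded by $O(D_\beta^{-1})$, and this is the main obstacle. There are three residual sources: (i) bipartition-mixing permutations, which generate un-enhanced off-diagonal overlaps $\sum_E |\braket{E}{\Psi_0}|^2 \braket{z}{E}\braket{E}{z'}$ for $z\ne z'$; (ii) the energy-conservation terms of the twirling identity, controlled as in Theorem~\ref{thm:temporal_PT} by collision factors $D_\beta^{-1}(z)$; and (iii) the diagonal $z=z'$ sector, where all $(2k)!$ permutations become bipartition-preserving and yield $\text{Var}_t[\tilde p(z,t)^k] = (2k)!-(k!)^2 + O(D_\beta^{-1}(z))$, weighted by the collision probability $\sum_z \pdiagz^2$. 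The technical core is to bound each of these by the single scale $D_\beta^{-1}$: using Cauchy--Schwarz and the fact that $q_z(E) = |\braket{z}{E}\!\braket{E}{\Psi_0}|^2/\pdiagz$ is a probability distribution, one shows $\sum_z \pdiagz^2$, the weighted collision terms $\sum_z \pdiagz^2 D_\beta^{-1}(z)$, and the squared off-diagonal overlaps are all $O(D_\beta^{-1})$. I expect the bookkeeping of these residual permutation and energy-conservation terms — and in particular establishing that the diagonal collision weight $\sum_z \pdiagz^2$ is dominated by $D_\beta^{-1}$ — to be the most delicate step; once assembled it gives $\text{Var}_t[P^{(k)}] = O(D_\beta^{-1})$, and the root-mean-square deviation $O(D_\beta^{-1/2})$ yields the theorem.
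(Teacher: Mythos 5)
Your proposal is correct and follows essentially the same route as the paper's proof: time-averaged mean via Theorem~\ref{thm:temporal_PT}, temporal variance via the $2k$-th twirling identity applied to $\ket{z}^{\otimes k}\otimes\ket{z'}^{\otimes k}$ with the bipartition-preserving permutations cancelling the disconnected piece, Cauchy--Schwarz on the off-diagonal elements $\bra{z}\rho_d\ket{z'}$, the bound $\tr(\rho_d^2)\le D_\beta^{-1}$, and Chebyshev for typicality. Your separate accounting of the diagonal $z=z'$ sector and the weight $\sum_z \pdiagz^2$ is a minor organizational refinement that the paper absorbs into $\tr(\rho_d^2)$ via $\sum_z \pdiagz^2 \le \tr(\rho_d^2) \le D_\beta^{-1}$, not a different argument.
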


\begin{proof}
Our proof strategy is the following. We first compute the \textit{time-averaged} value $\Ebb_t\L[P^{(k)}(t)\R]$. From Theorem~\ref{app:Hamil_k_design}, this quantity is $k! + O(D_\beta^{-1})$. We next compute the fluctuations of $P^{(k)}(t)$ over time, and we will find that $\text{std}_t\L[P^{(k)}(t)\R] = O(D_\beta^{-1/2})$. Combining these two results leads to our desired conclusion: $P^{(k)}(t) = k! + O(D_\beta^{-1/2})$ with high probability.

In Fig.~\ref{fig:thm_derivation}(a), we illustrate this computation of $\Ebb_t\L[ P^{(k)}(t)\R]$ for the case $k=2$. In general, we have:
\begin{equation}
    \abs{\Ebb_t\L[ P^{(k)}(t)\R]-k!} \leq \sum_z \pdiagz \abs{\Ebb_t\L[ \tilde{p}(z,t)^k\R]-k!} \leq {k \choose 2} k! \sum_z \pdiagz  D_\beta^{-1}(z) = O(D_\beta^{-1})
\end{equation}
Up to this point, we have only used the $k$-th Hamiltonian twirling identity. To estimate the temporal fluctuations of $P^{(k)}(t)$, we need the $2k$-th twirling identity.
\begin{align}
    \text{var}_t[P^{(k)}(t)] &= \mathbb{E}_t[P^{(k)}(t)^2] - \mathbb{E}_t[P^{(k)}(t)]^2 = \sum_{z,z'} \pdiagz \pdiag{z'} \frac{\Ebb_t[p(z,t)^k p(z',t)^k]-\Ebb_t[p(z,t)^k] \Ebb_t[p(z',t)^k]}{\pdiagz^{k}\pdiag{z'}^{k}} \label{eq:temporal_variance_calc}
\end{align}
By using the Hamiltonian $k$-th and $2k$-th twirling identities, any term in $\mathbb{E}_t[P^{(k)}(t)^2]$ that does not cancel with a term in $\mathbb{E}_t[P^{(k)}(t)]^2$ must involve a permutation between $z$ and $z'$ [illustrated for $k=2$ in Fig.~\ref{fig:thm_derivation}(b)]. As in our proof of Theorem~\ref{thm:temporal_PT}, without loss of generality, we let the energies corresponding to this permutation be $E_1$ and $E_2$, and multiply our overall expression by a factor of $2{2k \choose 2}$ (here, $E_1$ and $E_2$ are inequivalent). Then their sum is of the form
\begin{gather}
    \text{var}_t[P^{(k)}(t)] = \sum_{z,z'}\pdiagz \pdiag{z'} 2{2k \choose 2} \frac{\sum_{E_1,E_2} \abs{\braket{E_1}{\Psi_0}}^2\braket{z}{E_1}\!\braket{E_1}{z'} \abs{\braket{E_2}{\Psi_0}}^2\braket{z'}{E_2}\!\braket{E_2}{z}}{\pdiagz \pdiag{z'}} (\cdots) \\
    \equiv \sum_{z,z'}\pdiagz \pdiag{z'} 2{2k \choose 2} \frac{\bra{z}\rho_d\ketbra{z'}{z'}\rho_d\ket{z}}{\pdiagz \pdiag{z'}} (\cdots)~,\label{eq:temp_var_proof_line_above}\\
    (\cdots) = \sum_{\text{perms.}}\prod_{j=3}^{2k} \frac{\sum_{E_j} \abs{\braket{E_j}{\Psi_0}}^2\braket*{z_j^L}{E_j}\!\braket*{E_j}{z_j^R}}{\sqrt{\pdiag{z_j^L}\pdiag{z_j^R}}} \equiv \sum_{\text{perms.}} \prod_{j=3}^{2k} \frac{ \bra*{z_j^L}\rho_d\ket*{z_j^R}}{\sqrt{\pdiag{z_j^L}\pdiag{z_j^R}}}\leq \sum_{\text{perms.}} 1 \leq (2k-2)! \label{eq:temp_var_proof}
\end{gather}
where $z_j^L$ and $z_j^R$ can either be $z$ or $z'$, such that there exactly $k-1$ of the $\{z_j^L\}$ are $z$ and the other $k-1$ are $z'$, and likewise for the $\{z_j^R\}$. For brevity, we have also rewritten our expressions with $\rho_d \equiv \sum_E \abs{\braket{E}{\Psi_0}}^2 \ketbra{E}{E}$, the commonly studied \textit{diagonal ensemble}~\cite{linden2009quantum,short2011equilibration}. Note that $\pdiagz = \bra{z}\rho_d \ket{z}$. 

The inequalities in~\eqref{eq:temp_var_proof} are obtained as follows. The term involving $\rho_d$ is bounded by a simple application of the Cauchy-Schwarz identity, gives $\abs{\bra{z}\rho_d\ket{z'}}^2 \leq \pdiagz \pdiag{z'}$. Since 
$\abs{\bra{z}\rho_d\ket{z}}^2 \equiv \pdiagz \pdiag{z'}$, it follows that $ \abs{\bra*{z_j^L}\rho_d\ket*{z_j^R}}/\sqrt{\pdiag{z_j^L}\pdiag{z_j^R}} \leq 1$ regardless of the values of $z_j^{L}$ and $z_j^{R}$.  Finally, the sum $\sum_{\text{perms.}}$ is a sum over allowed permutations of a given list of energies $(E_1,\cdots ,E_{2k})$; we bound this by $(2k-2)!$. (Formally, $\sum_{\text{perms.}}$ should be \textit{inside} the summation over energies, with different permutations giving rise to different realizations of $\{z_j^L,z_j^R\}$. However, we use the notation above because this sum only gives a constant factor.) 

Combining~\eqref{eq:temp_var_proof_line_above} and ~\eqref{eq:temp_var_proof}, we obtain
\begin{align}
     \text{var}_t[P^{(k)}(t)] = (2k)! \sum_{z,z'}\pdiagz \pdiag{z'} \frac{\bra{z}\rho_d\ketbra{z'}{z'}\rho_d\ket{z}}{\pdiagz \pdiag{z'}} = (2k)!\tr(\rho_d^2) \leq (2k)!D_\beta^{-1} 
\end{align}
In our last inequality, we used $\tr(\rho_d^2) \leq D_\beta^{-1}$. This is derived from Sedrakyan's inequality, which states that for positive $u_i$ and $v_i$, $ \L(\sum_i u_i\R)^2/\sum_i v_i \leq \sum_i \L(u_i^2/v_i\R)$. In our context,
\begin{equation}
 \tr(\rho_d^2) \equiv \sum_E \abs{\braket{\Psi_0}{E}}^4 = \sum_E \abs{\braket{\Psi_0}{E}}^4 \frac{\L(\sum_z \abs{\braket{z}{E}}^2\R)^2}{\sum_z \pdiagz} \leq \sum_{z,E} \frac{\abs{\braket{z}{E}\braket{E}{\Psi_0}}^4}{\pdiagz} \equiv D_\beta^{-1}~.
 \label{eq:purity_is_less_than_eff_dim}
\end{equation}

This completes our proof that for a typical $t$, $P^{(k)}(t) = k! + O(D_\beta^{-1/2})$. 
\end{proof}

Our theorem states that when $D_\beta$ is sufficiently large, the $k$-th moment of $\{p(z,t)\}$ is close to $k!$. We can then conclude that $\{p(z,t)\}$ follows the exponential distribution when $D_\beta$ is sufficiently large. Our numerical results support this: fixing a late enough time $t$ (we numerically find that a time scaling linearly in system size is sufficient for our purposes) and plotting the histogram of the rescaled probabilities $\{\tilde{p}(z,t)\}_z$ (weighted by $\pdiagz$) shows a good agreement with the exponential distribution, as seen in Fig.~\ref{fig:theorem_graphic}(f).

Theorems~\ref{thm:temporal_PT} and \ref{thm:bitstring_PT} together suggest an analogue to the ergodic theorem in classical dynamical systems. Consider an ensemble of particles in a classical dynamical system, such as a billiard stadium. The ergodic theorem states that the spatial distribution of this ensemble at a fixed time is the same as the distribution of a fixed particle over time. In our context, the distributions of $\tilde{p}(z,t)$ are asymptotically the same in two settings: with fixed $t$ over configurations $z$ and with fixed $z$ over $t$. Furthermore, both distributions limit to the universal Porter-Thomas distribution.

\section{Detailed performance analysis for long quench dynamics} 
\label{app:Fid_FXEB_Fd}
In this section we apply the above results to characterize the accuracy of our fidelity estimator $F_d$. Our analysis is valid when the state has reached global equilibrium, quantified by the quantity $P^{(2)}(t) \equiv \sum_z \pdiagz \tilde{p}(z,t)^2$ reaching its equilibrium value of 2. In Theorem~\ref{thm:bitstring_PT}, we have bounded the deviation of the $k$-th moments $P^{(k)}(t)$ from its ideal value of $k!$. Under an approximation [Eq.~\eqref{eq:tildeq_approx}], our bound for $P^{(2)}(t)$ is sufficient to claim that the $F_d$ formula approximates $F$ with accuracy that is exponentially good in system size. By keeping the simplest energy-conservation terms that are the leading-order corrections in Theorem~\ref{thm:bitstring_PT}, we make quantitative predictions for the systematic error $\delta_\textrm{sys}$ and temporal fluctuations $\delta_\textrm{temp}$.
We explicitly confirm our theoretical analysis via numerical simulations presented in Fig.~3 of our main text and Fig.~\ref{fig:performance_metrics}, which shows excellent agreement between theory prediction and numerical results. 
This suggests that our formalism presented here is able to capture the performance of the $F_d$ formula to first order in $D_\beta^{-1}$, at least for the system sizes investigated.

We define the following relevant quantities:
\begin{align}
    \Fidd(t) &\equiv \sum_z \pdiagz \tilde{p}(z,t)^2 - 1 = P^{(2)}(t) - 1~,\\
    \FXEBd(t) &\equiv \sum_z q(z,t) \tilde{p}(z,t) -1~\label{eq:FXEB_def}.
\end{align}
$F_d$ is given by 
\begin{equation}
 F_d =  2\frac{\sum_z q(z) \tilde{p}(z)}{\sum_z p(z) \tilde{p}(z)} - 1 = 2\frac{\FXEBd+1}{\Fidd+1} -1   
\end{equation}
$\Fidd$ is a measure of whether the state has reached global equilibrium: after equilibration, $\Fidd \approx 1$, while at very short times, $\Fidd \sim O(D)$ is exponentially large when the initial state is a product state or a lowly-entangled state. $\FXEBd$ is analogous to the linear cross-entropy benchmark $F_\text{XEB}$.

We can bound $\Fidd \geq 0$ and $\FXEBd \geq -1$, and we obtain (weak) bounds on $F_d$:
\begin{align}
    -1 \leq &F_d \leq 2\sqrt{\frac{\sum_z \pdiagz \tilde{q}(z)^2}{ \sum_z \pdiagz \tilde{p}(z)^2}} - 1~,
\end{align}
We compute the average values over time of $F_d$ as well as its variance over time. We find that the typical variance is exponentially small in system size, therefore sampling at a single late time will give values close to its mean.

We use ``exponentially small in system size" as a shorthand to mean that a quantity is proportional to a measure of the effective dimension: $D_\beta^{-1}$ or $\tr(\rho_d^2)$. These are related by the following inequalities:
\begin{align}
D^{-1} \leq \sum_z \pdiagz^2 \leq \tr(\rho_d^2) \leq D_\beta^{-1} \leq 1~.
\end{align}
The first inequality is a straightfoward inequality between the arithmetic and quadratic means. The second inequality is obtained immediately by noting that 
$\tr(\rho_d)^2 = \sum_{z} \pdiagz^2 + \sum_{z\neq z'} \abs{\bra{z}\rho_d\ket{z'}}^2 \geq \sum_{z} \pdiagz^2$
, and the final inequality was established in Eq.~\eqref{eq:purity_is_less_than_eff_dim}. $D_\beta^{-1}$ is in general only upper bounded by $1$, since choosing $\ket{z} = \ket{E} = \ket{\Psi_0}$, gives $D_\beta^{-1} = 1$. However, for generic initial states, we expect $D_\beta^{-1}$ to be exponentially small in system size.

\subsection{Ansatz for how speckle patterns change under errors}
In the long-time limit, it is fruitful to approximate $q(z)$ as:
\begin{equation}
    q(z) \approx F p(z) + (1-F) p_\perp(z)~, \label{eq:tildeq_approx}
\end{equation}
where $p_\perp(z)$ is a distribution uncorrelated with $p(z)$. In general, in the case of incoherent errors, $\tilde{p}_\perp(z) \equiv p_\perp(z)/\pdiagz$ does not follow a PT distribution. 

We assume that $p(z)$ and $p_\perp(z)$ are uncorrelated. Specifically, we assume that:
\begin{align}
    \sum_z p_\perp(z) \tilde{p}(z)^k &\approx \mathbb{E}_z[p_\perp(z)] \sum_{z'}\tilde{p}(z')^k  \approx k!~, \label{eq:p_pperp_approx}
\end{align}
This illustrates that we do not require $\tilde{p}_\perp(z)$ to follow the Porter-Thomas distribution --- the energy density of the perturbed state can be quite different from the target state, as is the case when many errors occur. As an example, applying this on Eq.~\eqref{eq:FXEB_def} gives the desired $\FXEBd \approx F$. 

A related approximation applicable to higher order correction terms, is to assume a coherent error $V$, and write the experimental state as $\ket{\psi'(t)} = V(t) \ket{\psi_0(t)}$. $V(t)$ is unitary, and therefore $V^\dagger(t) V(t) = \mathbb{I}$. However, we replace instances of $V(t)$ or $V^\dagger(t)$ that do not cancel with their Hermitian conjugates with
\begin{equation}
 V(t), V^\dagger(t) \rightarrow \sqrt{F}~.    \label{eq:a_crude_approximation} 
\end{equation}
While crude, this approximation is useful to make quick estimates. Furthermore, its predictions quantitatively agree with our numerical results, indicating that the dominant sources of error are due to the structure of the $F_d$ formula, which we discuss below. 

\subsection{Systematic error} \label{sec:systematic_error}
Here, we characterize the systematic error $\delta_\text{sys.} \equiv \mathbb{E}_t \L[F_d(t)\R] - F$: the difference between the time-averaged $F_d$ and the true fidelity $F$ after an isolated error. To do so, we first study the time average of the simpler quantities $\Fidd(t) \equiv \sum_z \pdiagz \tilde{p}(z,t)^2-1$ and $\FXEBd(t) \equiv \sum_z \pdiagz \tilde{p}(z,t)\tilde{q}(z,t) -1$.

The $k$-th moment of the PT distribution (Theorems~\ref{thm:temporal_PT},\ref{thm:bitstring_PT}) immediately allows us to estimate $\Fidd$:
\begin{align}
\mathbb{E}_t \L[\Fidd(t)\R] &= \mathbb{E}_t \L[\sum_z \pdiagz \tilde{p}^2(z,t) - 1\R] = 1-  D_\beta^{-1}~.
\label{eq:aveFidd}
\end{align}
Using \eqref{eq:a_crude_approximation} gives
\begin{align}
    \FXEBd \approx F (1- D_\beta^{-1})~. \label{eq:FXEBd_with_subleading}
\end{align}

We would like to use these results to obtain the time-averaged value of $F_d$. The time average of the numerators and denominators $\FXEBd$ and $\Fidd$ should be done together. We can estimate its effects with the following Taylor expansion:
\begin{align}
    \Ebb\L[\frac{X}{Y}\R] \approx \frac{\Ebb[X]}{\Ebb[Y]} - \frac{\text{cov}[X,Y]}{\Ebb[Y]^2} + \frac{\text{var}[Y] \Ebb[X]}{\Ebb[Y]^3}~,\label{eq:taylor_expansion}
\end{align}
which gives the estimate for the systematic error in the presence of a single error: 
\begin{align}
    \delta_\textrm{sys} = \Ebb_t\L[F_d\R] - F &\approx (1-F) \L[\frac{D_\beta^{-1}}{2} + \frac{\text{var}_t\L[\Fidd\R]}{8}\R]~.~ \label{eq:var_t_prediction}
\end{align}
This prediction agrees well with numerical simulations in Fig.~\ref{fig:performance_metrics}(a) and Fig.~3(b) in the main text.

\subsubsection{Argument based on Eigenstate Thermalization Hypothesis}
In the limit of low error rates and long-time evolution, we can show that $F_d \approx F$, independent of our approximation for $\tilde{q}$ \eqref{eq:tildeq_approx}. To do so, we consider the effect of a single error $V$ that occurs at a variable time $t$. In the limit of long evolution, $F_d \approx \FXEBd  \equiv \sum_z \pdiagz \tilde{q}(z) \tilde{p}(z) -1$ since the denominator in $F_d$ approaches 2. Therefore, we will show that $\FXEBd \approx F$ for a single error averaged over the error occurrence time.
In the presence of multiple errors, the above relation will hold, as long as the errors are sufficiently far apart in time. We average over two quantities: (i) the time $t$ at which the error $V$ occurs, and (ii) the time $\tau$ after the error happens when the measurement is made. At this time, the ideal state is $\ket{\Psi(t+\tau)}$, while the perturbed state is $V(\tau)\ket{\Psi(t+\tau)}$, with $V(\tau) \equiv \exp(-iH \tau) V \exp(iH \tau)$ the Heisenberg evolution of $V$. Here, the fidelity $F(t) = \abs{\bra{\Psi(t+\tau)}V(\tau)\ket{\Psi(t+\tau)}}^2 = \abs{\bra{\Psi(t)}V\ket{\Psi(t)}}^2$ is independent of $\tau$ and only depends on $t$ (the time when the error is applied), while the XEB depends on both $t$ and $\tau$. We will average both $t$ and $\tau$. We first average over $t$ to compute the \textit{average} value of $\FXEBd$ at a time $\tau$ after an error is applied:
\begin{align}
    &\Ebb_t\L[\FXEBd(t,\tau)\R] = \sum_z \frac{\abs{ \bra{z}V(\tau)\rho_d \ket{z}}^2}{\pdiagz} - \sum_z \frac{\bra{z,z}\L(V^\dagger(\tau)\otimes\id\R)\rho_d^{(2)}\L(V(\tau)\otimes\id\R)\ket{z,z}}{\pdiagz}~,\label{eq:ave_FXEBd}
\end{align}
where we have defined the \textit{diagonal ensemble} $\rho_d$ and its analog $\rho_d^{(2)}$ over $\mathcal{H}^{\otimes 2}$ as follows:
\begin{align}
    \rho_d &\equiv \sum_E \abs{\braket{E}{\Psi_0}}^2 \ketbra{E}{E} = \Ebb_t \L[\ketbra{\Psi(t)}{\Psi(t)}\R]\\
    \rho_d^{(2)} &\equiv \sum_E \abs{\braket{E}{\Psi_0}}^4 \ketbra{E,E}{E,E}
\end{align}
$\rho_d^{(2)}$ is an analog of $\rho_d$ defined over two-copies of the Hilbert space. The first term in Eq.~\eqref{eq:ave_FXEBd} is the dominant contribution, while the second term comes from the energy-conservation term in Eq.~\eqref{eq:Ham_2_design}.

We next average over the time $\tau$ after the error, in order to claim that $\Ebb_t\L[ \FXEBd(t,\tau)\R] \approx \Ebb_t\L[F(t)\R]$. The averaging over $\tau$ gives:
\begin{align}
    \mathbb{E}_\tau \mathbb{E}_t\L[\FXEBd(t,\tau)\R] &= \sum_z \frac{\bra{z} \rho_d V_d \ketbra{z}{z} \rho_d V^\dagger_d \ket{z}}{\pdiagz} -\sum_z \frac{\bra{z,z} \rho^{(2)}_d \L(V_d \otimes V_d^\dagger\R) \ket{z,z}}{\pdiagz}~,\label{eq:double_ave_FXEBd}
\end{align}
where $V_d \equiv \sum_E \ketbra{E}{E}V\ketbra{E}{E}$ is the error operator \textit{dephased} in the energy basis, or the time-averaged operator $\Ebb_\tau V(\tau)$. We interpret the first term --- the dominant contribution to \eqref{eq:double_ave_FXEBd} --- as the average fidelity of states $\ket{\tilde{z}} \propto \sqrt{\rho_d}\ket{z}$ after the error operator $V_d$:
\begin{equation}
    \sum_z \frac{\bra{z} \rho_d V_d \ketbra{z}{z} \rho_d V^\dagger_d \ket{z}}{\pdiagz} = \sum_z \frac{\bra{z} \sqrt{\rho_d} V_d \sqrt{\rho_d} \ketbra{z}{z} \sqrt{\rho_d} V^\dagger_d \sqrt{\rho_d} \ket{z}}{\pdiagz} = \sum_z \pdiagz \abs{\bra{\tilde{z}} V_d \ket{\tilde{z}}}^2~,
\end{equation}
$\ket{\tilde{z}} \equiv \sqrt{\rho_d}\ket{z}/\sqrt{\pdiagz}$ is a normalized wavefunction, after a transformation of a  computational basis state. We interpret this transformation as an imaginary time evolution generated by the modular Hamiltonian of $\rho$. We now interpret $F_d$ as an average fidelity $\abs{\bra{\tilde{z}}V_d\ket{\tilde{z}}}^2$ of an error operator $V_d$ is applied to $\ket{\tilde{z}}$.

We want to relate $\mathbb{E}_\tau \mathbb{E}_t\L[\FXEBd(t,\tau)\R]$ to the fidelity $\Ebb_t \L[F(t)\R]$ averaged over the time $t$ that the error occurred. This is given by:
\begin{align}
    \Ebb_t \L[F(t)\R] &= \Ebb_t \L[\abs{\bra{\Psi(t)}V\ket{\Psi(t)}}^2\R] = \abs{\tr(\rho_d V)}^2 + \tr(\rho_d V\rho_d V^\dagger) - \tr[\rho_d^{(2)} (V\otimes V^\dagger)]~, \label{eq:ave_F}
\end{align}
with the terms ordered by (expected) magnitude.

In order to relate Eq.~\eqref{eq:ave_FXEBd} and \eqref{eq:ave_F}, we make two observations: (i) for a product initial state $\ket{\Psi_0}$ and local Hamiltonian $H$, $\rho_d$ approaches a microcanonical state since its energy variance vanishes relative to the size of the spectrum. This is because $H$ is local, implying that $\tr(H^2) \propto N^2$. Therefore, the eigenvalues of $H$ are spread over an energies scaling with total system size $N$. Meanwhile, (ii) by the locality of $H$ the uncertainty in energy of a product state is of order $\sqrt{N}$: $\Delta E^2 \equiv \bra{\Psi_0} H^2 \ket{\Psi_0} - \bra{\Psi_0} H \ket{\Psi_0}^2 \propto N$. We model the distribution of energies as a Gaussian distribution with mean $\bar{E} \equiv \bra{\Psi_0} H \ket{\Psi_0}$ and variance $\Delta E^2$: 
\begin{equation}
\sum_{E=\bar{E}-dE/2}^{\bar{E}+dE/2}\abs{\braket{E}{\Psi_0}}^2 \approx \rho\L(\bar{E}\R)\overline{\abs{\braket{E}{\Psi_0}}^2} dE \approx \frac{1}{\Delta E \sqrt{2 \pi}}\exp(-\frac{(E-\bar{E})^2}{2(\Delta E)^2}) dE~, \label{eq:Gaussian_ETH_estimate}
\end{equation}
where $\rho(\bar{E})$ is the density of states around $\bar{E}$. We define this such that $\int \rho(E) dE = D$, such that $\rho(\bar{E}) = \exp(cN)$ for some $c>0$. Eq.~\eqref{eq:Gaussian_ETH_estimate} also gives the estimate $\sum_{E=\bar{E}-dE/2}^{\bar{E}+dE/2} \abs{\braket{E}{\Psi_0}}^4 \propto 1/[\rho(\bar{E}) \Delta E] \sim \exp(-cN)$ (omitting polynomial factors of $N$).

 (iii) Lastly, the Eigenstate Thermalization Hypothesis (ETH) states that the diagonal elements $\bra{E}V\ket{E}$ of a local operator $V$ are well described by a smooth function of $E$, $\bra{E}V\ket{E} \approx v(E)$. The off-diagonal elements are exponentially smaller: $\abs{\bra{E}V\ket{E'}}^2 \approx f(E-E')\langle V^2 \rangle_\text{micro} \exp(-S(\bar{E})/2)$, for some function $f$ which is $O(1)$ when the separation $E-E'$ is small, and $\bar{E} = (E+E')/2$. $\langle V^2 \rangle_\text{micro}$ is the microcanonical value of $V^2$ near $\bar{E}$, and $S(\bar{E}) = -\ln(\rho(\bar{E}))$ is the thermodynamic entropy~\cite{deutsch2018eigenstate}.
Then the expression \eqref{eq:ave_F} for the fidelity can be approximated as follows: 
\begin{align}
    \Ebb_t \L[F\R] &~= \abs{\tr(\rho_d V)}^2 + \tr(\rho_d V\rho_d V^\dagger) - \tr(\rho_d^{(2)} (V\otimes V^\dagger))\\
    &~= \big\vert\sum_E \bra{E}V\ket{E} \abs{\braket{E}{\Psi_0}}^2 \big\vert^2 + \sum_{E,E'}\bra{E}V\ketbra{E'}{E'}V^\dagger\ket{E} \abs{\braket{E}{\Psi_0}}^2 \abs{\braket{E'}{\Psi_0}}^2  - \sum_{E}\abs{\bra{E}V\ket{E}}^2 \abs{\braket{E}{\Psi_0}}^4 \nonumber\\
    &\overset{\text{ETH}}{\approx} \abs{v(\bar{E})}^2 +  \langle V^2 \rangle_\text{micro} \exp(-c'N)~,
\end{align}
for some constant $c'>0$.  

Meanwhile, to analyze $\mathbb{E}_\tau \mathbb{E}_t\L[\FXEBd(t,\tau)\R]$, we approximate $\abs{\bra{z}\rho_d V_d \ket{z}} = \sum_E \abs{\braket{z}{E}\braket{E}{\Psi_0}}^2 \abs{\bra{E}V\ket{E}}  \overset{\text{ETH}}{\approx} \overline{\bra{E}V\ket{E}}_{\bar{E}} \sum_E \abs{\braket{z}{E}\braket{E}{\Psi_0}}^2 = \vert v(\bar{E})\vert  \pdiagz$. Then
\begin{align}
    \mathbb{E}_\tau \mathbb{E}_t\L[\FXEBd(t,\tau)\R] &~= \sum_z \frac{\bra{z} \rho_d V_d \ketbra{z}{z} \rho_d V^\dagger_d \ket{z}}{\pdiagz} -\sum_z \frac{\bra{z,z} \rho^{(2)}_d \L(V_d \otimes V_d^\dagger\R) \ket{z,z}}{\pdiagz}~,\label{eq:double_ave_FXEBd}\\
    &~= \sum_z \frac{1}{\pdiagz} \Big[\vert\sum_{E} \abs{\braket{z}{E} \braket{E}{\Psi_0}}^2 \bra{E}V\ket{E}\vert^2 - \sum_E  \abs{\braket{z}{E}\braket{E}{\Psi_0}}^4 \abs{\bra{E}V\ket{E}}^2 \Big]\\
    &\overset{\text{ETH}}{\approx} \sum_z \frac{\abs{v(\bar{E})}^2}{\pdiagz} \Big[\Big( \sum_{E} \abs{\braket{z}{E} \braket{E}{\Psi_0}}^2 \Big)^2 - \sum_E  \abs{\braket{z}{E}\braket{E}{\Psi_0}}^4 \Big]\\
    &~= \abs{v(\bar{E})}^2(1-D_\beta^{-1})
\end{align}

Since $D_\beta^{-1}$ is also typically exponentially small in system size, we conclude that $\abs{\mathbb{E}_\tau \mathbb{E}_t\L[\FXEBd(t,\tau)\R] - \Ebb_t \L[F(t)\R] } = O(\exp[-dN])$, for some $d>0$. This is our desired result, obtained here through the ETH. 

{\subsubsection{Extensions beyond local errors}
The above argument shows that in the presence of local coherent errors, $F_d$ faithfully estimates $F$ in the long time limit. This can be immediately extended to incoherent errors by averaging over coherent errors. By unraveling an error channel into quantum trajectories, these results also apply to stochastic, local error channels, as long as the error rate is sufficiently low (numerically investigated in Fig.~\ref{fig:performance_metrics}). 

Finally, we argue that these results also apply to coherent Hamiltonian errors, i.e.~quench evolution by different Hamiltonian parameter, utilized in our process benchmarking examples. Time evolution under a slightly perturbed local Hamiltonian can be expanded into a coherent sum of paths representing evolution by the original Hamiltonian, weakly interspersed by the perturbation term. This is similar to a stochastic error channel, except we can have coherent interference between different paths. We argue that this interference effect from off-diagonal terms is negligible: different paths have vanishing overlap and essentially uncorrelated speckle patterns, and their sum is further suppressed by their random phases. An analogous argument applies to target state benchmarking within a ground state phase diagram [Fig.~\ref{fig:target_state_benchmarking}], which can be understood as applying imaginary-time evolution (cooling) between states. This effect is perturbative and can also be decomposed into paths of local excitations, as long as the two states are sufficiently nearby.}

\subsection{Variance over time}
The variance over time utilizes the fourth Hamiltonian twirling identity. The proof of Theorem~\ref{thm:bitstring_PT} estimates this as $O(\tr[\rho_d^2])$. We explicitly compute the prefactors of the expected leading order contribution arising from the simplest energy-conserving terms, illustrated in Fig.~\ref{fig:thm_derivation}(b):
\begin{align}
    \text{var}_t\L[\Fidd\R] &= \Ebb_t \L[\Fidd(t)^2\R] - \Ebb_t\L[\Fidd(t)\R]^2=  16\L[\cancelto{0}{\tr(\rho_d^2) - \tr(\rho_d^{(2)})}\R] + 4 \sum_{z,z'} \frac{\abs{\bra{z}\rho_d\ket{z'}}^4}{\pdiagz \pdiag{z'}}+\cdots~. 
\end{align}

\begin{figure}[tb]
    \centering
    \includegraphics[width=0.9\textwidth]{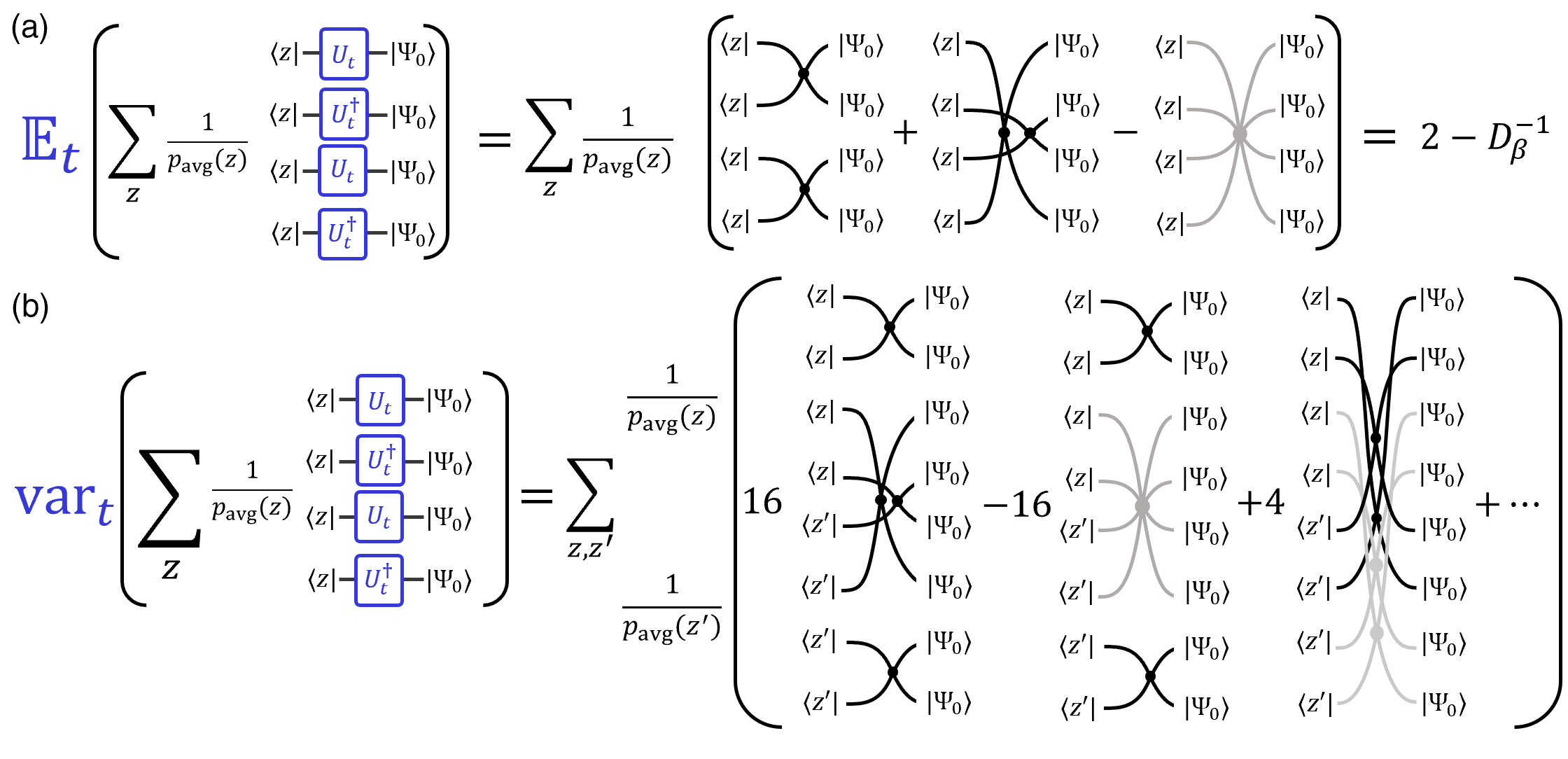}
    \caption{Diagrammatic derivation of terms that contribute to (a) $\Ebb_t\L[\Fidd(t)\R]$ and (b) $\text{var}_t\L[\Fidd(t)\R]$. In (a), the two permutations and collision term give $2-D_\beta^{-1}$. In (b), we list the (expected) leading order contributions. 16 out of the 24 possible permutations each give $\tr(\rho_d^2)$, these are cancelled out by corresponding collision terms: representative permutations and collision terms are illustrated here. Another four permutations give the leading order contribution to $\text{var}_t\L[\Fidd(t)\R] = 4 \sum_{z,z'} \abs{\bra{z}\rho_d\ket{z'}}^4/(\pdiagz \pdiag{z'}) \leq 4 \tr(\rho_d^2)$.}
    \label{fig:thm_derivation}
\end{figure}

\begin{figure}[tbp]
    \centering
    \includegraphics[width=0.9\textwidth]{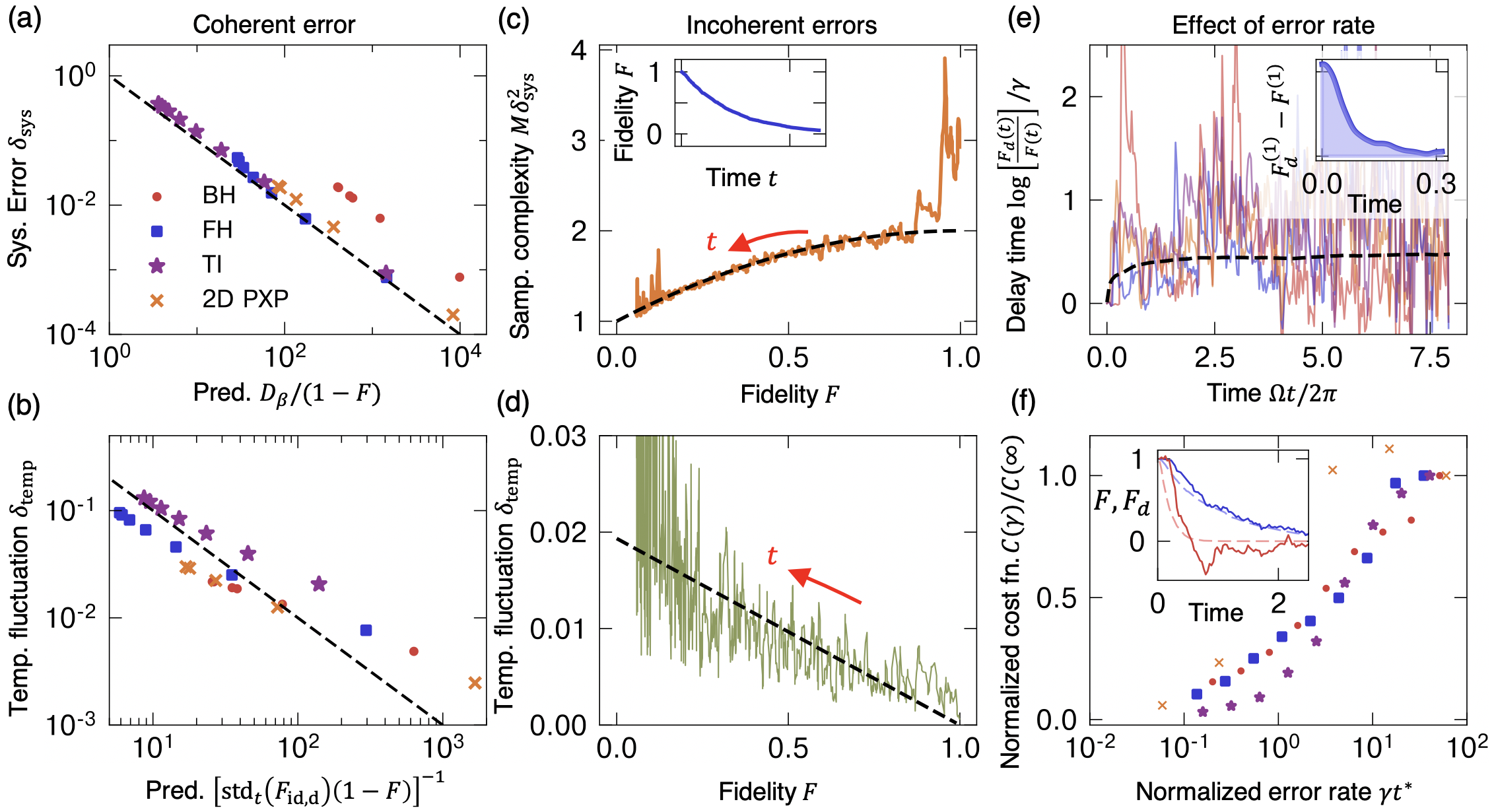}
    \caption{Performance metrics for $F_d$ under single and continuous errors. In all cases, our numerical results quantitatively agree with our theoretical predictions (black dashed lines), with all parameters numerically extracted from independent numerical data. (a,b) After a single error, the (a) systematic error $\delta_\text{sys.}$ and (b) temporal standard deviation $\delta_\text{temp.}$ quantitatively agrees with the ansatz $\delta_\text{sys.} = (1-F)D_\beta^{-1}$ and $\delta_\text{temp.} = \sqrt{5/4}(1-F) \text{std}_t\L[F_\text{id,d}\R]$, across all four systems studied: a Bose-Hubbard model (red), Fermi-Hubbard model (blue), trapped-ion spin chain  Hamiltonian (purple), and 2D Rydberg model (orange). (c,d) In the presence of continuous errors, $F$ decays with time from 1 to 0 (inset). Therefore, we can test our predictions for the dependence of our quantities of interest---the sample complexity (c), and the temporal fluctuations (d)---on the fidelity $F$ in a single time series of open system dynamics. We present such a time series for a Bose-Hubbard model with 8 bosons on 8 sites, where we plot the (c) sample complexity at each point in time and the (d) temporal fluctuations in a narrow time window against the fidelity $F$ at that point in time. After a short thermalization time, the sample complexity follows our universal prediction in Eq.~\eqref{eq:sample_complexity_prediction}: $M \delta_\text{stat.}^2 = (1+2F-F^2)$, and the temporal fluctuations satisfy Eq.~\eqref{eq:temp_var_est}: $\delta_\text{temp.} = (1-F) \text{std}_t\L[F_\text{id,d}\R]/2$. (e) Under continuous errors, $F_d(t)$ will appear to lag behind $F(t)$ with a delay time. In particular, the rescaled curves of $F_d/F$ for a range of error rates $\gamma=0.0005$ to $\gamma=0.005$ collapse, and are close to our prediction in Eq.~\eqref{eq:delay_time}, obtained from the single-error response curve (inset). (f) At low error rates, $F_d$ estimates $F$ well, but they deviate significantly at high error rates. Inset: Across all platforms, a normalized cost function $C(\gamma)/C(\infty)$~\eqref{eq:cost_function} shows a crossover when the average time between errors is less than an error autocorrelation time $t^*$~\eqref{eq:autocorrelation_time}.} 
    \label{fig:performance_metrics}
\end{figure}
 While this expression allows us to theoretically understand the expected the magnitude of $\text{var}_t\L[\Fidd\R]$, in practice we numerically evaluate $\text{var}_t\L[\Fidd\R]$ and investigate its relation to $\text{var}_t\L[F_d\R]$. In the presence of errors, the variance of $\FXEBd$ can be calculated using Eq.~\eqref{eq:p_pperp_approx}. This gives $\text{var}_t\L[\FXEBd \R] \approx F^2 \text{var}_t \L[\Fidd\R]$ for incoherent errors. In the case of a \textit{single, coherent error}, we must include the coherence of $\tilde{p}_\perp^2$: $\langle \tilde{p}_\perp^2 \rangle \approx 2$ and we obtain  $\text{var}_t\L[\FXEBd \R] \approx \L[F^2 + (1-F)^2\R] \text{var}_t \L[\Fidd\R]$ instead. In both cases, we also have $\text{cov}_t\L[\FXEBd,\Fidd\R] \approx F \text{var}_t \L[\Fidd\R]$. 

Performing the same Taylor expansion as in Eq.~\eqref{eq:taylor_expansion},
\begin{align}
    \text{var}\L[\frac{X}{Y}\R] \approx \frac{\Ebb[X]^2}{\Ebb[Y]^2} \L[\frac{\text{var}[X]}{\Ebb[X]^2} + \frac{\text{var}[Y]}{\Ebb[Y]^2} - 2 \frac{\text{cov}[X,Y]}{\Ebb[X] \Ebb[Y]}\R]\,
\end{align}
gives the estimate
\begin{align}
    \delta^2_\textrm{temp} = \text{var}_t\L[F_d\R] &\approx \frac{1}{4} (1-F)^2 \text{var}_t\L[\Fidd\R]~\text{(open system dynamics)}
    \label{eq:temp_var_est}\\
    &\approx \frac{5}{4} (1-F)^2 \text{var}_t\L[\Fidd\R]~\text{(coherent errors)}
\end{align}
Here, the larger fluctuations for a single coherent error are associated with the fact that it is pure. We numerically evaluate $\text{var}_t \L[\Fidd\R]^{-1}$, and compare our predictions for $\text{var}_t \L[F_d\R]$ in terms of $\text{var}_t \L[\Fidd\R]$ and $F$ in Fig.~3(c) of the main text and Fig.~\ref{fig:performance_metrics}(b,d) (respectively single error and open system dynamics).

\subsection{Sample complexity} \label{app:sample_complexity}
The sample complexity of $F_d$ is related to the variance of the estimator $\hat{F}_d$ with a finite number of samples $M$. The variance $\delta_\text{stat}^2$ is given by 
\begin{equation}
    \delta_\text{stat}^2\equiv\text{var}\big[\hat{F}_d\big] = \frac{4 \L(\sum_z q(z) \tilde{p}(z)^2  - \L[\sum_z q(z) \tilde{p}(z)\R]^2 \R)}{M\L[\sum_z \pdiagz \tilde{p}(z)^2\R]^2} \overset{\text{large } t}{\approx} \frac{1}{M}\L(\sum_z q(z) \tilde{p}(z)^2  - \L[\sum_z q(z) \tilde{p}(z)\R]^2 \R) 
\end{equation}
We estimate this by computing computing its time-averaged value, with the third Hamiltonian twirling identity, which gives $\Ebb_t \L[\sum_z \pdiagz \tilde{p}(z,t)^3\R] = 3!-(2!)^2 + \cdots \approx 2$. Our ansatz for $\tilde{q}(z,t)$~\eqref{eq:p_pperp_approx} gives:
\begin{align}
    \Ebb_t \L[M \delta^2_\text{stat} \R] = \Ebb_t\L[ \sum_z q(z,t)\tilde{p}(z,t)^2 - \L(\sum_z q(z,t)\tilde{p}(z,t)\R)^2\R]\approx 1+2F-F^2~. \label{eq:sample_complexity_prediction}
\end{align}
This non-trivial prediction is verified in Fig.~\ref{fig:performance_metrics}(c) and Fig.~3(d) of the main text.

\subsection{Moderate error rate: Effects of delay time $\tau_d$}
\label{app:delay time}
Here we argue that under open system dynamics with a moderate error rate, the ratio $F_d/F$ approaches a constant $F_d/F \approx \exp( \gamma \tau_d )$, where $\gamma$ is the \textit{total} error-rate and $\tau_d$ is a response time independent of system size. When the fidelity $F$ shows a simple exponential decay, the delay time leads to $F_d$ lagging behind $F$ by a constant time $\tau_d$. This is in contrast to the constant difference $F_d-F$ seen under a single error, which is exponentially small in system size. We quantify this delay time effect by introducing the notion of a response function of $F_d$ after a single error, $F_d^{(1)}$ , and making the assumption that the effects of multiple errors on $F_d$ are independent and factorize.

We assume that an error operator $X$ can occur at any time during the open system dynamics. Without loss of generality, we normalize $X$ so it does not change the norm of a typical state: $\bra{\psi(t)} X^\dagger X \ket{\psi(t)} \approx 1$. The fidelity $F$ as a function of time is:
\begin{align}
    F(t) &\approx p_{ne}(t) + \int_0^t d t_1 p_e(t_1) \abs{\langle X(t_1) \rangle}^2 + \int_0^t  d t_1 \int_{t_1}^t d t_2 p_e(t_1,t_2) \abs{\langle X(t_2) X(t_1) \rangle}^2 + \cdots \nonumber\\
    &\approx p_{ne} + \sum_{n=1}^\infty \frac{1}{n!} \L[\int_0^t d t_i p_e(t_i) \abs{\langle X(t_i) \rangle}^2 \R]^n = \exp[\L(F_\beta-1\R)\gamma t] \label{eq:Ft}~,
\end{align}
where we denote $p_{ne}(t) \propto \exp(-\gamma t)$ as the probability of not having an error from time $0$ to $t$ and $p_e(t_1,\cdots,t_n) \propto \gamma^n  \exp(-\gamma t)$ as the probability density of errors occurring at times $t_1,\cdots,t_n$. If $n$ errors occur during a trajectory, the pure state asspcoated with this trajectory will have a fidelity $\abs{\bra{\psi(t_n)} X\cdots X e^{-iH(t_2-t_1)} X \ket{\psi(t_1)}}^2$. As long as the error rate is low enough that errors are typically far apart in time, we assume that this expression factorizes: $F =  \abs{\langle X(t_n) \cdots X(t_1) \rangle}^2 \approx \abs{\langle X \rangle}^{2n}_\beta \equiv F_\beta^n$. Here the quantity $F_\beta \equiv \abs{\langle X \rangle}^2_\beta$ should be interpreted as the fidelity after a \textit{single error} $X$ on a typical thermal state. 

Meanwhile, we can estimate how $F_d$ behaves under open system dynamics in terms of its behavior in response to a single error. We denote $F^{(1)}_d(\tau)$ as the average $F_d$ value at time $\tau$ after a single error. 
\begin{align}
    F_d(t) &\approx p_{ne}(t) + \int_0^t d t_1 p_e(t_1) F_d(t-t_1) + \int_0^t \int_{t_1}^t d t_1 d t_2 p_e(t_1,t_2) F_d(t-t_2,t-t_1) + \cdots \nonumber\\
    &\approx p_{ne} + \sum_{n=1}^\infty \frac{1}{n!} \L[\int_0^t d t_i p_e(t_i) F_d^{(1)}(t-t_i)  \R]^n = \exp[\gamma \int_0^t d\tau \L(F_d^{(1)}(\tau)-1\R)] \label{eq:Fdt}
\end{align}
where we again assume that the value of $F_d$ after $n$ errors factorizes as: $F_d(t_1,\cdots,t_n)\approx \prod_i F_d^{(1)}(t-t_i)$.

Combining Eqs.~\eqref{eq:Ft},~\eqref{eq:Fdt}, we have a simple expression for the ratios:
\begin{equation}
    \frac{F_d(t)}{F(t)} \approx \exp[\gamma \int_0^t d\tau \left(F_d^{(1)}(\tau)-F_\beta\right)] \approx \exp[\gamma \tau_d]~, \label{eq:delay_time}
\end{equation}
where the quantity $\tau_d \equiv \int_0^\infty d\tau \left(F_d^{(1)}(\tau)-F_\beta\right) dt $ has a natural interpretation as the \textit{total area under the response curve} $F_d^{(1)}(\tau)-F_\beta$ (Fig.~\ref{fig:performance_metrics}(e) inset) --- if we ignore the (exponentially small) systematic error $\bar{F}_d^{(1)} - F$, this total area is a constant. This prediction is supported in Fig.~\ref{fig:performance_metrics}(e), where we observe the ratio $F_d(t)/F$ saturating at our predicted value for $\exp[\gamma \tau_d]$, with $\tau_d$ extracted from the corresponding single error curve $F_d^{(1)}(t)$.

\subsection{Large error rate: Critical error rate and operator scrambling time}
As the error rate increases, there is a critical rate at which $F_d$ shows significant deviations from $F$ [Fig.~\ref{fig:performance_metrics}(f) inset]. 
Naively, we would expect that this error rate is solely determined by the delay time discussed above.
However, this is not necessarily the case. The delay time is a property of the relaxation dynamics of a local perturbation occurring after the state has reached thermal equilibrium. When the error rate is substantially larger than the local relaxation time, the fidelity becomes very small even before the state has a chance to reach thermal equilibrium. Therefore, the delay time might not be the appropriate timescale for the critical error rate. Instead, we seek to characterize a different timescale relevant in this non-equilibrium dynamics. We hypothesize that a suitable timescale is the error operator \textit{autocorrelation time} --- intuitively an operator scrambling time --- $t^*$ with respect to the nonequilibrium initial state. Our numerical results support our hypothesis: we find that the critical error rate is related to  $t^*$.

We define the autocorrelation function of the error operator $X_j$, for a given initial state $\ket{\Psi_0}$ and quench Hamiltonian $H$, as:
\begin{equation}
    A_j(t) = \frac{\abs{\bra{\Psi_0} e^{iHt} X_j e^{-i H t} X_j \ket{\Psi_0}}^2}{\bra{\Psi_0} e^{iHt} X_j^2 e^{-i H t}\ketbra{\Psi_0}{\Psi_0}X_j^2\ket{\Psi_0}}~,
\end{equation}
where we have chosen this normalization such that $0 \leq A(t) \leq 1$. The states $X_j\ket{\Psi(t)}$ are generally not normalized because the error operators $X_j$ may not be unitary (for example, in the Bose-Hubbard model, the errors are associated with photon scattering, and we take $X_j$ to be $n_j$). We define the autocorrelation time as:
\begin{equation}
    t^*_j \equiv \frac{\int_0^\infty dt \abs{A_j(t)-A_j(\infty)}}{1-A_j(\infty)}~,
    \label{eq:autocorrelation_time}
\end{equation}
where $A_j(\infty)$ is the infinite time value $A_j(\infty) = \lim_{t\rightarrow \infty} A_j(t)$.
This can be analytically evaluated from the second Hamiltonian twirling identity [Eq.~\eqref{eq:Ham_2_design}]. We take the overall operator scrambling time as the average of $t^*_j$ over error operators $X_j$. This timescale is independent of system size, unlike other timescales such as the time for the entanglement entropy to saturate, which scales with system size.

Importantly, we find that the critical error rate is associated with the non-equilibrium dynamics of the initial state, as opposed to the dynamics after the state has reached thermal equilibrium. To numerically demonstrate this, we initialize the noisy dynamics on a time-evolved state $e^{-iHt}\ket{\Psi_0}$ for sufficiently large $t$, after the state has thermalized. We find that in the regime of error rates studied, there is no critical change in the accuracy of $F_d$, merely a constant delay time between $F_d$ and $F$, as predicted in Section~\ref{app:delay time}, which translates to a smooth increase in the relative error $F_d/F$.
This is analogous to Ref.~\cite{dalzell2021random,gao2021limitations}, in which the error rate in random unitary circuits was found to have the largest effect at the start and the end of the evolution. 

We define a cost function \begin{equation}
    C(\gamma) \equiv \int_0^T dt \abs{F_d(t)-F(t)}~, \label{eq:cost_function}
\end{equation}
that captures the total deviation between $F_d$ and $F$ over time. In Fig.~\ref{fig:performance_metrics}(f), we find that the rescaled cost function $C(\gamma)/C(\infty)$ shows a crossover from 0 to 1 at $\gamma t^* \sim 1$, indicating that $t^*$ is the relevant timescale for our setting. We choose such a normalization such that the curves from different simulator platforms can be plotted on the same graph: they demonstrate a cross over in a common regime $\gamma t^* \sim 1$.

\subsection{Symmetry Resolution}
In the main text, we described how $F_d$ naturally incorporates symmetries of the quench Hamiltonian. Here, we discuss this in more detail and provide an explicit example. If the initial state belongs to a single symmetry sector $s$, the quenched state will remain in that sector. Assume for simplicity that the state is Haar random in $s$. The simpler formulae $F_\text{XEB}$ and $F_c$ would be appropriate if our measurements $z$ were also states in $s$. However, this is generically not the case --- if the outcome $z$ does not have a well-defined symmetry quantum number, it should be weighted by its overlap with $s$. The factor $\pdiagz$ automatically weights $z$ appropriately, without explicit knowledge of the symmetry. 

As an illustration, consider spatial inversion symmetry $\mathcal{I}$ on a lattice. The even parity sector has basis states $\ket{e_z} = \mathcal{N}(\ket{z} + \ket{\bar{z}})$, where $\bar{z}$ is the mirrored configuration of $z$. If $\bar{z} = z$, $\mathcal{N} = 1/2$, otherwise $\mathcal{N} = 1/\sqrt{2}$. 
Consider the optimal case where $\ket{\Psi_0}$ is a random even-parity state and the dynamics is a random unitary that conserves parity. The universal statistics would be apparent if we could measure in the $\{\ket{e_z}\}$ basis --- a challenging task. Instead, outcome configurations $z$ have unequal average probabilities: $\pdiagz = 1/D_s$ if $z = \bar{z}$ and $\pdiagz = 1/(2D_s)$ otherwise, where $D_s$ is the dimension of the symmetry sector. While $p(z)$ does not follow the Porter-Thomas distribution, $\tilde{p}(z)$ does. Therefore, $F_d$ correctly estimates $F$, while $F_c$ unequally weights some basis states. This argument generalizes to other symmetries such as translational invariance. Since $\pdiagz$ automatically accounts for symmetries, our conclusions hold even in the presence of extensively many symmetries, as in the case of integrable systems. Finally, in a special case where the readout basis states have well-defined symmetry quantum number, $F_d$ does not estimate $F$. This is discussed in Section~\ref{sec:symmetry_nonergodicity}.

\subsection{Alternative fidelity estimator $F_e$}
\label{sec:Fe_summary}
In the following section, we use the random MPS model to derive an alternative fidelity estimator 
\begin{equation}
 F_e \equiv \frac{\sum_z q(z)\tilde{p}(z)-1}{\sum_z p(z) \tilde{p}(z)-1} = \frac{\FXEBd}{\Fidd}~.   
\end{equation}
Here, we summarize some of its properties. While the temporal fluctuations and sample complexity of $F_e$ are similar to those of $F_d$, the systematic error (after a long time) is smaller.

(i) Systematic error. The analysis in Section~\ref{sec:systematic_error} indicates that the systematic error after a single error $\delta_\text{sys} = \Ebb_t F_d(t) - F = O(D_\beta^{-1})$. Repeating the same analysis for $F_e$ indicates that the systematic error vanishes for $F_e$, to first order. This indicates that $F_e$ is less sensitive to small values of $D_\beta$, and hence may perform better in quench dynamics with low effective temperature. This is supported in Fig.~\ref{fig:failure_cases1}(e,f). More fundamentally, $F_e$ is more robust against deviations of the denominator $\sum_z \pdiagz \tilde{p}(z)^2$ from 2. For example, in the presence of particle hole symmetry, this denominator can be 3. $F_d$ can be appropriately modified to take this into account, but $F_e$ estimates the fidelity without any modification [Fig.~\ref{fig:failure_cases1}(a,b)].

(ii) Non-local errors. Our analysis based on random matrix product states indicates that $F_e$ will estimate $F$ for highly non-local errors. This is numerically supported in Fig.~\ref{fig:failure_cases2}(f). 

(iii) Short-time response. A tradeoff of $F_e$ is that it typically has a longer response time than $F_d$ to learn the fidelity $F$. This is demonstrated in Fig.~\ref{fig:failure_cases2}(e) and Fig.~\ref{fig:target_state_benchmarking_local}(b). 

\section{Detailed performance analysis for short quench dynamics}\label{app:random_MPS}

\subsection{Overview: random matrix product states}

\begin{figure*}[tb]
    \centering
    \includegraphics[width=0.9\textwidth]{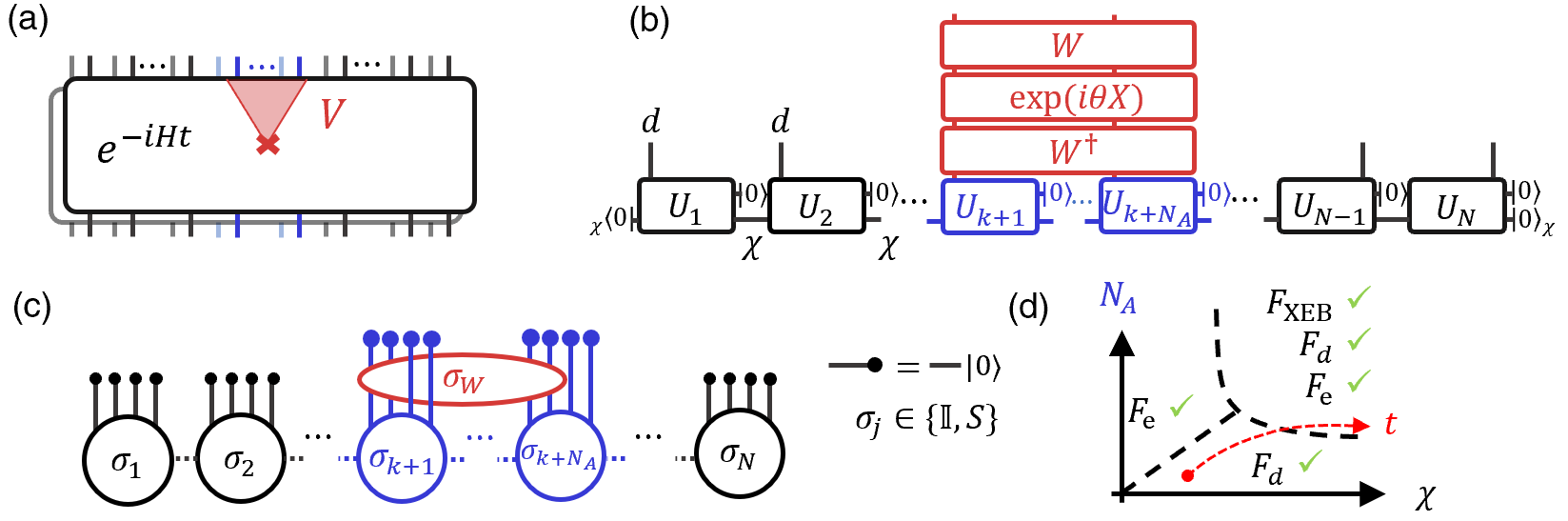}
    \caption{Random matrix product state model. (a) We consider two states evolved by a short quench. In one copy, an error has occurred, which spreads to a small subsystem $A$, and no error has occured in the other copy. (b) A toy model for the non-equilibrium state is the \textit{random MPS} --- a matrix product state with a random $\chi d \times \chi d$ matrix at every site, with $\chi$ the MPS bond dimension, and $d$ the local Hilbert space. The error $V$ is also modeled with a random unitary $W$ and an operator $\exp(i\theta X)$. (c) The expression for $F_d$, averaged over random unitaries is mapped to the partition function of an effective ferromagnetic Ising spin chain, with Ising degrees of freedom $\sigma_j, \sigma_W \in \{\id, \swap\}$. This partition function is given in Eq.~\eqref{eq:random_MPS_prediction}. (d) We predict regimes of validity for our benchmarking formulae $F_d, F_e$, and $F_\text{XEB}$, depending on the error operator size $N_A$ and MPS bond dimension $\chi$. A typical quench is always in the regime of validity of $F_c$ (red arrow).}
    \label{fig:random_MPS}
\end{figure*}
The infinite-time averaging in Section~\ref{app:Fid_FXEB_Fd} tells us that our fidelity estimators work at late times. Our formulae are also designed to work at early times, before the state has reached equilibrium. To capture this early time physics, we introduce a toy model, the \textit{random matrix product state}.

We model the state $\ket{\Psi_0(t)}$ as a matrix product state (MPS) with bond dimension $\chi$ and local Hilbert space dimension $d$. The MPS is constructed by choosing a $\chi d \times \chi d$ random unitary $U_j$ on each site $j$, as illustrated in Fig.~\ref{fig:random_MPS}(b). At every site, a $d$ degree of freedom is eliminated by contracting to the state $\ket{0}$. At the boundary sites, an additional $\chi$ degree of freedom is removed by contracting to a state $\ket{0}_\chi$ on the bond-dimension space. This provides a toy model for a state with entanglement $S \sim \log(\chi)$. 

We model the error operator $V$ as $V = W \exp(i \theta X) W^\dagger$, where $V$ is taken to be supported on some region $A$, and $W$ is a random unitary on $\mathcal{H}_A$. $\exp(i \theta X)$ is an operator whose precise form will be irrelevant. However, the parameter $\theta$ measures the strength of the coherent error, and can be used to tune the fidelity between the perturbed and initial states. The random unitary $W$ models the scrambling nature of $V(\tau)$ while making the calculation tractable using techniques of averaging over random unitaries~\cite{collins2016random}.
 
This toy model captures what we believe to be the essential physics: entanglement and operator growth. In particular, it is well known that both entanglement entropy and operator size grow linearly with time. This corresponds to:
\begin{align}
    \chi \sim \exp(v_S t)~,~N_A \sim v t~,
\end{align}
where $v_S$ is an entanglement velocity, and $v$ is an operator spreading velocity~\cite{nahum2018operator,khemani2018operator}.

Our calculations give us expressions for how $F_\text{XEB} \equiv D^{-1}\sum_z q(z) p(z)-1$ and $F_\text{id} \equiv D^{-1} \sum_z p(z)^2 - 1$, and the fidelity $F$ depend on $N_A$ and $\chi$. Because of the random unitary nature, there are no systematic patterns in $p(z,t)$: $\pdiagz=1/D$. Therefore, in this case $F_d = F_c$, $\FXEBd=F_\text{XEB}$ and $\Fidd=F_\text{id}$. Furthermore, we will also consider $F_e  = F_\text{XEB}/F_\text{id}$.
Our main result is:
\begin{align}
    &\Ebb_{W,\mathbf{U}}\L[F_\text{XEB}+1\R] \approx  \Ebb_{\mathbf{U}}\L[F_\text{id} +1\R] \L(F + \exp[-\frac{d-1}{d} \frac{N_A}{\chi}]\frac{1-F}{2}\R)~,\label{eq:random_MPS_prediction}
\end{align}
where $\Ebb_{W,\mathbf{U}}$ denotes averaging over the random unitaries $W$ and $\mathbf{U} = \{U_j\}$, and $\Ebb_{\mathbf{U}}$ denotes averaging over $\mathbf{U}$.

The second term describes a crossover between two regimes: $\chi \ll N_A$ and $\chi \gg N_A$. In the former regime, $\exp[-\frac{d-1}{d} \frac{N_A}{\chi}] \approx 0$, and $F_\text{XEB}+1 \approx \L(F_\text{id} +1\R) F $. Furthermore since $\chi$ is small, both $F_\text{id}$ and $F_\text{XEB}$ are exponentially large and
\begin{equation}
    F \approx \frac{F_\text{XEB}+1}{F_\text{id} +1} \approx \frac{F_\text{XEB}}{F_\text{id}} = F_e~.
    \label{eq:Frel}
\end{equation}

In the latter regime, $\exp[-\frac{d-1}{d} \frac{N_A}{\chi}] \approx 1$, and $F_\text{XEB}+1 \approx \L(F_\text{id} +1\R)\frac{1+F}{2}$, and 
\begin{equation}
    F \approx 2\frac{F_\text{XEB}+1}{F_\text{id}+1} - 1 = F_d~.
\end{equation}

Both $F_d$ and $F_e\equiv \L[\sum_z \pdiagz \tilde{q}(z)\tilde{p}(z) - 1\R]/\L[\sum_z \pdiagz \tilde{p}(z)^2 - 1\R]$ accurately estimate $F$ at late times. $F_d$ is valid since $\chi \approx D \gg N_A$. However $F_e$ is also valid since $F_\text{id} \approx 1$ and so $F_\text{XEB} \approx F$. These results give the regime of validity of $F_d$ and $F_e$ sketched in Fig.~\ref{fig:random_MPS}(d). Further properties of $F_e$ are discussed in Sec.~\ref{sec:Fe_summary}.

\subsection{Remarks}
The random MPS model qualitatively captures the short time physics. However, it is not quantitatively accurate, at least to first order in $D^{-1}$. The simplest example of this is that the state norm has non-zero variance: $\text{var}_\mathbf{U}\L[ \abs{\braket{\Psi_\mathbf{U}}}^2\R] \neq 0$. As another example, $F_\text{id}$ is, on average, bigger than 1, while $\Ebb_t \L[\Fidd \R] \leq 1$ for Hamiltonian dynamics [Eq.~\eqref{eq:aveFidd}]. Furthermore, the randomness in the MPS does not capture the systematic trends in finite temperature states that lead to non-trivial $\pdiagz$ factors. Therefore, the random MPS model cannot be a qualitative model for non-equilibrium quench-evolved states at either finite or infinite temperature. Nevertheless, the qualitative conclusions of the random MPS model --- in particular the regimes of validity of $F_d$ and $F_e$ --- are consistent with numerical results at finite temperature.

\subsection{Derivation of result}
We model the error operator $V(\tau) = W e^{i\theta X} W^\dagger$, where $W$ is a Haar-random unitary supported on $A$~[Fig.~\ref{fig:random_MPS}(b)]. This models the effect of scrambling and makes the calculation tractable. We first integrate over $W$ (a unitary two-design calculation~\cite{gross2007evenly,collins2016random}), and obtain two terms corresponding to the permutations of two elements: $\sigma_W = \id$ and $\sigma_W = \swap$. We then integrate over $U$ to obtain effective Ising chains for each value of $\sigma_W$: the integration over unitaries $U_j$ leaves a degree of freedom on each sites with two possible states, again corresponding to (local) $\id$ and $\swap$ permutations~[Fig.~\ref{fig:random_MPS}(c)]. These degrees of freedom ``interact'' with their nearest neighbors ferromagnetically: the lowest energy configurations are those with all sites in the $\id$ state or all sites in the $\swap$ state. This ferromagnetic interaction is encoded in the transfer matrix (of the partition function) $T_0$~\eqref{eq:transfer_matrices}. 

The presence of the error operator $V(\tau)$ modifies this effective Ising chain when $\sigma_W = \swap$: it introduces an effective magnetic field on the region $A$ supporting the error.

Our quantities of interest can be expressed in terms of the partition functions of these two Ising chains. We call these partition functions $f_{X,\id}$ and $f_{X,\swap}$, corresponding to the value of $\sigma_W$. For example, the average value of $F_\text{id}+1$ over all possible choices of $U_j$ is exactly the partition function $f_{X,\id}$. The presence of an error changes the value of $F_\text{XEB}+1$. $F_\text{XEB}+1$, again averaged over unitaries $U_j$ and $W$, can be expressed in terms of $f_{X,\id}$, $f_{X,\swap}$, and the (averaged) fidelity $F$.

Our calculation gives the expressions:
\begin{align}
     &\Ebb_{W,\mathbf{U}}\L[F\R] = \frac{\abs{\tr V}^2-1}{D_A^2-1} \Ebb_{\mathbf{U}}\L[\abs{\braket{\Psi_\mathbf{U}}{\Psi_\mathbf{U}}}^2\R]+ \frac{\abs{D_A^2 -\tr V}^2}{D_A\L(D_A^2-1\R)} \Ebb_{\mathbf{U}}\L[\tr(\rho_A^2)\R]\approx \frac{\abs{\tr V}^2-1}{D_A^2-1} \equiv \bar{F}~,\\
    &\Ebb_{W,\mathbf{U}}\L[F_\text{XEB}+1\R]
    = f_{X,\id} \bar{F} + \frac{f_{X,\swap}}{D_A} \L(1-\bar{F}\R)~,\\
    &\mathbb{E}_\mathbf{U} \L[F_\text{id} +1\R] = f_{X,\id} ~,
\end{align}
where the coefficients $f_{X,\id}$ and $f_{X,\swap}$ are the partition functions of the Ising spin models, defined by the transfer matrices:
\begin{align}
    T_0 &= \frac{1}{\chi^2 d^2 - 1}\begin{pmatrix}
    \chi^2d^2-d & \chi d(d-1)\\
    \chi d(d-1) & \chi^2d^2-d 
    \end{pmatrix}~,~T_1 = \begin{pmatrix}
    1 & 0 \\
    0 & \sqrt{d}
    \end{pmatrix} T_0 \begin{pmatrix}
    1 & 0 \\
    0 & \sqrt{d}
    \end{pmatrix}~. \label{eq:transfer_matrices}
\end{align}
These transfer matrices are obtained from the Weingarten coefficients that arise from the integration over random unitaries~\cite{collins2016random}. The partition functions are
\begin{align}
    f_{X,\id} &\equiv \frac{\chi}{\chi+1}\frac{\chi d + d}{\chi d + 1} \begin{pmatrix}
    1 & 1
    \end{pmatrix}  T_0^{N-1} \begin{pmatrix}
    1 \\
    1
    \end{pmatrix} = 2 \frac{\chi}{\chi+1} \L(\frac{\chi d + d}{\chi d + 1} \R)^N~,\\
    f_{X,\swap} &\equiv \frac{\chi}{\chi+1}\frac{\chi d + d}{\chi d + 1}\begin{pmatrix}
    1 & 1
    \end{pmatrix}  T_0^{k-1}\begin{pmatrix}
    1 & 0 \\
    0 & \sqrt{d}
    \end{pmatrix} T_1^{N_A}\begin{pmatrix}
    1 & 0 \\
    0 & \sqrt{d}
    \end{pmatrix} T_0^{N-N_A-k} \begin{pmatrix}
    1 \\
    1
    \end{pmatrix}~.
\end{align}
As a reminder, $f_{X,\id}$ is the partition function of the ferromagnetic effective Ising model, while $f_{X,\swap}$ is the partition function of the same Ising spin chain with an additional (very strong) effective magnetic field applied on region $A$, that effectively locks all the Ising spins in $A$ to the swap state $\mathcal{S}$.

We can simplify $f_{X,\id}$ and $f_{X,\swap}$ by noting that the dominant contributions to $f_{X,\id}$ are the configurations with all sites in the $\id$ state, and all sites in the $\swap$ state. Meanwhile, the dominant contribution to $f_{X,\swap}$ is the configuration with all sites in the $\swap$ state. Furthermore, $T_0$ has eigenvalues  $\frac{\chi d -d}{\chi d - 1}$ and $\frac{\chi d +d}{\chi d +1 }$, while $T_1$ has eigenvalues $\frac{(\chi d)^2 - d^2}{(\chi d)^2 - 1}$ and $d$. Therefore, for large $N_A$, $\frac{f_{X,\swap}}{D_A f_{X,\id}}$ is proportional to the ratio of their largest eigenvalues:
\begin{equation}
    \frac{1}{D_A}\frac{f_{X,\swap}}{f_{X,\id}} \approx \frac{1}{2}\L(\frac{\chi d+ 1}{\chi d + d}\R)^{N_A}\\
    \approx \frac{1}{2} \exp(- \frac{d-1}{d} \frac{N_A}{\chi})~.
\end{equation}
where the factor of a half comes from the breaking of the global Ising $\id \leftrightarrow \swap$ symmetry.

This gives our main result Eq.~\eqref{eq:random_MPS_prediction}, which predicts the relationship between the cross-entropy benchmark $F_\text{XEB}$, fidelity, and the parameters in our model: the operator size $N_A$ and MPS bond dimension $\chi$.

\section{Details of numerical simulations}
\label{app:simulation_details}
Below we summarize the details of the models simulated: their Hamiltonians, initial states, error models, and Hamiltonian parameter values.
\subsection{Bose-Hubbard model}
We numerically simulate quench evolution under the one-dimensional (1D) Bose-Hubbard Hamiltonian~\cite{jaksch1998cold} in the presence of noise, which is often dominated by the photon scattering from optical lattices.
In the quantum trajectory formalism~\cite{daley2014quantum}, such noisy dynamics is described by stochastically applying the particle number operator $\{n_j\}$ at a rate $\gamma$ for each site. An ensemble of such pure state evolution represents a mixed state $\rho(t)$  in open dynamics. The 1D Bose-Hubbard Hamiltonian is:
\begin{equation}
    H_\text{BH} = -\Omega \sum_{j=1}^{N-1} \left(b^\dagger_j b^{\mathstrut}_{j+1} + \text{h.c.} \right) + \frac{U}{2} \sum_{j=1}^N n_j (n_j -1)~,
    \label{eq:BH_Ham}
\end{equation}
where $b^\dagger_j$ is the bosonic creation operator at site $j$, and $n_j = b^\dagger_j b^{\mathstrut}_j$ is the bosonic number operator at site $j$.
In Fig.~1 and 2, we use the parameters  $(\Omega,U) = (1,2.87)$.
To increase the ergodicity of our quench, in Fig.~3,4, we use the parameters $(\Omega,U) = (1,0.5)$, whose ground state is in the superfluid phase. We take as initial state the ``unity filling state" $\ket{1,1,\dots,1}$, which is the ground state deep in the Mott insulating phase $U \gg \Omega$.

Our error operators are $\{n_j\}$, which model the effects of scattering from optical lattice photons~\cite{pichler2010nonequilibrium}. In Fig.~1(b), we use an overall error rate of $\gamma = 0.05$.

 In Fig.~2(a), we show data for a system of 6 bosons on 6 sites, and in Fig.~1(b) and Fig.~2(b,c),
we show data for a system with 10 bosons on 10 sites (with total Hilbert space dimension 92378).

\subsubsection{Disorder Averaging}
In Fig.~3(b-d), we present disorder averaged results for the response curve and the sample complexity of $F_d$ against time. Disorder averaging is done to remove the large fluctuations present at short times in order to clearly observe systematic behavior.

For simplicity, we choose the following ensemble of Bose-Hubbard models with disordered hopping strengths:
\begin{equation}
    H_\text{BH} = - \sum_{j=1}^{N-1} \Omega_j \left(b^\dagger_j b^{\mathstrut}_{j+1} + \text{h.c.} \right) + \frac{U}{2} \sum_{j=1}^N n_j (n_j -1)~,
    \label{eq:BH_Ham_Disorder}
\end{equation}
with $\Omega_j = 0.2 m_j$, for $m_j$ a random integer from 1 to 10. As in  Eq.~\eqref{eq:BH_Ham_Disorder}, we choose $U = 0.5$. We chose to disorder the hopping strengths in order to observe different dynamics at short terms. If, for example, we had chosen a disordered potential $\Delta_j n_j$ instead, due to the homegeneous initial state $\ket{1,\dots,1}$, the resultant states $\ket{\psi(t)}$ would not be significantly different from each other across disorder realizations. Lastly, an error at $n_{\lfloor L/2 \rfloor}$ is applied at a fixed, early time $t=5$. This is because the error operator $n_j$ acts trivially on the initial state $\ket{1,\dots,1}$.

\subsection{Fermi-Hubbard model}
The 1D Fermi-Hubbard model Hamiltonian is:
\begin{equation}
    H_\text{FH} =  \sum_{j=1}^N \L[-\Omega\sum_{\sigma = \uparrow,\downarrow}  \L(c^\dagger_{j,\sigma} c^{\mathstrut}_{j+1,\sigma} + \text{h.c.}\R) + U n_{j,\uparrow} n_{j,\downarrow}\R]~,
\end{equation}
where $c^\dagger_{j,\sigma}$ is the creation operator of a fermion at site $j$, with spin $\sigma \in \{\uparrow, \downarrow\}$, and $n_{j,\sigma} = c^\dagger_{j,\sigma}c^{\mathstrut}_{j,\sigma} \in \{0,1\}$ is the fermion occupation number. In Fig.~2 we use the parameter value $(\Omega,U) = (1,3)$, while in other figures we use$(\Omega,U) = (1,1)$. For any value of $U/\Omega$, this model is integrable by Bethe Ansatz~\cite{essler2005one}, but we numerically observe that it still passes the $2$nd no-resonance condition, after disregarding degeneracies [Eq.~\eqref{eq:k_no_resonance}]. As discussed in Section~\ref{app:Hamil_k_design}, degeneracies can be disregarded for the purposes of Theorem~\ref{thm:bitstring_PT}.

We take as initial state the anti-ferromagnetic state at half filling, $\ket{\uparrow,\downarrow, \dots}$, and as error operators the total fermion occupation $\{n_{j,\uparrow}+n_{j,\downarrow}\}$ at each site. As with the Bose-Hubbard model, this models scattering from optical lattice photons~\cite{sarkar2014light}. In Fig.~1(c), we use a total error rate of $\gamma = 0.2$, in a system with 5 up and 5 down fermions on 10 sites (with total Hilbert space dimension 63504).

\subsection{Trapped ion model}
Our trapped ion model has Hamiltonian:
\begin{equation}
    H_\text{TI} = \sum_{j=1}^{N-1}\sum_{i>j} \L(\Omega/\abs{i-j}^\alpha\R) \sigma^x_i \sigma^x_j + h_z \sum_{j=1}^N \sigma^z_j~,
    \label{eq:trapped_ion_model}
\end{equation}
with $\sigma^x_j$ and $\sigma^z_j$ the Pauli $x$- and $z$-operators on site $j$. $H_\text{TI}$ is adapted from the experiment of Ref.~\cite{zhang2017observation}. Due to the long-ranged interactions, following Ref.~\cite{zhang2017observation} we fix the ``rescaled" parameter value $\tilde{h}_z =  h_z/(N^{-1} \sum_{j = 1}^{N-1} \sum_{i>j} 1/\abs{i-j}^{\alpha})$ to ensure consistent behavior across system sizes, where $\alpha$ is the exponent of the power-law interactions. As in Ref.~\cite{zhang2017observation}, our initial state is the all-downs computational basis state: $\ket{\downarrow,\dots,\downarrow}$. We use the rescaled parameters $(\Omega, \tilde{h}_z,\alpha) = (1,0.7,1)$. In Fig.~1(d), we use on-site spin flip and dephasing errors: $\{\sigma^x_j, \sigma^z_j\}$, with a total error rate of $\gamma = 0.05$ in a system of 16 qubits (with a total Hilbert space dimension of $2^{16} = 65536$).

In Fig. 4(c) we learn the disorder strengths $\Delta_j$ of a disorder term $H_\text{dis} = \sum_j \Delta_j \sigma^z_j$.

\subsection{Rydberg blockaded model}
We simulate a Rydberg blockaded model in two dimensions (2D), as recently experimentally realized in Ref.~\cite{bluvstein2021controlling}. Here each site can be either in the ground $\ket{\downarrow}$ or Rydberg $\ket{\uparrow}$ state, subject to the constraint that there are no two adjacent Rydberg excitations $\ket{\uparrow}$. Its Hamiltonian is:
\begin{equation}
    H_\text{PXP} = \Omega \sum_{j=1}^N PX_j + \delta \sum_j n_j~,
\end{equation}
with $PX_j \equiv \L[\bigotimes_{\langle i, j\rangle} \ketbra{\downarrow}{\downarrow}_i\R] \sigma^x_j$ the projected spin flip operator --- a spin flip on site $j$ is allowed only if all the nearest neighbors $i$ of site $j$ are in state $\ket{\downarrow}$.

We use the all-downs state $\ket{\downarrow,\dots,\downarrow}$ as our initial state. At zero detuning $\delta=0$, this would be an infinite temperature state ($\langle H \rangle = \tr(H)/D$). In Fig. 2 we use the parameter values $(\Omega,\delta) = (1,1.5)$ and in all other figures we use the parameters $(\Omega,\delta) = (1,0.7)$ to tune $\ket{\downarrow,\dots,\downarrow}$ away from infinite temperature. Furthermore, when $\delta=0$, the Hamiltonian is particle-hole symmetric, leading to a different kind of universal behavior, detailed in Section~\ref{app:PHS}. In Fig.~1(e), we use dephasing error $\{\sigma^z_j\}$ at a total error rate of $\gamma = 0.05$, in a $5\times 5$ square grid of Rydberg atoms (with total Hilbert space dimension of 55447).

\section{Limitations of benchmarking protocol: Failure cases}
\label{app:failure_cases}
As discussed in the main text, our benchmarking formula can fail when: 1) the $k$-th no resonance condition is not satisfied, 2) the effective dimension $D_\beta$ is small, 3) the approach to equilibrium is slow, 4) the error is highly non-local, or 5) the readout basis states have well-defined symmetry quantum numbers. Below, we detail our findings for each case, and discuss alternate benchmarking formulae.

\subsection{Violating the no-resonance condition: Particle-hole symmetry} \label{app:PHS}
The second no-resonance condition~\eqref{eq:k_no_resonance} is extensively violated in the presence of a particle-hole symmetry (PHS). By this we mean a reflection symmetry in the spectrum: for every eigenvalue $E$, there is one at $-E$. For any given $E,E'$, the equation $E_1 - E_2 + E_3 - E_4 = 0$ not only has solutions $(E_1, E_2, E_3, E_4) = (E, E, E', E')$ and $(E, E', E', E)$, but also has solutions: $(E_1, E_2, E_3, E_4) = (E,- E, E', -E'),~(E,-E, -E', E')$. This will give rise to additional terms in the second Hamiltonian twirling identity [Eq.~\eqref{eq:Ham_2_design}]:
\begin{align}
    &\Ebb_t \L[ U_t^{\otimes 2} A U_{-t}^{\otimes 2}\R] \\
    &\approx \sum_{E,E'} \bigg[ \ketbra{E,E'}{E,E'} A \ketbra{E,E'}{E,E'} + \ketbra{E,E'}{E,E'} A \ketbra{E',E}{E',E} + \ketbra{E,-E}{E,-E} A \ketbra{E', -E'}{E',-E'} \bigg]~,\nonumber
\end{align}
suppressing the subleading collision terms. This can arise from, an \textit{anti-unitary} charge conjugation operator, for example. Here, we focus on a simpler example. PHS is present in the PXP model with no detuning ($\delta = 0$): $H_\text{PXP} = \sum_j PX_j$. $H_\text{PXP}$ anticommutes with the unitary operator $\mathcal{C} \equiv \prod_j \sigma_j^z$: $\mathcal{C} H_\text{PXP}\mathcal{C} =  - H_\text{PXP}$~\cite{turner2018quantum}. This symmetry is present in arbitrary dimensions--- we will present numerical evidence for the two dimensional PXP model.

The PXP model is particularly convenient, for a number of reasons:
\begin{itemize}
    \item[(i)] The computational basis states $\ket{z}$
are eigenstates of $\mathcal{C}$: $\mathcal{C}\ket{z} = \pm \ket{z}$. This implies that easily-prepared bitstring initial states such as $\ket{\Psi_0} = \ket{\downarrow\dots \downarrow}$ are also eigenstates of $\mathcal{C}$.
\item[(ii)] $H_\text{PXP}$ is real and thus possesses a time-reversal symmetry: $H_\text{PXP} = H^*_\text{PXP}$. In particular, its eigenstates are real in the computational basis. 
\end{itemize}

The breakdown of the second no-resonance condition can be seen when we compute the time average of $P^{(2)}(t) = \sum_z \pdiagz \tilde{p}(z,t)^2$:
\begin{equation}
\Ebb_t \L[F_\text{id,d}(t) \R] \approx \sum_z \bigg[ 2\bigg(\sum_E\abs{\braket{z}{E}\braket{E}{\Psi_0}}^2\bigg)^2 + \big\vert\sum_E\braket{z}{E}\braket{z}{-E} \braket{E}{\Psi_0} \braket{-E}{\Psi_0}\big\vert^2\bigg]/\pdiagz~.
\end{equation}
Using properties (i) and (ii) above, $\sum_E\braket{z}{E}\braket{z}{-E} \braket{E}{\Psi_0} \braket{-E}{\Psi_0} = \pdiagz$ and we obtain $\Ebb_t \L[F_\text{id,d}(t) \R] \approx 3$ instead of the usual value of two. This is illustrated in Fig.~\ref{fig:failure_cases1}(a). We note that this result is contingent on conditions (i) and (ii) above --- if these conditions were not satisfied, $\Ebb_t \L[F_\text{id,d}(t) \R]$ will have a non-universal value between two and three. For example, if we use an initial state that is \textit{not} particle-hole symmetric, such as a low-energy Boltzmann state, $\Ebb_t \L[F_\text{id,d}(t) \R] \approx 2$, and $F_d$ remains valid.

In particular, this implies that our $F_d$ formula will not estimate the fidelity. Applying our simplifying assumptions in Eq.~\eqref{eq:tildeq_approx}, we obtain: $F_d \approx \frac{4}{3}F - \frac{1}{3}$. Instead, a modified formula should be employed:
\begin{equation}
    F_d^\text{(PH)} = \frac{3}{2} \frac{\sum_z q(z) \tilde{p}(z)}{\sum_z p(z) \tilde{p}(z)} - \frac{1}{2}~. \label{eq:real_PT_Fd}
\end{equation}
However, a similar analysis reveals that $\Freld$ can be used with no modification. This is illustrated in Fig.~\ref{fig:failure_cases1}(b).

\subsubsection{Real Porter-Thomas distribution}
The value of three above is the second moment of the \textit{real Porter-Thomas distribution}~\cite{porter1956fluctuations} (stated, for simplicity, in the $D\rightarrow \infty$ limit):
\begin{equation}
    P_{\text{PT},\mathbb{R}}(x) = \frac{\exp(-x/2)}{\sqrt{2\pi x}}~.
    \label{eq:real_PT}
\end{equation}
This distribution can be viewed as the overlap probability of a \textit{real-valued} random vector $\ket{\Psi}$. It has $k$-th moments: $\int_0^\infty dx~x^k P_{\text{PT},\mathbb{R}}(x) = (2k-1)!! \equiv (2k-1)(2k-3)\cdots 3\cdot 1$.

Intuitively, if both the initial state $\ket{\Psi_0}$ and $\ket{z}$ are even under $\mathcal{C}$, the overlap will be purely real: $\braket{z}{\Psi(t)} = 2\sum_{E\geq 0} \braket{z}{E}\!\braket{E}{\Psi} \cos(E t)$. We intuitively expect this to be a random variable and for $p(z,t) = \abs{\braket{z}{\Psi(t)}}^2$ to follow the real Porter-Thomas distribution. This also holds if $\ket{\Psi_0}$ or $\ket{z}$ are odd under $\mathcal{C}$.

We can prove that $\tilde{p}(z,t)$ (both with fixed time, and fixed configuration $z$) follows $P_{\text{PT},\mathbb{R}}$, assuming conditions (i) and (ii) above, and a \textit{particle-hole symmetric} $k$ no-resonance condition. The proof is identical to that of Theorem~\ref{thm:bitstring_PT}, except that instead only considering the permutations of $k$ elements, we consider \textit{perfect matchings} of $2k$ vertices. There are $(2k-1)!!$ such perfect matchings, giving the desired $k$-th moments. We emphasize that to satisfy both conditions (i) and (ii), both time-reversal and particle-hole symmetry must be present in the Hamiltonian, and the initial state and readout basis states must have definite PHS parity.

\subsection{Violating the no-resonance condition: Non-interacting models}
The second no-resonance condition is also violated in non-interacting models. In non-interacting models, energy eigenvalues are given in terms of single-particle eigenvalues. For concreteness, consider a non-interacting Bose-Hubbard model, Eq.~\eqref{eq:BH_Ham} with $U=0$. Let the eigenvalues of the single-particle Hamiltonian be $\epsilon_k$. An $N$-particle eigenstate is labelled by the occupation $\{n_k\}$ of the single-particle modes $k$, with energy $\sum_k n_k \epsilon_k$. This violates the 2 no-resonance condition: For any $i,j,k,l$, there are eigenvalues $E_1 = \epsilon_{i} + \epsilon_{j}, E_2 = \epsilon_{k} + \epsilon_{l}$ and $E_3 = \epsilon_{i} + \epsilon_{k}, E_4 = \epsilon_{j} + \epsilon_{l}$ such that $E_1 + E_2 = E_3 + E_4$. Therefore, our Theorem does not apply to non-interacting models, reflected in Fig.~\ref{fig:failure_cases1}(c). Consequently, neither $F_d$ or $F_e$ estimate the many-body fidelity (Fig.~\ref{fig:failure_cases1}(d)). However, integrable and non-integrable dynamics may be quantitatively similar at short times, at which times the integrability breaking terms have not had sufficient time to appreciably change the dynamics. Since $F_d$ is designed to work for these short quenches, $F_d$ appears to estimate $F$ in this case, as seen in Fig.~\ref{fig:failure_cases1}(d).

\begin{figure}[t]
    \centering
    \includegraphics[width=0.95\textwidth]{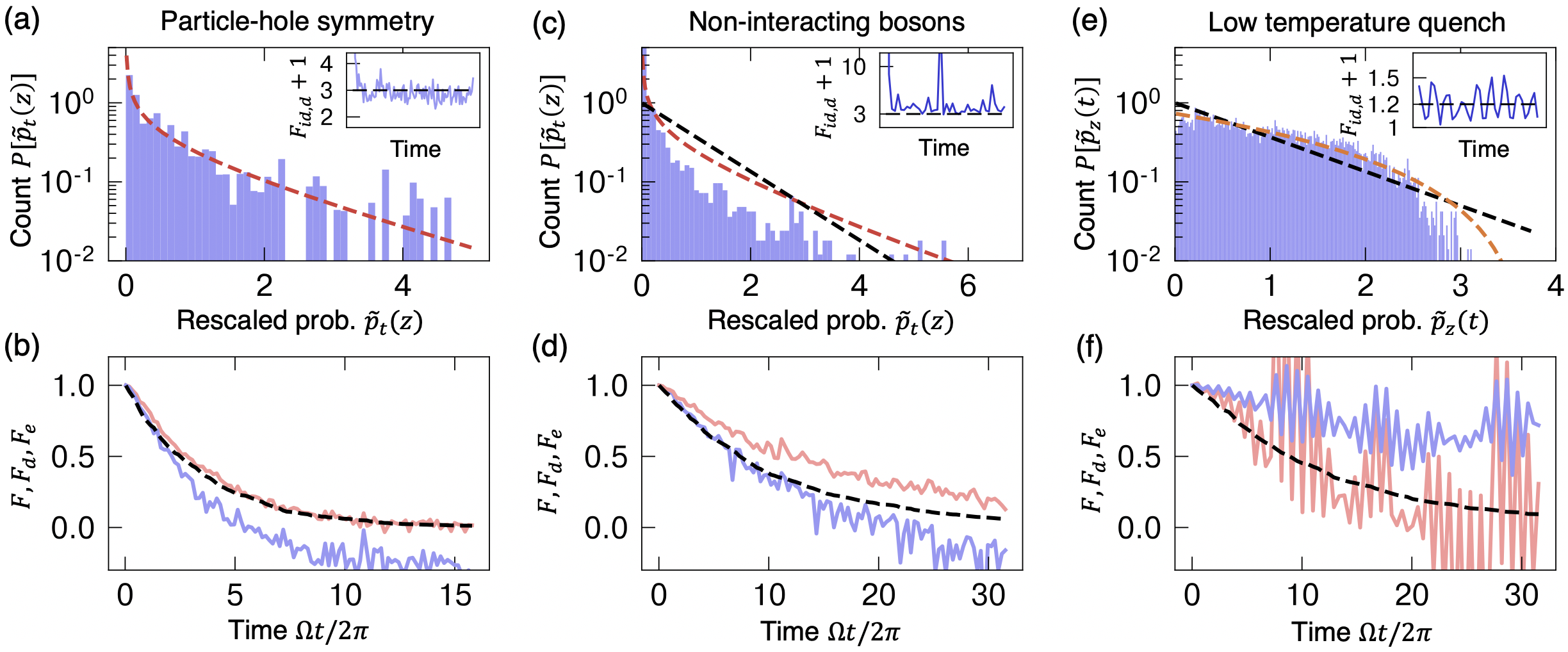}
    \caption{Failure cases of benchmarking protocol: Violation of no-resonance conditions and low effective dimension. (a) The particle-hole symmetry in the 2d PXP model violates the $k$ no-resonance conditions~\eqref{eq:k_no_resonance}. Under special conditions (satisfied here in the 2d PXP model quenched from $\ket{0\cdots 0}$), $\tilde{p}(z,t)$ follows the \textit{real} Porter-Thomas distribution, and $\Fidd +1$ has an equilibrium value of 3 (inset). (b) The $F_d$ formula (blue) breaks down, and must be modified as in Eq.~\eqref{eq:real_PT_Fd}. In contrast, $F_e$ (red) can be used without modification. (c) Non-interacting models also violate the no-resonance conditions. Here, we simulate a quench of the Bose-Hubbard model with no interaction $U=0$. The $\tilde{p}(z,t)$ follows neither PT distributions (red and black), and $\Fidd +1$ has large temporal fluctuations (inset). (d) As a consequence, neither $F_d$ nor $F_e$ approximate $F$, although $F_d$ is a good estimate at short times. (e) A low-temperature quench has a small effective dimension. We realize this with a Gibbs initial state $\ket{\Psi_0} = \sum_E \exp(-\beta/2 E) \ket{E}$, with $\beta = 4$ in a Bose-Hubbard model with 8 particles on 8 sites. $\tilde{p}_z(t)$, for an arbitrarily chosen $z$, follows a Porter-Thomas distribution with a small dimension (orange): here $D_\beta(z) \approx 3.75$. (f) As a consequence, $F_d$ (blue) and $F_e$ (red) have large temporal fluctuations, and $F_d$ has a large systematic error. $F_e$ is less susceptible to a small $D_\beta$, and is on average close to $F$.}
    \label{fig:failure_cases1}
\end{figure}
\begin{figure}[t]
    \centering
    \includegraphics[width=0.95\textwidth]{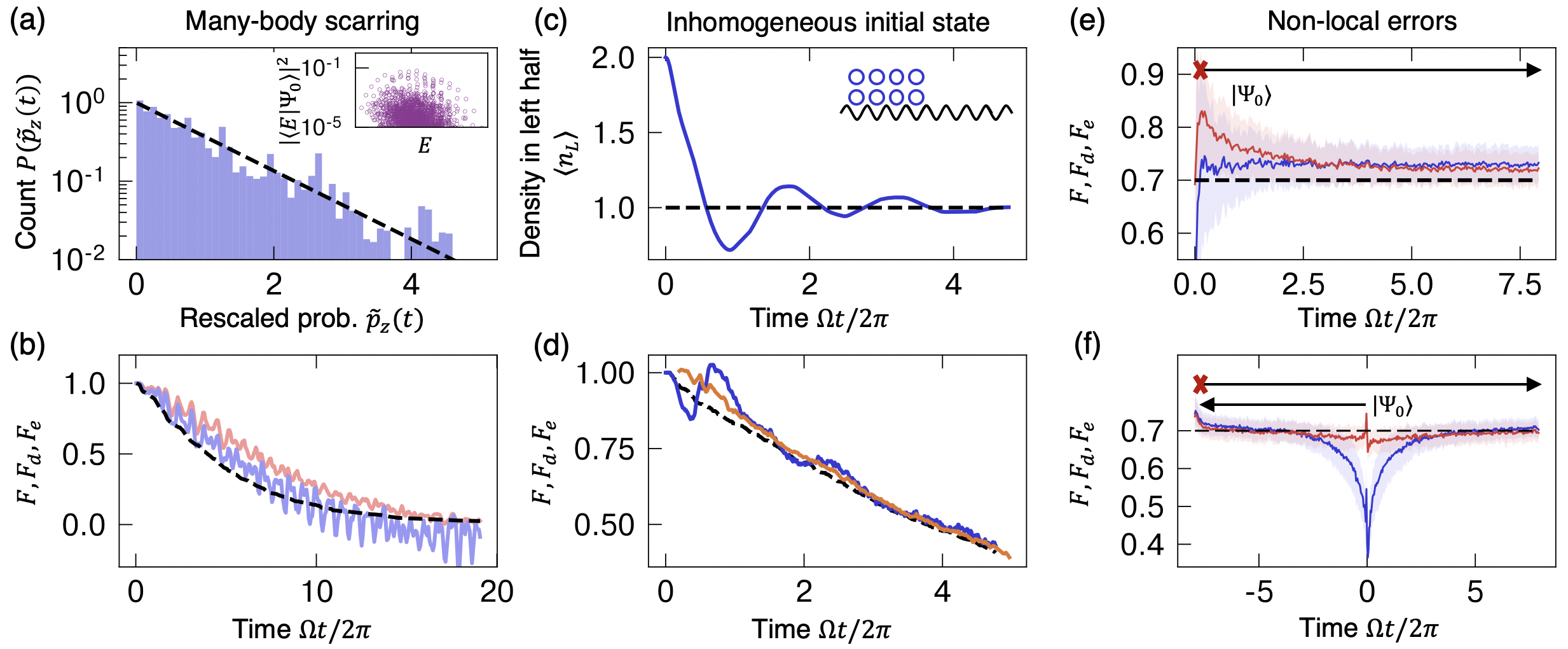}
    \caption{Failure cases of benchmarking protocol: Slow equilibration and non-local errors. (a) In the 1D PXP model, a quench of the Neel state $\ket{\downarrow \uparrow \downarrow \uparrow \dots}$ undergoes slow thermalization due to large overlap with a ``tower" of eigenstates (inset). $p_z(t)$ follows a PT distribution. (b) However, both $F_d$ (blue) and $F_e$ (red) show large oscillations and do not estimate $F$ well (c) If an inhomogeneous state is quench evolved, it undergoes slow hydrodynamic relaxation, evidenced by the decaying oscillations of the average density in the left half of a Bose-Hubbard chain (blue). (d) Under this quench, both $F_d$ (blue) and $F_e$ (not plotted) exhibit oscillations. A modified $F_e$ (orange) using a moving-averaged $p_\text{avg}$ yields a closer estimate of the fidelity. (e) $F_d$ and $F_e$ response to an initial error. At short times $F_d$ (blue) is a better estimate of $F$ than $F_e$ (red). Solid lines are average values over an ensemble of disordered Hamiltonians, with shaded areas indicating variance. (f) We apply a local error on a \textit{backward time-evolved} state $\ket{\Psi(-\tau)}$ to engineer a non-local error on the initial state. In this case $F_e$ (red) estimates $F$, while $F_d$ (blue) does not, consistent with Appendix.~\ref{app:random_MPS}.}
    \label{fig:failure_cases2}
\end{figure}

\subsection{Small effective dimension: Low temperature quench}
Our protocol also breaks down when the effective dimension $D_\beta$ is small. This happens in low temperature quenches, for example. In the extreme case that $\ket{\Psi_0}$ is an eigenstate, $D_\beta = 1$. When $\ket{\Psi_0}$ is a low temperature state, it's dynamics are effectively constrained to a low-energy subspace. Following Theorem~\ref{thm:bitstring_PT}, in Fig.~\ref{fig:failure_cases1}(e), the variable $\tilde{p}_z(t)$ approximately follows a Porter-Thomas distribution~\eqref{eq:PorterThomas} with the effective dimension \textit{of the outcome} $z$, $D_\beta(z)$. Meanwhile, the distributions $\tilde{p}_t(z)$ show large fluctuations over time, again because of the small effective dimension.

 Our protocol's systematic error and fluctuations scale inversely with $D_\beta$. Therefore, it is less applicable to low temperature quenches. In Fig.~\ref{fig:failure_cases1}(f), we see a large discrepancy between $F_d$ and $F$ and large fluctuations over time. However, since the systematic error of $F_e$ is less senstive to $D_\beta$ [its $O(D_\beta^{-1})$ term vanishes], $F_e$ estimates $F$, when averaged over a time window.

\subsection{Slow equilibration: Quantum many-body scar states}
Some many-body systems exhibit long-lived oscillations which delays equilibration of the quenched state. In the presence of these oscillations, normalizing the measurement outcomes $p(z)$ with the \textit{equilibrium} distribution $\pdiagz$ gives incorrect estimators.

One prominent example of long-lived oscillations is the recent discovery of \textit{quantum many-body scar states}~\cite{bernien2017probing,turner2018quantum}. Originally experimentally discovered in a Rydberg quantum simulator, this phenomenon can be captured with the 1D PXP model:
\begin{align}
    H_\text{1D-PXP} = \Omega \sum_{j=2}^{L-1} P_{j-1}^\downarrow X_j P_{j+1}^\downarrow + X_0 P_1^\downarrow + P_{L-1}^\downarrow X_L + \delta \sum_j n_j~,
\end{align}
where $P^\downarrow_j \equiv \ketbra{\downarrow}{\downarrow}_j$ is the projector onto the atomic ground state and $n_j \equiv \ketbra{\uparrow}{\uparrow}_j$ is the projector on the Rydberg excited state. A quench of the Neel state $\ket{\downarrow \uparrow \downarrow \uparrow \cdots \downarrow \uparrow}$ was observed to exhibit long-lived oscillations --- in particular, persistent revivals of observables such as the magnetization and many-body fidelity. These are attributed to a tower of eigenstates, energy eigenstates with which $\ket{\Psi_0}$ has considerable overlap [Fig.~\ref{fig:failure_cases2}(a), inset]. Due to special properties of these eigenstates --- most notably low half-chain entanglement entropy --- these states are dubbed \textit{many-body scar states}. While the phenomenon is most apparent with zero detuning, it persists with non-zero detuning. In our results in Fig.~\ref{fig:failure_cases2}(a,b), we use a non-zero detuning $\delta = 0.2 \Omega$ to break the particle-hole symmetry.

These revivals eventually decay in time. Therefore, $\tilde{p}_z(t)$ follows a PT distribution at late times [Fig.~\ref{fig:failure_cases2}(a)]. However, these long-lived oscillations are reflected in $F_d$ and $F_e$ [Fig.~\ref{fig:failure_cases2}(b)] --- $F_d$ is a closer estimate to $F$ than $F_e$.

\subsection{Slow equilibration: Inhomogeneous initial state}

If our initial state is strongly inhomogeneous, it will exhibit hydrodynamic oscillations before equilibrating~\cite{vasseur2016nonequilibrium}. At short times, the outcome probabilities $p(z,t)$ will be far from the relatively homogeneous $\pdiagz$. We illustrate this with the initial state $\ket{\Psi_0} = \ket{2,2,2,2,0,0,0,0}$ on a Bose-Hubbard model with 8 bosons on 8 sites. We measure its relaxation with the average density in the left half $\langle n_L \rangle \equiv \langle n_1 + n_2 + n_3 + n_4\rangle/4$, which exhibits decaying oscillations in Fig.~\ref{fig:failure_cases2}(c). These oscillations are reflected in $F_d$ and $\Freld$, which show large deviations from $F$ (Fig.~\ref{fig:failure_cases2}(d)). We obtain better performance by normalizing $p(z,t)$ and $q(z,t)$ by a moving average $p'_\text{avg}(z) \equiv \int_{t-\Delta t/2}^{t+\Delta t/2} dt~p(z,t)$, for a short interval $\Delta t$. We find that the modified $\Freld$, $F'_e(t) \equiv (\sum_z q(z,t)\tilde{p}(z,t) - 1)/(\sum_z p(z,t) \tilde{p}(z,t) - 1)$ is a closer approximation to $F$.

\subsection{Non-local error on low entanglement states}
In Appendix~\ref{app:random_MPS}, the random MPS model predicts that when the error size $N_A$ is much smaller than the bond dimension $\chi$, $F_d$ is a better approximation to $F$ at short times than $F_e$. This condition is satisfied in typical open system dynamics, since $\chi$ grows exponentially in time will $N_A$ only grows linearly in time.

To verify our claim, we apply a single error at the start of the quench ($t=0$) and compute $F_d$ and $F_e$. Averaging over a large number of Bose-Hubbard Hamiltonians with disordered hopping strengths, we verify that $F_d$ on average estimates $F$ at very short times (albeit with large variance) (blue, Fig.~\ref{fig:failure_cases2}(e)), while $F_e$ (red) has a longer relaxation time.

To engineer a situation in the opposite limit, where $\chi \ll N_A$, we backwards time-evolve the initial state, then apply the local error on $\ket{\Psi(-\tau)}$. We then forward evolve the ideal and perturbed states. When $\abs{t}$ is small, this is equivalent to a non-local error $V(t+\tau)$ on the low-entanglement states $\ket{\Psi(t)}$. In this regime, as predicted, $F_d$ (blue, Fig.~\ref{fig:failure_cases2}(f)) deviates from $F$ --- reaching a minimum value of $2F-1$, consistent with Eq.~\eqref{eq:random_MPS_prediction} --- while $F_e$ is close to $F$ (red).

We emphasize that such a situation is unlikely to arise in practical situations, and therefore $F_d$ is more appropriate for short time quench dynamics than $F_e$. We illustrate this in an example of target state benchmarking in Fig.~\ref{fig:target_state_benchmarking_local}(b).
 
\subsection{Non-ergodicity due to symmetry and readout in symmetry-resolved basis}
\label{sec:symmetry_nonergodicity}
We lastly discuss another case where $F_d$ does not estimate $F$ in the presence of symmetry and a readout basis that has well-defined symmetry quantum numbers.

In the presence of a symmetry, quantum states ergodically explore the Hilbert space of their symmetry sector. In generic readout bases, the presence of such symmetries does not impede $F_d$. However, if every state in the readout basis belongs to one symmetry sector, $F_d$ will not be able to detect relative phase errors, or phase decoherence, between the different symmetry sectors. Physically, we can understand this as non-ergodicity of certain information about the state: the phase information never propagates into the readout basis. Mathematically, this does not affect our Theorems~\ref{thm:temporal_PT} and~\ref{thm:bitstring_PT}. However, our ansatz that $q(z) = F p(z) + (1-F) p_\perp(z)$ is no longer valid. Instead, consider two states, expressed as superpositions of states in different symmetry sectors: $\ket{\Psi} = \sum_j c_j \ket{\Psi_j}$ and $\ket{\Psi'} = \sum_j d_j \ket{\Psi'_j}$. Then for a bitstring $z$ in sector $j$, the ansatz should be modified to (assuming the quench processes for $\ket{\Psi}$ and $\ket{\Psi'}$ are identical):
\begin{align}
q(z) = F_j p(z) + (1-F_j)p_\perp(z)~,~F_j \equiv \abs{\braket{\Psi_j}{\Psi'_j}}^2
\end{align}

Under this ansatz, $F_d$ does not reproduce the true fidelity $F = \abs{\sum_j c^*_j d_j \braket{\Psi_j}{\Psi'_j}}^2$.
Applying our ansatz, and recalling that $\pdiagz \equiv \Ebb[p(z,t)]$, $F_\text{XEB,d} \equiv \sum_z q(z) \tilde{p}(z) - 1$ takes the value $2\sum_j \abs{d_j}^2 F_j -1$

Consider the simplest case, $\ket{\Psi_j} = \ket{\Psi'_j}$. Then $F_j = 1$, and $F_\text{XEB,d} = 2\sum_j \abs{d_j}^2 -1 = 1$, regardless of the value of $c_j,d_j$. Therefore, $F_d$ is unable to estimate the fidelity. The infidelity arising from unequal populations $\abs{c_j}^2$ of the symmetry sectors may be captured by a suitably modified formula. However, infidelity arising from different phases of $c_j$ is intrinsically inaccessible from such measurements: if $d_j = \exp(i \theta_j) c_j$, then $q(z) = p(z)$ identically. 

Many analog simulators have readout bases that belong to a given symmetry sector. For example, itinerant particles in optical lattices readout configuration outcomes that have well-defined particle number, and their natural Hamiltonians also conserve particle number. However, states without well-defined particle number are not natural in such settings.

Such states that are in multiple symmetry sectors, however, may arise in other settings. Consider our trapped ion model $H_\text{TI}$~\eqref{eq:trapped_ion_model}. This Hamiltonian conserves parity $\prod_j \sigma^z_j$, and the computational basis states have well-defined parity. Furthermore, it is easy to prepare states without well-defined parity: for example, one starts with the all-downs state $\ket{\downarrow,\cdots,\downarrow}$, and applies a local rotation on a single site. Therefore, such a family of states cannot be benchmarked with the $F_d$ formula. As another example, such a scenario can arise in a spin chain, with a quench Hamiltonian that conserves magnetization $\sum_j \sigma_j^z$. Bitstrings in the computational basis have well-defined magnetization. In both cases, however, this can be remedied by measuring in the $X$-basis, in which outcomes do not have well-defined parity nor magnetization. This is explicitly demonstrated in a trapped-ion spin model in Fig.~\ref{fig:target_state_benchmarking_local}(e,f).
\begin{quote}

\end{quote}

 \section{Greedy algorithm for multiparameter estimation}\label{app:greedy_algorithm}
\subsection{Algorithm}
In this section we detail our greedy algorithm for simultaneous estimation of multiple Hamiltonian parameters, used in Fig.~4(c). Our algorithm is as follows. We are given a Hamiltonian $H(\pmb{\theta})$ with $n$ unknown parameters $\pmb{\theta}=\{\theta_j\}_{j=1}^n$. We want to determine the actual experimental parameters $\{\theta_j^*\}$, assuming that we can prepare a fiducial state $\ket{\Psi_0}$ with high fidelity.

\noindent Our procedure is as follows:
    \begin{quote}
\noindent 1. Evolve $\ket{\Psi_0}$ for a fixed time.
    
\noindent 2. Measure the configurations to approximate the distribution $q(z) = \bra{z} \rho(t) \ket{z}$.
    
\noindent 3. For some guess $\{\theta_j\}$, compute $p(z,\{\theta_j\}) = \abs{\bra{z} \exp(-iH(\{\theta_j\})t) \ket{\Psi_0}}^2$. Compute $F_d(\{J_j\})$.

\noindent 4. Choose a random permutation $\pi \in S_n$. For each $j$, maximize $F_d(\{\theta_j\})$ over $\theta_{\pi(j)}$, keeping all other parameters fixed.

\noindent 5. Repeat 4. for all $j=1,\cdots n$.
    
\noindent 6. Repeat 3-5. until $F_d$ converges. 
    
\noindent 7. Optionally repeat all steps with different initial guesses of $\{\theta_j\}$.
\end{quote}

This procedure is justified as long as $F_d$ does not vary too quickly in the range of $\{\theta_j\}$ considered. 

In Fig.~4(c), we use this procedure to determine an unknown disorder potential atop a known systematic Hamiltonian. Our greedy algorithm works well as long as the disorder potential is small, and is also robust to errors in the experimental evolution.
 
\section{Target state benchmarking}
In Fig. 4(a) of our main text, we demonstrate target state benchmarking to determine the unknown phase of a GHZ-like state in a 2D array of Rydberg atoms. We take the prepared state to have a phase $\phi^* = \pi/4$. We wish to measure $\phi^*$ by measuring the fidelity of the prepared state to target states with various values of $\phi$. Using our protocol, we estimate the fidelity by measuring $F_d$ after quench evolution.

In this section, we provide further demonstrations of target state benchmarking. We first consider the simplest setting of coherent local errors, which we demonstrate in the three other platforms studied. We then study benchmarking between states that are related by non-local deformations. We consider this setting for two applications: determining the Hamiltonian parameters corresponding to a ground state, as well as estimating the temperature of a low-energy state by measuring its overlap with the ground state.

\subsection{Coherent local errors}
\label{sec:target_state_1}
Here, we demonstrate target state benchmarking in the Bose-Hubbard, Fermi-Hubbard, and trapped-ion spin chain model, using the same Hamiltonian and readout basis discussed in Section~\ref{app:simulation_details}.

We consider locally rotated initial states: for the Bose-Hubbard model we consider a unity-filling Mott insulator state on 8 sites deformed by a local rotation $\phi$ on the middle two sites:
\begin{equation}
    \ket{\Psi(\phi)}_\text{BH} = \cos{\phi} \ket{1,1,1,1,1,1,1,1} + i \sin{\phi} \ket{1,1,1,2,0,1,1,1}~.
\end{equation}
For the Fermi-Hubbard model we consider the anti-ferromagnetic state with 8 particles on 8 sites, again rotated along the middle bond:
\begin{equation}
    \ket{\Psi(\phi)}_\text{FH} = \cos{\phi} \ket{\uparrow,\downarrow,\uparrow,\downarrow,\uparrow,\downarrow,\uparrow,\downarrow} + i \sin{\phi} \ket{\uparrow,\downarrow,\uparrow,d,0,\downarrow,\uparrow,\downarrow}~,
\end{equation}
where $d = \uparrow \downarrow$ denotes a doubly-occupied site, and $0$ denotes an empty site.
Lastly, for the trapped-ion model we consider the fully polarized state of 12 qubits, rotated by a pair of spin-flips across the middle bond
\begin{equation}
    \ket{\Psi(\phi)}_\text{TI} = \cos{\phi} \ket{\downarrow,\cdots,\downarrow,\downarrow,\downarrow,\cdots,\downarrow} + i \sin{\phi} \ket{\downarrow,\cdots,\downarrow,\uparrow,\downarrow,\cdots,\downarrow}~.
\end{equation}
A single bit-flip produces a coherent superposition across states with different parity. These states do not mix under the quench dynamics and the resulting fidelity cannot be estimated by $F_d$, if the readout basis is the computational basis. In contrast, if we measured in the $X$ basis, $F_d$ would be accurate, as discussed in (Section~\ref{sec:symmetry_nonergodicity}). If instead we had flipped two spins, our state would remain in the even-parity sector, and $F_d$ would estimate $F$ with any readout basis.

In all three cases, we compute the fidelity $F = \abs{\braket{\Psi(\phi^*)}{\Psi(\phi)}}^2 = \cos(\phi-\phi^*)^2$. In Fig.~\ref{fig:target_state_benchmarking_local}(a,c,e), we show that $F_d$, evaluated at a fixed point in time, estimates the fidelity $F$ in all three cases. In Fig.~\ref{fig:target_state_benchmarking_local}(b,d,f), we show that $F_d$ takes a short delay time to learn the value of $F$ to a target state $\phi = \frac{\pi}{2}$ --- only a relatively short quench evolution of $t = O(1/J)$ is needed for this protocol. In the trapped ion model, $F_d$ shows slightly larger deviations from $F$. This is because $F_d$ fluctuates in time about $F$: this is indicated by the blue band in Fig.~\ref{fig:target_state_benchmarking_local}(e) and larger fluctuations Fig.~\ref{fig:target_state_benchmarking_local}(f).

\begin{figure}[tbp]
    \centering
    \includegraphics[width=0.95\textwidth]{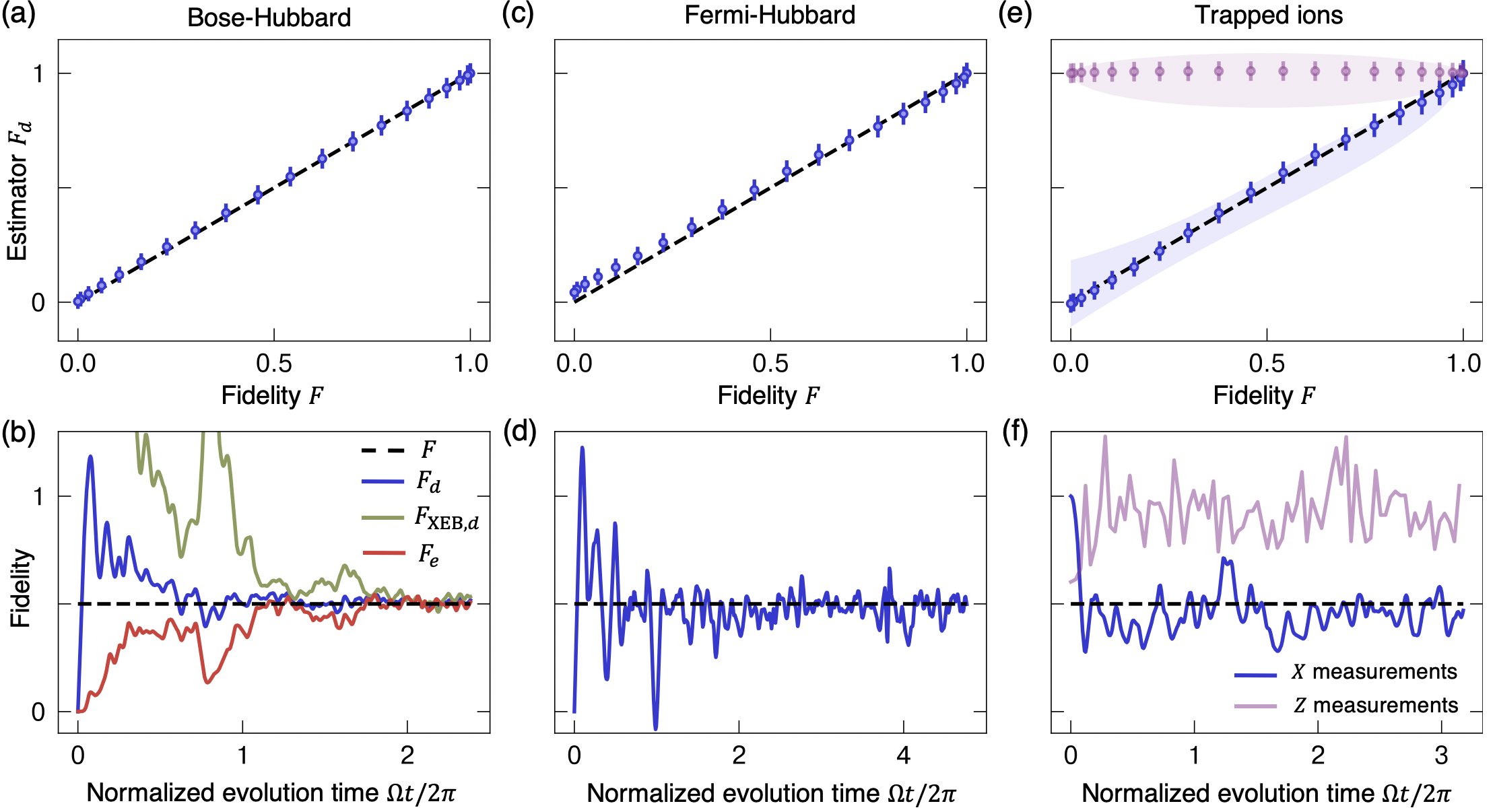}
    \caption{Target state benchmarking of locally-perturbed states. We apply local perturbations to initial states in a (a,b) Bose-Hubbard, (c,d) Fermi-Hubbard, and (e,f) trapped-ion spin chain model. We consider a class of states parametrized by an angle $\phi$. We take states with $\phi^* = \pi/4$ as the simulated prepared state, and compare it to target states with variable $\phi$. As a function of $\phi$, the fidelity between the prepared and target states vary from 0 to 1. By quench evolving these states, we estimate their fidelity with $F_d$. In the top row, we demonstrate that the estimator $F_d$ at a fixed, sufficiently late time, faithfully reproduces the fidelity $F$. Error bars denote the statistical uncertainty from a finite number $M=1000$ of samples. In the bottom row, we demonstrate the required evolution time for $F_d$ to learn the true fidelity $F$, with the target state at $\phi = \pi/2$. In (b), we compare the different estimators discussed in the text: $F_d$ (blue), $F_e$ (green), and $F_\text{XEB,d}$ (red) in a Bose-Hubbard model. The difference between $F_\text{XEB,d}$ and $F_d$ highlights the role of the denominator $\sum_z \pdiagz \tilde{p}(z)^2$ at short times. In (e), we show that evaluating $F_d$ at a fixed time shows slight deviations from $F$. This is due to the small effective dimension $D_\beta \approx 11$ in our trapped ion model: $F_d$ fluctuates about $F$ over time, as seen in (f) and indicated with the blue shaded region in (e). The purple markers and shaded regions in (e,f) indicate readout in the computational basis. Due to constraints from symmetry, $F_d$ does not estimate $F$ (in fact, it is constant in our setting). In (f), the purple curve is evaluated with a target state at $\phi = \pi/2 + \varepsilon$ for a small $\varepsilon$. This is because the target state at $\phi = \pi/2$ occupies only one out of two symmetry sectors, hence is a fine-tuned, pathological case.}
    \label{fig:target_state_benchmarking_local}
\end{figure}

\subsection{Comparing non-locally deformed states}

\begin{figure}[tbp]
    \centering
    \includegraphics[width=0.7\textwidth]{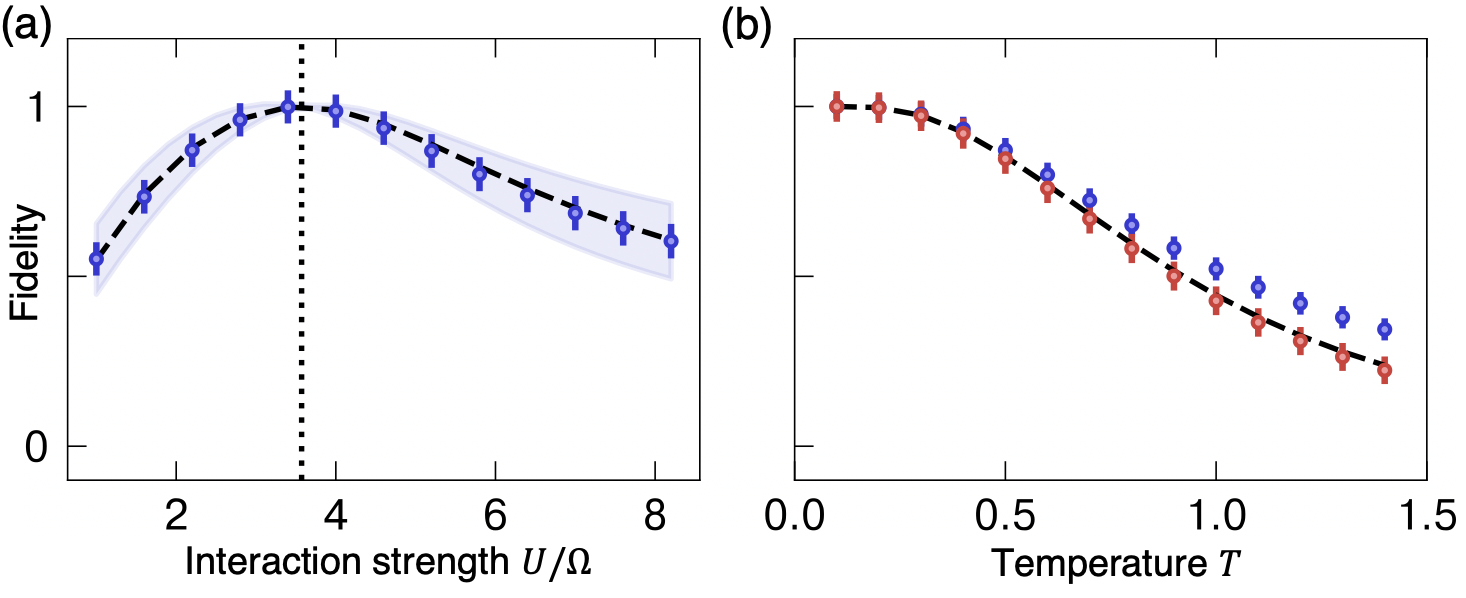}
    \caption{Target state benchmarking by verifying preparation of a ground state in a system. We consider two tasks: (a) estimating the corresponding parameter value and (b) estimating the temperature of the prepared state. In (a), we estimate the fidelity of the prepared state --- assumed to be the ground state at the critical value of $\L(U/\Omega\R)_\text{crit} \approx 3.6$ --- against the ground states at various values of $U/\Omega$. We find that $F_d$ (blue markers) is close to the true fidelity $F$ (dashed line). The blue shaded region indicates the size of fluctuations of $F_d$ over time. In (b), we estimate the fidelity of the prepared state --- assumed to be the Boltzmann state at various temperatures $T$ --- against the ground state at $\L(U/\Omega\R)_\text{crit}$. We find that $F_d$ (blue) slightly over-estimates $F$, while $F_e$ (red) closely estimates $F$, consistent with our analysis in Section.~\ref{app:Fid_FXEB_Fd}. In both plots, error bars indicate the statistical uncertainty associated with 1000 measurement samples.}
    \label{fig:target_state_benchmarking}
\end{figure}

We next demonstrate verifying the preparation of a non-trivial ground state in two ways by computing the fidelities of states related by non-local deformations. We first demonstrate verifying the position in the phase diagram by computing the fidelities between a prepared critical ground state and the ground states of the same Hamiltonian at other parameter values. We then demonstrate estimating the temperature of the prepared state by considering a prepared low-temperature Boltzmann state of the Hamiltonian at the critical point and estimating its fidelity (overlap) to the ground state.

We simulate the ground state of the Bose-Hubbard model \eqref{eq:BH_Ham} tuned to the Mott insulator-superfluid transition point at $\L(U/\Omega\R)_\text{crit} \approx 3.6$ on a system of 10 bosons on 10 lattice sites~\cite{kuhner1998phases}. We wish to benchmark the fidelity of our prepared state against its desired ground state. We consider two purposes: estimating the best parameter $U/\Omega$ of the prepared ground state, and estimating the temperature of the prepared state.

To do so, we consider quench evolving our prepared state via a modified Bose-Hubbard Hamiltonian, and comparing it to simulated data of quenched ground states at various values of $U/\Omega$. Here, care must be taken for two reasons. Firstly, the quench Hamiltonian should be chosen such that the ground states under investigation do not have too low energy with respect to the quench Hamiltonian. Secondly, the family of states we consider do not simply differ by local errors. Rather, the family of ground states are related by possibly non-local transformations. This is outside of the scope of our heuristic derivation. However, we numerically observe that the $F_d$ formula still approximates $F$ under judicious choices of the quench Hamiltonian.

In order for the ground states in both phases to have sufficiently ergodic quench dynamics, we consider a quench with a disordered Hamiltonian: $H = H_\text{BH}\L[(U/\Omega)_\text{crit}\R] + \sum_{j=1}^{8} \Delta_j n_j$, where $\Delta_j \sim \text{Unif}[-3,3]$. In Fig.~\ref{fig:target_state_benchmarking}(a), $F_d$ approximates the true ground state overlap in a system of 8 bosons on 8 lattice sites.

We next demonstrate estimating the temperature of an experimentally prepared state. Assuming we have correctly determined the parameter $U/\Omega$, we simulate the $F_d$ values between the quench-evolved ground state and quench-evolved Boltzmann states $\rho =  Z(T)^{-1} \exp(-H_\text{BH}/T)$, where $Z(T)$ is the partition function at temperature $T$. The fidelity is simply $Z(T)^{-1} \exp(-E_0/T)$. In this case, $F_d$ [Fig.~\ref{fig:target_state_benchmarking}(b), blue], is close to the true fidelity. However, because of a relatively small effective dimension $D_\beta \approx 5$, $F_d$ is larger than $F$, consistent with the prediction in Eq.~\eqref{eq:var_t_prediction}. This systematic shift can be manually subtracted. Alternatively, the estimator $F_e$ [Eq.~\eqref{eq:Frel}] is more robust to small values of $D_\beta$ and closely estimates the fidelity [Fig.~\ref{fig:target_state_benchmarking}(b), red].

\end{widetext}


\begin{thebibliography}{68}%
\makeatletter
\providecommand \@ifxundefined [1]{%
 \@ifx{#1\undefined}
}%
\providecommand \@ifnum [1]{%
 \ifnum #1\expandafter \@firstoftwo
 \else \expandafter \@secondoftwo
 \fi
}%
\providecommand \@ifx [1]{%
 \ifx #1\expandafter \@firstoftwo
 \else \expandafter \@secondoftwo
 \fi
}%
\providecommand \natexlab [1]{#1}%
\providecommand \enquote  [1]{``#1''}%
\providecommand \bibnamefont  [1]{#1}%
\providecommand \bibfnamefont [1]{#1}%
\providecommand \citenamefont [1]{#1}%
\providecommand \href@noop [0]{\@secondoftwo}%
\providecommand \href [0]{\begingroup \@sanitize@url \@href}%
\providecommand \@href[1]{\@@startlink{#1}\@@href}%
\providecommand \@@href[1]{\endgroup#1\@@endlink}%
\providecommand \@sanitize@url [0]{\catcode `\\12\catcode `\$12\catcode
  `\&12\catcode `\#12\catcode `\^12\catcode `\_12\catcode `\%12\relax}%
\providecommand \@@startlink[1]{}%
\providecommand \@@endlink[0]{}%
\providecommand \url  [0]{\begingroup\@sanitize@url \@url }%
\providecommand \@url [1]{\endgroup\@href {#1}{\urlprefix }}%
\providecommand \urlprefix  [0]{URL }%
\providecommand \Eprint [0]{\href }%
\providecommand \doibase [0]{https://doi.org/}%
\providecommand \selectlanguage [0]{\@gobble}%
\providecommand \bibinfo  [0]{\@secondoftwo}%
\providecommand \bibfield  [0]{\@secondoftwo}%
\providecommand \translation [1]{[#1]}%
\providecommand \BibitemOpen [0]{}%
\providecommand \bibitemStop [0]{}%
\providecommand \bibitemNoStop [0]{.\EOS\space}%
\providecommand \EOS [0]{\spacefactor3000\relax}%
\providecommand \BibitemShut  [1]{\csname bibitem#1\endcsname}%
\let\auto@bib@innerbib\@empty
\bibitem [{\citenamefont {Greiner}\ \emph {et~al.}(2002)\citenamefont
  {Greiner}, \citenamefont {Mandel}, \citenamefont {Esslinger}, \citenamefont
  {H{\"a}nsch},\ and\ \citenamefont {Bloch}}]{greiner2002quantum}%
  \BibitemOpen
  \bibfield  {author} {\bibinfo {author} {\bibfnamefont {M.}~\bibnamefont
  {Greiner}}, \bibinfo {author} {\bibfnamefont {O.}~\bibnamefont {Mandel}},
  \bibinfo {author} {\bibfnamefont {T.}~\bibnamefont {Esslinger}}, \bibinfo
  {author} {\bibfnamefont {T.~W.}\ \bibnamefont {H{\"a}nsch}},\ and\ \bibinfo
  {author} {\bibfnamefont {I.}~\bibnamefont {Bloch}},\ }\bibfield  {title}
  {\bibinfo {title} {Quantum phase transition from a superfluid to a {M}ott
  insulator in a gas of ultracold atoms},\ }\href
  {https://doi.org/10.1038/415039a} {\bibfield  {journal} {\bibinfo  {journal}
  {Nature (London)}\ }\textbf {\bibinfo {volume} {415}},\ \bibinfo {pages} {39}
  (\bibinfo {year} {2002})}\BibitemShut {NoStop}%
\bibitem [{\citenamefont {Aidelsburger}\ \emph {et~al.}(2013)\citenamefont
  {Aidelsburger}, \citenamefont {Atala}, \citenamefont {Lohse}, \citenamefont
  {Barreiro}, \citenamefont {Paredes},\ and\ \citenamefont
  {Bloch}}]{aidelsburger2013realization}%
  \BibitemOpen
  \bibfield  {author} {\bibinfo {author} {\bibfnamefont {M.}~\bibnamefont
  {Aidelsburger}}, \bibinfo {author} {\bibfnamefont {M.}~\bibnamefont {Atala}},
  \bibinfo {author} {\bibfnamefont {M.}~\bibnamefont {Lohse}}, \bibinfo
  {author} {\bibfnamefont {J.~T.}\ \bibnamefont {Barreiro}}, \bibinfo {author}
  {\bibfnamefont {B.}~\bibnamefont {Paredes}},\ and\ \bibinfo {author}
  {\bibfnamefont {I.}~\bibnamefont {Bloch}},\ }\bibfield  {title} {\bibinfo
  {title} {Realization of the {H}ofstadter {H}amiltonian with ultracold atoms
  in optical lattices},\ }\href
  {https://doi.org/10.1103/PhysRevLett.111.185301} {\bibfield  {journal}
  {\bibinfo  {journal} {Phys. Rev. Lett.}\ }\textbf {\bibinfo {volume} {111}},\
  \bibinfo {pages} {185301} (\bibinfo {year} {2013})}\BibitemShut {NoStop}%
\bibitem [{\citenamefont {Jotzu}\ \emph {et~al.}(2014)\citenamefont {Jotzu},
  \citenamefont {Messer}, \citenamefont {Desbuquois}, \citenamefont {Lebrat},
  \citenamefont {Uehlinger}, \citenamefont {Greif},\ and\ \citenamefont
  {Esslinger}}]{jotzu2014experimental}%
  \BibitemOpen
  \bibfield  {author} {\bibinfo {author} {\bibfnamefont {G.}~\bibnamefont
  {Jotzu}}, \bibinfo {author} {\bibfnamefont {M.}~\bibnamefont {Messer}},
  \bibinfo {author} {\bibfnamefont {R.}~\bibnamefont {Desbuquois}}, \bibinfo
  {author} {\bibfnamefont {M.}~\bibnamefont {Lebrat}}, \bibinfo {author}
  {\bibfnamefont {T.}~\bibnamefont {Uehlinger}}, \bibinfo {author}
  {\bibfnamefont {D.}~\bibnamefont {Greif}},\ and\ \bibinfo {author}
  {\bibfnamefont {T.}~\bibnamefont {Esslinger}},\ }\bibfield  {title} {\bibinfo
  {title} {Experimental realization of the topological {H}aldane model with
  ultracold fermions},\ }\href {https://www.nature.com/articles/nature13915}
  {\bibfield  {journal} {\bibinfo  {journal} {Nature (London)}\ }\textbf
  {\bibinfo {volume} {515}},\ \bibinfo {pages} {237} (\bibinfo {year}
  {2014})}\BibitemShut {NoStop}%
\bibitem [{\citenamefont {Gross}\ and\ \citenamefont
  {Bloch}(2017)}]{gross2017quantum}%
  \BibitemOpen
  \bibfield  {author} {\bibinfo {author} {\bibfnamefont {C.}~\bibnamefont
  {Gross}}\ and\ \bibinfo {author} {\bibfnamefont {I.}~\bibnamefont {Bloch}},\
  }\bibfield  {title} {\bibinfo {title} {Quantum simulations with ultracold
  atoms in optical lattices},\ }\href
  {https://www.science.org/doi/10.1126/science.aal3837} {\bibfield  {journal}
  {\bibinfo  {journal} {Science}\ }\textbf {\bibinfo {volume} {357}},\ \bibinfo
  {pages} {995} (\bibinfo {year} {2017})}\BibitemShut {NoStop}%
\bibitem [{\citenamefont {Ebadi}\ \emph {et~al.}(2021)\citenamefont {Ebadi},
  \citenamefont {Wang}, \citenamefont {Levine}, \citenamefont {Keesling},
  \citenamefont {Semeghini}, \citenamefont {Omran}, \citenamefont {Bluvstein},
  \citenamefont {Samajdar}, \citenamefont {Pichler}, \citenamefont {Ho} \emph
  {et~al.}}]{ebadi2021quantum}%
  \BibitemOpen
  \bibfield  {author} {\bibinfo {author} {\bibfnamefont {S.}~\bibnamefont
  {Ebadi}}, \bibinfo {author} {\bibfnamefont {T.~T.}\ \bibnamefont {Wang}},
  \bibinfo {author} {\bibfnamefont {H.}~\bibnamefont {Levine}}, \bibinfo
  {author} {\bibfnamefont {A.}~\bibnamefont {Keesling}}, \bibinfo {author}
  {\bibfnamefont {G.}~\bibnamefont {Semeghini}}, \bibinfo {author}
  {\bibfnamefont {A.}~\bibnamefont {Omran}}, \bibinfo {author} {\bibfnamefont
  {D.}~\bibnamefont {Bluvstein}}, \bibinfo {author} {\bibfnamefont
  {R.}~\bibnamefont {Samajdar}}, \bibinfo {author} {\bibfnamefont
  {H.}~\bibnamefont {Pichler}}, \bibinfo {author} {\bibfnamefont {W.~W.}\
  \bibnamefont {Ho}}, \emph {et~al.},\ }\bibfield  {title} {\bibinfo {title}
  {Quantum phases of matter on a 256-atom programmable quantum simulator},\
  }\href {https://www.nature.com/articles/s41586-021-03582-4} {\bibfield
  {journal} {\bibinfo  {journal} {Nature (London)}\ }\textbf {\bibinfo {volume}
  {595}},\ \bibinfo {pages} {227} (\bibinfo {year} {2021})}\BibitemShut
  {NoStop}%
\bibitem [{\citenamefont {Kaufman}\ \emph {et~al.}(2016)\citenamefont
  {Kaufman}, \citenamefont {Tai}, \citenamefont {Lukin}, \citenamefont
  {Rispoli}, \citenamefont {Schittko}, \citenamefont {Preiss},\ and\
  \citenamefont {Greiner}}]{kaufman2016quantum}%
  \BibitemOpen
  \bibfield  {author} {\bibinfo {author} {\bibfnamefont {A.~M.}\ \bibnamefont
  {Kaufman}}, \bibinfo {author} {\bibfnamefont {M.~E.}\ \bibnamefont {Tai}},
  \bibinfo {author} {\bibfnamefont {A.}~\bibnamefont {Lukin}}, \bibinfo
  {author} {\bibfnamefont {M.}~\bibnamefont {Rispoli}}, \bibinfo {author}
  {\bibfnamefont {R.}~\bibnamefont {Schittko}}, \bibinfo {author}
  {\bibfnamefont {P.~M.}\ \bibnamefont {Preiss}},\ and\ \bibinfo {author}
  {\bibfnamefont {M.}~\bibnamefont {Greiner}},\ }\bibfield  {title} {\bibinfo
  {title} {Quantum thermalization through entanglement in an isolated many-body
  system},\ }\href {https://www.science.org/doi/10.1126/science.aaf6725}
  {\bibfield  {journal} {\bibinfo  {journal} {Science}\ }\textbf {\bibinfo
  {volume} {353}},\ \bibinfo {pages} {794} (\bibinfo {year}
  {2016})}\BibitemShut {NoStop}%
\bibitem [{\citenamefont {Tang}\ \emph {et~al.}(2018)\citenamefont {Tang},
  \citenamefont {Kao}, \citenamefont {Li}, \citenamefont {Seo}, \citenamefont
  {Mallayya}, \citenamefont {Rigol}, \citenamefont {Gopalakrishnan},\ and\
  \citenamefont {Lev}}]{tang2018thermalization}%
  \BibitemOpen
  \bibfield  {author} {\bibinfo {author} {\bibfnamefont {Y.}~\bibnamefont
  {Tang}}, \bibinfo {author} {\bibfnamefont {W.}~\bibnamefont {Kao}}, \bibinfo
  {author} {\bibfnamefont {K.-Y.}\ \bibnamefont {Li}}, \bibinfo {author}
  {\bibfnamefont {S.}~\bibnamefont {Seo}}, \bibinfo {author} {\bibfnamefont
  {K.}~\bibnamefont {Mallayya}}, \bibinfo {author} {\bibfnamefont
  {M.}~\bibnamefont {Rigol}}, \bibinfo {author} {\bibfnamefont
  {S.}~\bibnamefont {Gopalakrishnan}},\ and\ \bibinfo {author} {\bibfnamefont
  {B.~L.}\ \bibnamefont {Lev}},\ }\bibfield  {title} {\bibinfo {title}
  {Thermalization near integrability in a dipolar quantum {N}ewton’s
  cradle},\ }\href
  {https://journals.aps.org/prx/abstract/10.1103/PhysRevX.8.021030} {\bibfield
  {journal} {\bibinfo  {journal} {Phys. Rev. X}\ }\textbf {\bibinfo {volume}
  {8}},\ \bibinfo {pages} {021030} (\bibinfo {year} {2018})}\BibitemShut
  {NoStop}%
\bibitem [{\citenamefont {Chen}\ \emph {et~al.}(2021)\citenamefont {Chen},
  \citenamefont {Sun}, \citenamefont {Gong}, \citenamefont {Zhu}, \citenamefont
  {Zhang}, \citenamefont {Wu}, \citenamefont {Ye}, \citenamefont {Zha},
  \citenamefont {Li}, \citenamefont {Guo}, \citenamefont {Qian}, \citenamefont
  {Huang}, \citenamefont {Yu}, \citenamefont {Deng}, \citenamefont {Rong},
  \citenamefont {Lin}, \citenamefont {Xu}, \citenamefont {Sun}, \citenamefont
  {Guo}, \citenamefont {Li}, \citenamefont {Liang}, \citenamefont {Peng},
  \citenamefont {Fan}, \citenamefont {Zhu},\ and\ \citenamefont
  {Pan}}]{chen2021observation}%
  \BibitemOpen
  \bibfield  {author} {\bibinfo {author} {\bibfnamefont {F.}~\bibnamefont
  {Chen}}, \bibinfo {author} {\bibfnamefont {Z.-H.}\ \bibnamefont {Sun}},
  \bibinfo {author} {\bibfnamefont {M.}~\bibnamefont {Gong}}, \bibinfo {author}
  {\bibfnamefont {Q.}~\bibnamefont {Zhu}}, \bibinfo {author} {\bibfnamefont
  {Y.-R.}\ \bibnamefont {Zhang}}, \bibinfo {author} {\bibfnamefont
  {Y.}~\bibnamefont {Wu}}, \bibinfo {author} {\bibfnamefont {Y.}~\bibnamefont
  {Ye}}, \bibinfo {author} {\bibfnamefont {C.}~\bibnamefont {Zha}}, \bibinfo
  {author} {\bibfnamefont {S.}~\bibnamefont {Li}}, \bibinfo {author}
  {\bibfnamefont {S.}~\bibnamefont {Guo}}, \bibinfo {author} {\bibfnamefont
  {H.}~\bibnamefont {Qian}}, \bibinfo {author} {\bibfnamefont {H.-L.}\
  \bibnamefont {Huang}}, \bibinfo {author} {\bibfnamefont {J.}~\bibnamefont
  {Yu}}, \bibinfo {author} {\bibfnamefont {H.}~\bibnamefont {Deng}}, \bibinfo
  {author} {\bibfnamefont {H.}~\bibnamefont {Rong}}, \bibinfo {author}
  {\bibfnamefont {J.}~\bibnamefont {Lin}}, \bibinfo {author} {\bibfnamefont
  {Y.}~\bibnamefont {Xu}}, \bibinfo {author} {\bibfnamefont {L.}~\bibnamefont
  {Sun}}, \bibinfo {author} {\bibfnamefont {C.}~\bibnamefont {Guo}}, \bibinfo
  {author} {\bibfnamefont {N.}~\bibnamefont {Li}}, \bibinfo {author}
  {\bibfnamefont {F.}~\bibnamefont {Liang}}, \bibinfo {author} {\bibfnamefont
  {C.-Z.}\ \bibnamefont {Peng}}, \bibinfo {author} {\bibfnamefont
  {H.}~\bibnamefont {Fan}}, \bibinfo {author} {\bibfnamefont {X.}~\bibnamefont
  {Zhu}},\ and\ \bibinfo {author} {\bibfnamefont {J.-W.}\ \bibnamefont {Pan}},\
  }\bibfield  {title} {\bibinfo {title} {Observation of strong and weak
  thermalization in a superconducting quantum processor},\ }\href
  {https://doi.org/10.1103/PhysRevLett.127.020602} {\bibfield  {journal}
  {\bibinfo  {journal} {Phys. Rev. Lett.}\ }\textbf {\bibinfo {volume} {127}},\
  \bibinfo {pages} {020602} (\bibinfo {year} {2021})}\BibitemShut {NoStop}%
\bibitem [{\citenamefont {Neyenhuis}\ \emph {et~al.}(2017)\citenamefont
  {Neyenhuis}, \citenamefont {Zhang}, \citenamefont {Hess}, \citenamefont
  {Smith}, \citenamefont {Lee}, \citenamefont {Richerme}, \citenamefont {Gong},
  \citenamefont {Gorshkov},\ and\ \citenamefont
  {Monroe}}]{neyenhuis2017observation}%
  \BibitemOpen
  \bibfield  {author} {\bibinfo {author} {\bibfnamefont {B.}~\bibnamefont
  {Neyenhuis}}, \bibinfo {author} {\bibfnamefont {J.}~\bibnamefont {Zhang}},
  \bibinfo {author} {\bibfnamefont {P.~W.}\ \bibnamefont {Hess}}, \bibinfo
  {author} {\bibfnamefont {J.}~\bibnamefont {Smith}}, \bibinfo {author}
  {\bibfnamefont {A.~C.}\ \bibnamefont {Lee}}, \bibinfo {author} {\bibfnamefont
  {P.}~\bibnamefont {Richerme}}, \bibinfo {author} {\bibfnamefont {Z.-X.}\
  \bibnamefont {Gong}}, \bibinfo {author} {\bibfnamefont {A.~V.}\ \bibnamefont
  {Gorshkov}},\ and\ \bibinfo {author} {\bibfnamefont {C.}~\bibnamefont
  {Monroe}},\ }\bibfield  {title} {\bibinfo {title} {Observation of
  prethermalization in long-range interacting spin chains},\ }\href
  {https://www.science.org/doi/10.1126/sciadv.1700672} {\bibfield  {journal}
  {\bibinfo  {journal} {Science Adv.}\ }\textbf {\bibinfo {volume} {3}},\
  \bibinfo {pages} {e1700672} (\bibinfo {year} {2017})}\BibitemShut {NoStop}%
\bibitem [{\citenamefont {Zhang}\ \emph {et~al.}(2017)\citenamefont {Zhang},
  \citenamefont {Pagano}, \citenamefont {Hess}, \citenamefont {Kyprianidis},
  \citenamefont {Becker}, \citenamefont {Kaplan}, \citenamefont {Gorshkov},
  \citenamefont {Gong},\ and\ \citenamefont {Monroe}}]{zhang2017observation}%
  \BibitemOpen
  \bibfield  {author} {\bibinfo {author} {\bibfnamefont {J.}~\bibnamefont
  {Zhang}}, \bibinfo {author} {\bibfnamefont {G.}~\bibnamefont {Pagano}},
  \bibinfo {author} {\bibfnamefont {P.~W.}\ \bibnamefont {Hess}}, \bibinfo
  {author} {\bibfnamefont {A.}~\bibnamefont {Kyprianidis}}, \bibinfo {author}
  {\bibfnamefont {P.}~\bibnamefont {Becker}}, \bibinfo {author} {\bibfnamefont
  {H.}~\bibnamefont {Kaplan}}, \bibinfo {author} {\bibfnamefont {A.~V.}\
  \bibnamefont {Gorshkov}}, \bibinfo {author} {\bibfnamefont {Z.-X.}\
  \bibnamefont {Gong}},\ and\ \bibinfo {author} {\bibfnamefont
  {C.}~\bibnamefont {Monroe}},\ }\bibfield  {title} {\bibinfo {title}
  {Observation of a many-body dynamical phase transition with a 53-qubit
  quantum simulator},\ }\href {https://www.nature.com/articles/nature24654}
  {\bibfield  {journal} {\bibinfo  {journal} {Nature (London)}\ }\textbf
  {\bibinfo {volume} {551}},\ \bibinfo {pages} {601} (\bibinfo {year}
  {2017})}\BibitemShut {NoStop}%
\bibitem [{\citenamefont {Choi}\ \emph {et~al.}(2019)\citenamefont {Choi},
  \citenamefont {Zhou}, \citenamefont {Choi}, \citenamefont {Landig},
  \citenamefont {Ho}, \citenamefont {Isoya}, \citenamefont {Jelezko},
  \citenamefont {Onoda}, \citenamefont {Sumiya}, \citenamefont {Abanin},\ and\
  \citenamefont {Lukin}}]{choi2019probing}%
  \BibitemOpen
  \bibfield  {author} {\bibinfo {author} {\bibfnamefont {J.}~\bibnamefont
  {Choi}}, \bibinfo {author} {\bibfnamefont {H.}~\bibnamefont {Zhou}}, \bibinfo
  {author} {\bibfnamefont {S.}~\bibnamefont {Choi}}, \bibinfo {author}
  {\bibfnamefont {R.}~\bibnamefont {Landig}}, \bibinfo {author} {\bibfnamefont
  {W.~W.}\ \bibnamefont {Ho}}, \bibinfo {author} {\bibfnamefont
  {J.}~\bibnamefont {Isoya}}, \bibinfo {author} {\bibfnamefont
  {F.}~\bibnamefont {Jelezko}}, \bibinfo {author} {\bibfnamefont
  {S.}~\bibnamefont {Onoda}}, \bibinfo {author} {\bibfnamefont
  {H.}~\bibnamefont {Sumiya}}, \bibinfo {author} {\bibfnamefont {D.~A.}\
  \bibnamefont {Abanin}},\ and\ \bibinfo {author} {\bibfnamefont {M.~D.}\
  \bibnamefont {Lukin}},\ }\bibfield  {title} {\bibinfo {title} {Probing
  quantum thermalization of a disordered dipolar spin ensemble with discrete
  time-crystalline order},\ }\href
  {https://doi.org/10.1103/PhysRevLett.122.043603} {\bibfield  {journal}
  {\bibinfo  {journal} {Phys. Rev. Lett.}\ }\textbf {\bibinfo {volume} {122}},\
  \bibinfo {pages} {043603} (\bibinfo {year} {2019})}\BibitemShut {NoStop}%
\bibitem [{\citenamefont {Rubio-Abadal}\ \emph {et~al.}(2020)\citenamefont
  {Rubio-Abadal}, \citenamefont {Ippoliti}, \citenamefont {Hollerith},
  \citenamefont {Wei}, \citenamefont {Rui}, \citenamefont {Sondhi},
  \citenamefont {Khemani}, \citenamefont {Gross},\ and\ \citenamefont
  {Bloch}}]{rubio2020floquet}%
  \BibitemOpen
  \bibfield  {author} {\bibinfo {author} {\bibfnamefont {A.}~\bibnamefont
  {Rubio-Abadal}}, \bibinfo {author} {\bibfnamefont {M.}~\bibnamefont
  {Ippoliti}}, \bibinfo {author} {\bibfnamefont {S.}~\bibnamefont {Hollerith}},
  \bibinfo {author} {\bibfnamefont {D.}~\bibnamefont {Wei}}, \bibinfo {author}
  {\bibfnamefont {J.}~\bibnamefont {Rui}}, \bibinfo {author} {\bibfnamefont
  {S.~L.}\ \bibnamefont {Sondhi}}, \bibinfo {author} {\bibfnamefont
  {V.}~\bibnamefont {Khemani}}, \bibinfo {author} {\bibfnamefont
  {C.}~\bibnamefont {Gross}},\ and\ \bibinfo {author} {\bibfnamefont
  {I.}~\bibnamefont {Bloch}},\ }\bibfield  {title} {\bibinfo {title} {Floquet
  prethermalization in a {B}ose-{H}ubbard system},\ }\href
  {https://link.aps.org/doi/10.1103/PhysRevLett.122.043603} {\bibfield
  {journal} {\bibinfo  {journal} {Phys. Rev. X}\ }\textbf {\bibinfo {volume}
  {10}},\ \bibinfo {pages} {021044} (\bibinfo {year} {2020})}\BibitemShut
  {NoStop}%
\bibitem [{\citenamefont {Peng}\ \emph {et~al.}(2021)\citenamefont {Peng},
  \citenamefont {Yin}, \citenamefont {Huang}, \citenamefont {Ramanathan},\ and\
  \citenamefont {Cappellaro}}]{peng2021floquet}%
  \BibitemOpen
  \bibfield  {author} {\bibinfo {author} {\bibfnamefont {P.}~\bibnamefont
  {Peng}}, \bibinfo {author} {\bibfnamefont {C.}~\bibnamefont {Yin}}, \bibinfo
  {author} {\bibfnamefont {X.}~\bibnamefont {Huang}}, \bibinfo {author}
  {\bibfnamefont {C.}~\bibnamefont {Ramanathan}},\ and\ \bibinfo {author}
  {\bibfnamefont {P.}~\bibnamefont {Cappellaro}},\ }\bibfield  {title}
  {\bibinfo {title} {Floquet prethermalization in dipolar spin chains},\ }\href
  {https://www.nature.com/articles/s41567-020-01120-z} {\bibfield  {journal}
  {\bibinfo  {journal} {Nature Physics}\ }\textbf {\bibinfo {volume} {17}},\
  \bibinfo {pages} {444} (\bibinfo {year} {2021})}\BibitemShut {NoStop}%
\bibitem [{\citenamefont {Mi}\ \emph {et~al.}(2021{\natexlab{a}})\citenamefont
  {Mi}, \citenamefont {Ippoliti}, \citenamefont {Quintana}, \citenamefont
  {Greene}, \citenamefont {Chen}, \citenamefont {Gross}, \citenamefont {Arute},
  \citenamefont {Arya}, \citenamefont {Atalaya}, \citenamefont {Babbush} \emph
  {et~al.}}]{mi2021time}%
  \BibitemOpen
  \bibfield  {author} {\bibinfo {author} {\bibfnamefont {X.}~\bibnamefont
  {Mi}}, \bibinfo {author} {\bibfnamefont {M.}~\bibnamefont {Ippoliti}},
  \bibinfo {author} {\bibfnamefont {C.}~\bibnamefont {Quintana}}, \bibinfo
  {author} {\bibfnamefont {A.}~\bibnamefont {Greene}}, \bibinfo {author}
  {\bibfnamefont {Z.}~\bibnamefont {Chen}}, \bibinfo {author} {\bibfnamefont
  {J.}~\bibnamefont {Gross}}, \bibinfo {author} {\bibfnamefont
  {F.}~\bibnamefont {Arute}}, \bibinfo {author} {\bibfnamefont
  {K.}~\bibnamefont {Arya}}, \bibinfo {author} {\bibfnamefont {J.}~\bibnamefont
  {Atalaya}}, \bibinfo {author} {\bibfnamefont {R.}~\bibnamefont {Babbush}},
  \emph {et~al.},\ }\bibfield  {title} {\bibinfo {title} {Time-crystalline
  eigenstate order on a quantum processor},\ }\href
  {https://www.nature.com/articles/s41567-020-01120-z} {\bibfield  {journal}
  {\bibinfo  {journal} {Nature (London)}\ }\textbf {\bibinfo {volume} {601}},\
  \bibinfo {pages} {531–536} (\bibinfo {year}
  {2021}{\natexlab{a}})}\BibitemShut {NoStop}%
\bibitem [{\citenamefont {Eisert}\ \emph {et~al.}(2020)\citenamefont {Eisert},
  \citenamefont {Hangleiter}, \citenamefont {Walk}, \citenamefont {Roth},
  \citenamefont {Markham}, \citenamefont {Parekh}, \citenamefont {Chabaud},\
  and\ \citenamefont {Kashefi}}]{eisert2020quantum}%
  \BibitemOpen
  \bibfield  {author} {\bibinfo {author} {\bibfnamefont {J.}~\bibnamefont
  {Eisert}}, \bibinfo {author} {\bibfnamefont {D.}~\bibnamefont {Hangleiter}},
  \bibinfo {author} {\bibfnamefont {N.}~\bibnamefont {Walk}}, \bibinfo {author}
  {\bibfnamefont {I.}~\bibnamefont {Roth}}, \bibinfo {author} {\bibfnamefont
  {D.}~\bibnamefont {Markham}}, \bibinfo {author} {\bibfnamefont
  {R.}~\bibnamefont {Parekh}}, \bibinfo {author} {\bibfnamefont
  {U.}~\bibnamefont {Chabaud}},\ and\ \bibinfo {author} {\bibfnamefont
  {E.}~\bibnamefont {Kashefi}},\ }\bibfield  {title} {\bibinfo {title} {Quantum
  certification and benchmarking},\ }\href
  {https://www.nature.com/articles/s42254-020-0186-4} {\bibfield  {journal}
  {\bibinfo  {journal} {Nature Rev. Phys.}\ }\textbf {\bibinfo {volume} {2}},\
  \bibinfo {pages} {382} (\bibinfo {year} {2020})}\BibitemShut {NoStop}%
\bibitem [{\citenamefont {Nielsen}\ and\ \citenamefont
  {Chuang}(2010)}]{nielsen_chuang_2010}%
  \BibitemOpen
  \bibfield  {author} {\bibinfo {author} {\bibfnamefont {M.~A.}\ \bibnamefont
  {Nielsen}}\ and\ \bibinfo {author} {\bibfnamefont {I.~L.}\ \bibnamefont
  {Chuang}},\ }\href {https://doi.org/10.1017/CBO9780511976667} {\emph
  {\bibinfo {title} {Quantum Computation and Quantum Information: 10th
  Anniversary Edition}}}\ (\bibinfo  {publisher} {Cambridge University Press},\
  \bibinfo {year} {2010})\BibitemShut {NoStop}%
\bibitem [{\citenamefont {Cramer}\ \emph {et~al.}(2010)\citenamefont {Cramer},
  \citenamefont {Plenio}, \citenamefont {Flammia}, \citenamefont {Somma},
  \citenamefont {Gross}, \citenamefont {Bartlett}, \citenamefont
  {Landon-Cardinal}, \citenamefont {Poulin},\ and\ \citenamefont
  {Liu}}]{cramer2010efficient}%
  \BibitemOpen
  \bibfield  {author} {\bibinfo {author} {\bibfnamefont {M.}~\bibnamefont
  {Cramer}}, \bibinfo {author} {\bibfnamefont {M.~B.}\ \bibnamefont {Plenio}},
  \bibinfo {author} {\bibfnamefont {S.~T.}\ \bibnamefont {Flammia}}, \bibinfo
  {author} {\bibfnamefont {R.}~\bibnamefont {Somma}}, \bibinfo {author}
  {\bibfnamefont {D.}~\bibnamefont {Gross}}, \bibinfo {author} {\bibfnamefont
  {S.~D.}\ \bibnamefont {Bartlett}}, \bibinfo {author} {\bibfnamefont
  {O.}~\bibnamefont {Landon-Cardinal}}, \bibinfo {author} {\bibfnamefont
  {D.}~\bibnamefont {Poulin}},\ and\ \bibinfo {author} {\bibfnamefont {Y.-K.}\
  \bibnamefont {Liu}},\ }\bibfield  {title} {\bibinfo {title} {Efficient
  quantum state tomography},\ }\href
  {https://www.nature.com/articles/ncomms1147} {\bibfield  {journal} {\bibinfo
  {journal} {Nature Comm.}\ }\textbf {\bibinfo {volume} {1}},\ \bibinfo {pages}
  {149} (\bibinfo {year} {2010})}\BibitemShut {NoStop}%
\bibitem [{\citenamefont {Gross}\ \emph {et~al.}(2010)\citenamefont {Gross},
  \citenamefont {Liu}, \citenamefont {Flammia}, \citenamefont {Becker},\ and\
  \citenamefont {Eisert}}]{gross2010quantum}%
  \BibitemOpen
  \bibfield  {author} {\bibinfo {author} {\bibfnamefont {D.}~\bibnamefont
  {Gross}}, \bibinfo {author} {\bibfnamefont {Y.-K.}\ \bibnamefont {Liu}},
  \bibinfo {author} {\bibfnamefont {S.~T.}\ \bibnamefont {Flammia}}, \bibinfo
  {author} {\bibfnamefont {S.}~\bibnamefont {Becker}},\ and\ \bibinfo {author}
  {\bibfnamefont {J.}~\bibnamefont {Eisert}},\ }\bibfield  {title} {\bibinfo
  {title} {Quantum state tomography via compressed sensing},\ }\href
  {https://link.aps.org/doi/10.1103/PhysRevLett.105.150401} {\bibfield
  {journal} {\bibinfo  {journal} {Phys. Rev. Lett.}\ }\textbf {\bibinfo
  {volume} {105}},\ \bibinfo {pages} {150401} (\bibinfo {year}
  {2010})}\BibitemShut {NoStop}%
\bibitem [{\citenamefont {Christandl}\ and\ \citenamefont
  {Renner}(2012)}]{christandl2012reliable}%
  \BibitemOpen
  \bibfield  {author} {\bibinfo {author} {\bibfnamefont {M.}~\bibnamefont
  {Christandl}}\ and\ \bibinfo {author} {\bibfnamefont {R.}~\bibnamefont
  {Renner}},\ }\bibfield  {title} {\bibinfo {title} {Reliable quantum state
  tomography},\ }\href {https://doi.org/10.1103/PhysRevLett.109.120403}
  {\bibfield  {journal} {\bibinfo  {journal} {Phys. Rev. Lett.}\ }\textbf
  {\bibinfo {volume} {109}},\ \bibinfo {pages} {120403} (\bibinfo {year}
  {2012})}\BibitemShut {NoStop}%
\bibitem [{\citenamefont {Flammia}\ and\ \citenamefont
  {Liu}(2011)}]{flammia2011direct}%
  \BibitemOpen
  \bibfield  {author} {\bibinfo {author} {\bibfnamefont {S.~T.}\ \bibnamefont
  {Flammia}}\ and\ \bibinfo {author} {\bibfnamefont {Y.-K.}\ \bibnamefont
  {Liu}},\ }\bibfield  {title} {\bibinfo {title} {Direct fidelity estimation
  from few {P}auli measurements},\ }\href
  {https://link.aps.org/doi/10.1103/PhysRevLett.106.230501} {\bibfield
  {journal} {\bibinfo  {journal} {Phys. Rev. Lett.}\ }\textbf {\bibinfo
  {volume} {106}},\ \bibinfo {pages} {230501} (\bibinfo {year}
  {2011})}\BibitemShut {NoStop}%
\bibitem [{\citenamefont {da~Silva}\ \emph {et~al.}(2011)\citenamefont
  {da~Silva}, \citenamefont {Landon-Cardinal},\ and\ \citenamefont
  {Poulin}}]{daSilva2011practical}%
  \BibitemOpen
  \bibfield  {author} {\bibinfo {author} {\bibfnamefont {M.~P.}\ \bibnamefont
  {da~Silva}}, \bibinfo {author} {\bibfnamefont {O.}~\bibnamefont
  {Landon-Cardinal}},\ and\ \bibinfo {author} {\bibfnamefont {D.}~\bibnamefont
  {Poulin}},\ }\bibfield  {title} {\bibinfo {title} {Practical characterization
  of quantum devices without tomography},\ }\href
  {https://link.aps.org/doi/10.1103/PhysRevLett.107.210404} {\bibfield
  {journal} {\bibinfo  {journal} {Phys. Rev. Lett.}\ }\textbf {\bibinfo
  {volume} {107}},\ \bibinfo {pages} {210404} (\bibinfo {year}
  {2011})}\BibitemShut {NoStop}%
\bibitem [{\citenamefont {Aaronson}(2019)}]{aaronson2019shadow}%
  \BibitemOpen
  \bibfield  {author} {\bibinfo {author} {\bibfnamefont {S.}~\bibnamefont
  {Aaronson}},\ }\bibfield  {title} {\bibinfo {title} {Shadow tomography of
  quantum states},\ }\href {https://doi.org/10.1145/3188745.3188802} {\bibfield
   {journal} {\bibinfo  {journal} {SIAM Journal on Computing}\ }\textbf
  {\bibinfo {volume} {49}},\ \bibinfo {pages} {STOC18} (\bibinfo {year}
  {2019})}\BibitemShut {NoStop}%
\bibitem [{\citenamefont {Huang}\ \emph {et~al.}(2020)\citenamefont {Huang},
  \citenamefont {Kueng},\ and\ \citenamefont {Preskill}}]{huang2020predicting}%
  \BibitemOpen
  \bibfield  {author} {\bibinfo {author} {\bibfnamefont {H.-Y.}\ \bibnamefont
  {Huang}}, \bibinfo {author} {\bibfnamefont {R.}~\bibnamefont {Kueng}},\ and\
  \bibinfo {author} {\bibfnamefont {J.}~\bibnamefont {Preskill}},\ }\bibfield
  {title} {\bibinfo {title} {Predicting many properties of a quantum system
  from very few measurements},\ }\href
  {https://doi.org/10.1038/s41567-020-0932-7} {\bibfield  {journal} {\bibinfo
  {journal} {Nature Physics}\ }\textbf {\bibinfo {volume} {16}},\ \bibinfo
  {pages} {1050} (\bibinfo {year} {2020})}\BibitemShut {NoStop}%
\bibitem [{\citenamefont {Elben}\ \emph {et~al.}(2019)\citenamefont {Elben},
  \citenamefont {Vermersch}, \citenamefont {Roos},\ and\ \citenamefont
  {Zoller}}]{elben2019statistical}%
  \BibitemOpen
  \bibfield  {author} {\bibinfo {author} {\bibfnamefont {A.}~\bibnamefont
  {Elben}}, \bibinfo {author} {\bibfnamefont {B.}~\bibnamefont {Vermersch}},
  \bibinfo {author} {\bibfnamefont {C.~F.}\ \bibnamefont {Roos}},\ and\
  \bibinfo {author} {\bibfnamefont {P.}~\bibnamefont {Zoller}},\ }\bibfield
  {title} {\bibinfo {title} {Statistical correlations between locally
  randomized measurements: A toolbox for probing entanglement in many-body
  quantum states},\ }\href
  {https://link.aps.org/doi/10.1103/PhysRevA.99.052323} {\bibfield  {journal}
  {\bibinfo  {journal} {Phys. Rev. A}\ }\textbf {\bibinfo {volume} {99}},\
  \bibinfo {pages} {052323} (\bibinfo {year} {2019})}\BibitemShut {NoStop}%
\bibitem [{\citenamefont {Elben}\ \emph {et~al.}(2020)\citenamefont {Elben},
  \citenamefont {Vermersch}, \citenamefont {van Bijnen}, \citenamefont
  {Kokail}, \citenamefont {Brydges}, \citenamefont {Maier}, \citenamefont
  {Joshi}, \citenamefont {Blatt}, \citenamefont {Roos},\ and\ \citenamefont
  {Zoller}}]{elben2020cross}%
  \BibitemOpen
  \bibfield  {author} {\bibinfo {author} {\bibfnamefont {A.}~\bibnamefont
  {Elben}}, \bibinfo {author} {\bibfnamefont {B.}~\bibnamefont {Vermersch}},
  \bibinfo {author} {\bibfnamefont {R.}~\bibnamefont {van Bijnen}}, \bibinfo
  {author} {\bibfnamefont {C.}~\bibnamefont {Kokail}}, \bibinfo {author}
  {\bibfnamefont {T.}~\bibnamefont {Brydges}}, \bibinfo {author} {\bibfnamefont
  {C.}~\bibnamefont {Maier}}, \bibinfo {author} {\bibfnamefont {M.~K.}\
  \bibnamefont {Joshi}}, \bibinfo {author} {\bibfnamefont {R.}~\bibnamefont
  {Blatt}}, \bibinfo {author} {\bibfnamefont {C.~F.}\ \bibnamefont {Roos}},\
  and\ \bibinfo {author} {\bibfnamefont {P.}~\bibnamefont {Zoller}},\
  }\bibfield  {title} {\bibinfo {title} {Cross-platform verification of
  intermediate scale quantum devices},\ }\href
  {https://link.aps.org/doi/10.1103/PhysRevLett.124.010504} {\bibfield
  {journal} {\bibinfo  {journal} {Phys. Rev. Lett.}\ }\textbf {\bibinfo
  {volume} {124}},\ \bibinfo {pages} {010504} (\bibinfo {year}
  {2020})}\BibitemShut {NoStop}%
\bibitem [{\citenamefont {Liu}\ \emph {et~al.}(2021)\citenamefont {Liu},
  \citenamefont {Otten}, \citenamefont {Bassirianjahromi}, \citenamefont
  {Jiang},\ and\ \citenamefont {Fefferman}}]{liu2021benchmarking}%
  \BibitemOpen
  \bibfield  {author} {\bibinfo {author} {\bibfnamefont {Y.}~\bibnamefont
  {Liu}}, \bibinfo {author} {\bibfnamefont {M.}~\bibnamefont {Otten}}, \bibinfo
  {author} {\bibfnamefont {R.}~\bibnamefont {Bassirianjahromi}}, \bibinfo
  {author} {\bibfnamefont {L.}~\bibnamefont {Jiang}},\ and\ \bibinfo {author}
  {\bibfnamefont {B.}~\bibnamefont {Fefferman}},\ }\bibfield  {title} {\bibinfo
  {title} {Benchmarking near-term quantum computers via random circuit
  sampling},\ }\href {https://arxiv.org/abs/2105.05232} {\bibfield  {journal}
  {\bibinfo  {journal} {arXiv:2105.05232}\ } (\bibinfo {year}
  {2021})}\BibitemShut {NoStop}%
\bibitem [{\citenamefont {Elben}\ \emph {et~al.}(2022)\citenamefont {Elben},
  \citenamefont {Flammia}, \citenamefont {Huang}, \citenamefont {Kueng},
  \citenamefont {Preskill}, \citenamefont {Vermersch},\ and\ \citenamefont
  {Zoller}}]{elben2022randomized}%
  \BibitemOpen
  \bibfield  {author} {\bibinfo {author} {\bibfnamefont {A.}~\bibnamefont
  {Elben}}, \bibinfo {author} {\bibfnamefont {S.~T.}\ \bibnamefont {Flammia}},
  \bibinfo {author} {\bibfnamefont {H.-Y.}\ \bibnamefont {Huang}}, \bibinfo
  {author} {\bibfnamefont {R.}~\bibnamefont {Kueng}}, \bibinfo {author}
  {\bibfnamefont {J.}~\bibnamefont {Preskill}}, \bibinfo {author}
  {\bibfnamefont {B.}~\bibnamefont {Vermersch}},\ and\ \bibinfo {author}
  {\bibfnamefont {P.}~\bibnamefont {Zoller}},\ }\bibfield  {title} {\bibinfo
  {title} {The randomized measurement toolbox},\ }\href
  {https://www.nature.com/articles/s42254-022-00535-2} {\bibfield  {journal}
  {\bibinfo  {journal} {Nature Reviews Physics}\ ,\ \bibinfo {pages} {1}}
  (\bibinfo {year} {2022})}\BibitemShut {NoStop}%
\bibitem [{\citenamefont {Ohliger}\ \emph {et~al.}(2013)\citenamefont
  {Ohliger}, \citenamefont {Nesme},\ and\ \citenamefont
  {Eisert}}]{ohliger2013efficient}%
  \BibitemOpen
  \bibfield  {author} {\bibinfo {author} {\bibfnamefont {M.}~\bibnamefont
  {Ohliger}}, \bibinfo {author} {\bibfnamefont {V.}~\bibnamefont {Nesme}},\
  and\ \bibinfo {author} {\bibfnamefont {J.}~\bibnamefont {Eisert}},\
  }\bibfield  {title} {\bibinfo {title} {Efficient and feasible state
  tomography of quantum many-body systems},\ }\href
  {http://dx.doi.org/10.1088/1367-2630/15/1/015024} {\bibfield  {journal}
  {\bibinfo  {journal} {N. J. Phys.}\ }\textbf {\bibinfo {volume} {15}},\
  \bibinfo {pages} {015024} (\bibinfo {year} {2013})}\BibitemShut {NoStop}%
\bibitem [{\citenamefont {Brydges}\ \emph {et~al.}(2019)\citenamefont
  {Brydges}, \citenamefont {Elben}, \citenamefont {Jurcevic}, \citenamefont
  {Vermersch}, \citenamefont {Maier}, \citenamefont {Lanyon}, \citenamefont
  {Zoller}, \citenamefont {Blatt},\ and\ \citenamefont
  {Roos}}]{brydges2019probing}%
  \BibitemOpen
  \bibfield  {author} {\bibinfo {author} {\bibfnamefont {T.}~\bibnamefont
  {Brydges}}, \bibinfo {author} {\bibfnamefont {A.}~\bibnamefont {Elben}},
  \bibinfo {author} {\bibfnamefont {P.}~\bibnamefont {Jurcevic}}, \bibinfo
  {author} {\bibfnamefont {B.}~\bibnamefont {Vermersch}}, \bibinfo {author}
  {\bibfnamefont {C.}~\bibnamefont {Maier}}, \bibinfo {author} {\bibfnamefont
  {B.~P.}\ \bibnamefont {Lanyon}}, \bibinfo {author} {\bibfnamefont
  {P.}~\bibnamefont {Zoller}}, \bibinfo {author} {\bibfnamefont
  {R.}~\bibnamefont {Blatt}},\ and\ \bibinfo {author} {\bibfnamefont {C.~F.}\
  \bibnamefont {Roos}},\ }\bibfield  {title} {\bibinfo {title} {Probing
  {R}{\'e}nyi entanglement entropy via randomized measurements},\ }\href
  {https://doi.org/10.1126/science.aau4963} {\bibfield  {journal} {\bibinfo
  {journal} {Science}\ }\textbf {\bibinfo {volume} {364}},\ \bibinfo {pages}
  {260} (\bibinfo {year} {2019})}\BibitemShut {NoStop}%
\bibitem [{\citenamefont {Boixo}\ \emph {et~al.}(2018)\citenamefont {Boixo},
  \citenamefont {Isakov}, \citenamefont {Smelyanskiy}, \citenamefont {Babbush},
  \citenamefont {Ding}, \citenamefont {Jiang}, \citenamefont {Bremner},
  \citenamefont {Martinis},\ and\ \citenamefont
  {Neven}}]{boixo2018characterizing}%
  \BibitemOpen
  \bibfield  {author} {\bibinfo {author} {\bibfnamefont {S.}~\bibnamefont
  {Boixo}}, \bibinfo {author} {\bibfnamefont {S.~V.}\ \bibnamefont {Isakov}},
  \bibinfo {author} {\bibfnamefont {V.~N.}\ \bibnamefont {Smelyanskiy}},
  \bibinfo {author} {\bibfnamefont {R.}~\bibnamefont {Babbush}}, \bibinfo
  {author} {\bibfnamefont {N.}~\bibnamefont {Ding}}, \bibinfo {author}
  {\bibfnamefont {Z.}~\bibnamefont {Jiang}}, \bibinfo {author} {\bibfnamefont
  {M.~J.}\ \bibnamefont {Bremner}}, \bibinfo {author} {\bibfnamefont {J.~M.}\
  \bibnamefont {Martinis}},\ and\ \bibinfo {author} {\bibfnamefont
  {H.}~\bibnamefont {Neven}},\ }\bibfield  {title} {\bibinfo {title}
  {Characterizing quantum supremacy in near-term devices},\ }\href
  {https://www.nature.com/articles/s41567-018-0124-x} {\bibfield  {journal}
  {\bibinfo  {journal} {Nature Physics}\ }\textbf {\bibinfo {volume} {14}},\
  \bibinfo {pages} {595} (\bibinfo {year} {2018})}\BibitemShut {NoStop}%
\bibitem [{\citenamefont {Arute}\ \emph {et~al.}(2019)\citenamefont {Arute},
  \citenamefont {Arya}, \citenamefont {Babbush}, \citenamefont {Bacon},
  \citenamefont {Bardin}, \citenamefont {Barends}, \citenamefont {Biswas},
  \citenamefont {Boixo}, \citenamefont {Brand{\~a}o}, \citenamefont {Buell}
  \emph {et~al.}}]{arute2019quantum}%
  \BibitemOpen
  \bibfield  {author} {\bibinfo {author} {\bibfnamefont {F.}~\bibnamefont
  {Arute}}, \bibinfo {author} {\bibfnamefont {K.}~\bibnamefont {Arya}},
  \bibinfo {author} {\bibfnamefont {R.}~\bibnamefont {Babbush}}, \bibinfo
  {author} {\bibfnamefont {D.}~\bibnamefont {Bacon}}, \bibinfo {author}
  {\bibfnamefont {J.~C.}\ \bibnamefont {Bardin}}, \bibinfo {author}
  {\bibfnamefont {R.}~\bibnamefont {Barends}}, \bibinfo {author} {\bibfnamefont
  {R.}~\bibnamefont {Biswas}}, \bibinfo {author} {\bibfnamefont
  {S.}~\bibnamefont {Boixo}}, \bibinfo {author} {\bibfnamefont {F.~G.}\
  \bibnamefont {Brand{\~a}o}}, \bibinfo {author} {\bibfnamefont {D.~A.}\
  \bibnamefont {Buell}}, \emph {et~al.},\ }\bibfield  {title} {\bibinfo {title}
  {Quantum supremacy using a programmable superconducting processor},\ }\href
  {https://doi.org/10.1038/s41586-019-1666-5} {\bibfield  {journal} {\bibinfo
  {journal} {Nature (London)}\ }\textbf {\bibinfo {volume} {574}},\ \bibinfo
  {pages} {505} (\bibinfo {year} {2019})}\BibitemShut {NoStop}%
\bibitem [{\citenamefont {Li}\ \emph {et~al.}(2020)\citenamefont {Li},
  \citenamefont {Zou},\ and\ \citenamefont {Hsieh}}]{li2020hamiltonian}%
  \BibitemOpen
  \bibfield  {author} {\bibinfo {author} {\bibfnamefont {Z.}~\bibnamefont
  {Li}}, \bibinfo {author} {\bibfnamefont {L.}~\bibnamefont {Zou}},\ and\
  \bibinfo {author} {\bibfnamefont {T.~H.}\ \bibnamefont {Hsieh}},\ }\bibfield
  {title} {\bibinfo {title} {Hamiltonian tomography via quantum quench},\
  }\href {https://link.aps.org/doi/10.1103/PhysRevLett.124.160502} {\bibfield
  {journal} {\bibinfo  {journal} {Phys. Rev. Lett.}\ }\textbf {\bibinfo
  {volume} {124}},\ \bibinfo {pages} {160502} (\bibinfo {year}
  {2020})}\BibitemShut {NoStop}%
\bibitem [{\citenamefont {Kokail}\ \emph {et~al.}(2021)\citenamefont {Kokail},
  \citenamefont {van Bijnen}, \citenamefont {Elben}, \citenamefont
  {Vermersch},\ and\ \citenamefont {Zoller}}]{kokail2021entanglement}%
  \BibitemOpen
  \bibfield  {author} {\bibinfo {author} {\bibfnamefont {C.}~\bibnamefont
  {Kokail}}, \bibinfo {author} {\bibfnamefont {R.}~\bibnamefont {van Bijnen}},
  \bibinfo {author} {\bibfnamefont {A.}~\bibnamefont {Elben}}, \bibinfo
  {author} {\bibfnamefont {B.}~\bibnamefont {Vermersch}},\ and\ \bibinfo
  {author} {\bibfnamefont {P.}~\bibnamefont {Zoller}},\ }\bibfield  {title}
  {\bibinfo {title} {Entanglement hamiltonian tomography in quantum
  simulation},\ }\href {https://www.nature.com/articles/s41567-021-01260-w}
  {\bibfield  {journal} {\bibinfo  {journal} {Nature Physics}\ }\textbf
  {\bibinfo {volume} {17}},\ \bibinfo {pages} {936} (\bibinfo {year}
  {2021})}\BibitemShut {NoStop}%
\bibitem [{\citenamefont {Gluza}\ and\ \citenamefont
  {Eisert}(2021)}]{gluza2021recovering}%
  \BibitemOpen
  \bibfield  {author} {\bibinfo {author} {\bibfnamefont {M.}~\bibnamefont
  {Gluza}}\ and\ \bibinfo {author} {\bibfnamefont {J.}~\bibnamefont {Eisert}},\
  }\bibfield  {title} {\bibinfo {title} {Recovering quantum correlations in
  optical lattices from interaction quenches},\ }\href
  {https://link.aps.org/doi/10.1103/PhysRevLett.127.090503} {\bibfield
  {journal} {\bibinfo  {journal} {Phys. Rev. Lett.}\ }\textbf {\bibinfo
  {volume} {127}},\ \bibinfo {pages} {090503} (\bibinfo {year}
  {2021})}\BibitemShut {NoStop}%
\bibitem [{\citenamefont {Hu}\ \emph {et~al.}(2021)\citenamefont {Hu},
  \citenamefont {Choi},\ and\ \citenamefont {You}}]{hu2021classical}%
  \BibitemOpen
  \bibfield  {author} {\bibinfo {author} {\bibfnamefont {H.-Y.}\ \bibnamefont
  {Hu}}, \bibinfo {author} {\bibfnamefont {S.}~\bibnamefont {Choi}},\ and\
  \bibinfo {author} {\bibfnamefont {Y.-Z.}\ \bibnamefont {You}},\ }\bibfield
  {title} {\bibinfo {title} {Classical shadow tomography with locally scrambled
  quantum dynamics},\ }\href {https://arxiv.org/abs/2107.04817} {\bibfield
  {journal} {\bibinfo  {journal} {arXiv:2107.04817}\ } (\bibinfo {year}
  {2021})}\BibitemShut {NoStop}%
\bibitem [{\citenamefont {Choi}\ \emph {et~al.}(2023)\citenamefont {Choi},
  \citenamefont {Shaw}, \citenamefont {Madjarov}, \citenamefont {Xie},
  \citenamefont {Finkelstein}, \citenamefont {Covey}, \citenamefont {Cotler},
  \citenamefont {Mark}, \citenamefont {Huang}, \citenamefont {Kale} \emph
  {et~al.}}]{choi2021emergent}%
  \BibitemOpen
  \bibfield  {author} {\bibinfo {author} {\bibfnamefont {J.}~\bibnamefont
  {Choi}}, \bibinfo {author} {\bibfnamefont {A.~L.}\ \bibnamefont {Shaw}},
  \bibinfo {author} {\bibfnamefont {I.~S.}\ \bibnamefont {Madjarov}}, \bibinfo
  {author} {\bibfnamefont {X.}~\bibnamefont {Xie}}, \bibinfo {author}
  {\bibfnamefont {R.}~\bibnamefont {Finkelstein}}, \bibinfo {author}
  {\bibfnamefont {J.~P.}\ \bibnamefont {Covey}}, \bibinfo {author}
  {\bibfnamefont {J.~S.}\ \bibnamefont {Cotler}}, \bibinfo {author}
  {\bibfnamefont {D.~K.}\ \bibnamefont {Mark}}, \bibinfo {author}
  {\bibfnamefont {H.-Y.}\ \bibnamefont {Huang}}, \bibinfo {author}
  {\bibfnamefont {A.}~\bibnamefont {Kale}}, \emph {et~al.},\ }\bibfield
  {title} {\bibinfo {title} {Preparing random states and benchmarking with
  many-body quantum chaos},\ }\href
  {https://www.nature.com/articles/s41586-022-05442-1} {\bibfield  {journal}
  {\bibinfo  {journal} {Nature (London)}\ }\textbf {\bibinfo {volume} {613}},\
  \bibinfo {pages} {468} (\bibinfo {year} {2023})}\BibitemShut {NoStop}%
\bibitem [{SM()}]{SM}%
  \BibitemOpen
  \href@noop {} {}\bibinfo {note} {See Supplemental material online for the
  proof of our theorem, detailed analysis of the performance, numerical
  demonstration, and limitations of our protocol, which includes
  Refs.~\cite{short2011equilibration,cotler2021emergent,sarkar2014light,turner2018quantum,bernien2017probing,
  vasseur2016nonequilibrium, dankert2009exact, deutsch2018eigenstate,
  collins2016random, nahum2018operator, khemani2018operator, jaksch1998cold,
  pichler2010nonequilibrium, bluvstein2021controlling,
  kuhner1998phases}}\BibitemShut {NoStop}%
\bibitem [{Note1()}]{Note1}%
  \BibitemOpen
  \bibinfo {note} {More accurately, the fidelity is a quantity defined between
  two quantum states and does not include readout errors. Our benchmark
  estimates the state fidelity when readout errors are negligible.}\BibitemShut
  {Stop}%
\bibitem [{\citenamefont {Porter}\ and\ \citenamefont
  {Thomas}(1956)}]{porter1956fluctuations}%
  \BibitemOpen
  \bibfield  {author} {\bibinfo {author} {\bibfnamefont {C.~E.}\ \bibnamefont
  {Porter}}\ and\ \bibinfo {author} {\bibfnamefont {R.~G.}\ \bibnamefont
  {Thomas}},\ }\bibfield  {title} {\bibinfo {title} {Fluctuations of nuclear
  reaction widths},\ }\href {https://link.aps.org/doi/10.1103/PhysRev.104.483}
  {\bibfield  {journal} {\bibinfo  {journal} {Phys. Rev.}\ }\textbf {\bibinfo
  {volume} {104}},\ \bibinfo {pages} {483} (\bibinfo {year}
  {1956})}\BibitemShut {NoStop}%
\bibitem [{\citenamefont {Garratt}\ \emph {et~al.}(2022)\citenamefont
  {Garratt}, \citenamefont {Weinstein},\ and\ \citenamefont
  {Altman}}]{garratt2022measurements}%
  \BibitemOpen
  \bibfield  {author} {\bibinfo {author} {\bibfnamefont {S.~J.}\ \bibnamefont
  {Garratt}}, \bibinfo {author} {\bibfnamefont {Z.}~\bibnamefont {Weinstein}},\
  and\ \bibinfo {author} {\bibfnamefont {E.}~\bibnamefont {Altman}},\
  }\bibfield  {title} {\bibinfo {title} {Measurements conspire nonlocally to
  restructure critical quantum states},\ }\href
  {https://arxiv.org/abs/2207.09476} {\bibfield  {journal} {\bibinfo  {journal}
  {arXiv:2207.09476}\ } (\bibinfo {year} {2022})}\BibitemShut {NoStop}%
\bibitem [{\citenamefont {Lee}\ \emph {et~al.}(2022)\citenamefont {Lee},
  \citenamefont {Ji}, \citenamefont {Bi},\ and\ \citenamefont
  {Fisher}}]{lee2022decoding}%
  \BibitemOpen
  \bibfield  {author} {\bibinfo {author} {\bibfnamefont {J.~Y.}\ \bibnamefont
  {Lee}}, \bibinfo {author} {\bibfnamefont {W.}~\bibnamefont {Ji}}, \bibinfo
  {author} {\bibfnamefont {Z.}~\bibnamefont {Bi}},\ and\ \bibinfo {author}
  {\bibfnamefont {M.~P.~A.}\ \bibnamefont {Fisher}},\ }\bibfield  {title}
  {\bibinfo {title} {Decoding measurement-prepared quantum phases and
  transitions: from ising model to gauge theory, and beyond},\ }\href
  {https://arxiv.org/abs/2208.11699} {\bibfield  {journal} {\bibinfo  {journal}
  {arXiv:2208.11699}\ } (\bibinfo {year} {2022})}\BibitemShut {NoStop}%
\bibitem [{\citenamefont {Gao}\ \emph {et~al.}(2021)\citenamefont {Gao},
  \citenamefont {Kalinowski}, \citenamefont {Chou}, \citenamefont {Lukin},
  \citenamefont {Barak},\ and\ \citenamefont {Choi}}]{gao2021limitations}%
  \BibitemOpen
  \bibfield  {author} {\bibinfo {author} {\bibfnamefont {X.}~\bibnamefont
  {Gao}}, \bibinfo {author} {\bibfnamefont {M.}~\bibnamefont {Kalinowski}},
  \bibinfo {author} {\bibfnamefont {C.-N.}\ \bibnamefont {Chou}}, \bibinfo
  {author} {\bibfnamefont {M.~D.}\ \bibnamefont {Lukin}}, \bibinfo {author}
  {\bibfnamefont {B.}~\bibnamefont {Barak}},\ and\ \bibinfo {author}
  {\bibfnamefont {S.}~\bibnamefont {Choi}},\ }\bibfield  {title} {\bibinfo
  {title} {Limitations of linear cross-entropy as a measure for quantum
  advantage},\ }\href {https://arxiv.org/abs/2112.01657} {\bibfield  {journal}
  {\bibinfo  {journal} {arXiv:2112.01657}\ } (\bibinfo {year}
  {2021})}\BibitemShut {NoStop}%
\bibitem [{\citenamefont {Dalzell}\ \emph {et~al.}(2021)\citenamefont
  {Dalzell}, \citenamefont {Hunter-Jones},\ and\ \citenamefont
  {Brandão}}]{dalzell2021random}%
  \BibitemOpen
  \bibfield  {author} {\bibinfo {author} {\bibfnamefont {A.~M.}\ \bibnamefont
  {Dalzell}}, \bibinfo {author} {\bibfnamefont {N.}~\bibnamefont
  {Hunter-Jones}},\ and\ \bibinfo {author} {\bibfnamefont {F.~G. S.~L.}\
  \bibnamefont {Brandão}},\ }\bibfield  {title} {\bibinfo {title} {Random
  quantum circuits transform local noise into global white noise},\ }\href
  {https://arxiv.org/abs/2111.14907} {\bibfield  {journal} {\bibinfo  {journal}
  {arXiv:2111.14907}\ } (\bibinfo {year} {2021})}\BibitemShut {NoStop}%
\bibitem [{\citenamefont {Noh}\ \emph {et~al.}(2020)\citenamefont {Noh},
  \citenamefont {Jiang},\ and\ \citenamefont
  {Fefferman}}]{Noh2020efficientclassical}%
  \BibitemOpen
  \bibfield  {author} {\bibinfo {author} {\bibfnamefont {K.}~\bibnamefont
  {Noh}}, \bibinfo {author} {\bibfnamefont {L.}~\bibnamefont {Jiang}},\ and\
  \bibinfo {author} {\bibfnamefont {B.}~\bibnamefont {Fefferman}},\ }\bibfield
  {title} {\bibinfo {title} {Efficient classical simulation of noisy random
  quantum circuits in one dimension},\ }\href
  {https://doi.org/10.22331/q-2020-09-11-318} {\bibfield  {journal} {\bibinfo
  {journal} {Quantum}\ }\textbf {\bibinfo {volume} {4}},\ \bibinfo {pages}
  {318} (\bibinfo {year} {2020})}\BibitemShut {NoStop}%
\bibitem [{\citenamefont {Goldstein}\ \emph {et~al.}(2006)\citenamefont
  {Goldstein}, \citenamefont {Lebowitz}, \citenamefont {Tumulka},\ and\
  \citenamefont {Zanghi}}]{goldstein2006distribution}%
  \BibitemOpen
  \bibfield  {author} {\bibinfo {author} {\bibfnamefont {S.}~\bibnamefont
  {Goldstein}}, \bibinfo {author} {\bibfnamefont {J.~L.}\ \bibnamefont
  {Lebowitz}}, \bibinfo {author} {\bibfnamefont {R.}~\bibnamefont {Tumulka}},\
  and\ \bibinfo {author} {\bibfnamefont {N.}~\bibnamefont {Zanghi}},\
  }\bibfield  {title} {\bibinfo {title} {On the distribution of the wave
  function for systems in thermal equilibrium},\ }\href
  {https://doi.org/10.1007/s10955-006-9210-z} {\bibfield  {journal} {\bibinfo
  {journal} {J. Stat. Phys.}\ }\textbf {\bibinfo {volume} {125}},\ \bibinfo
  {pages} {1193} (\bibinfo {year} {2006})}\BibitemShut {NoStop}%
\bibitem [{\citenamefont {Reimann}(2008)}]{reimann2008foundation}%
  \BibitemOpen
  \bibfield  {author} {\bibinfo {author} {\bibfnamefont {P.}~\bibnamefont
  {Reimann}},\ }\bibfield  {title} {\bibinfo {title} {Foundation of statistical
  mechanics under experimentally realistic conditions},\ }\href
  {https://doi.org/10.1103/PhysRevLett.101.190403} {\bibfield  {journal}
  {\bibinfo  {journal} {Phys. Rev. Lett.}\ }\textbf {\bibinfo {volume} {101}},\
  \bibinfo {pages} {190403} (\bibinfo {year} {2008})}\BibitemShut {NoStop}%
\bibitem [{\citenamefont {Linden}\ \emph {et~al.}(2009)\citenamefont {Linden},
  \citenamefont {Popescu}, \citenamefont {Short},\ and\ \citenamefont
  {Winter}}]{linden2009quantum}%
  \BibitemOpen
  \bibfield  {author} {\bibinfo {author} {\bibfnamefont {N.}~\bibnamefont
  {Linden}}, \bibinfo {author} {\bibfnamefont {S.}~\bibnamefont {Popescu}},
  \bibinfo {author} {\bibfnamefont {A.~J.}\ \bibnamefont {Short}},\ and\
  \bibinfo {author} {\bibfnamefont {A.}~\bibnamefont {Winter}},\ }\bibfield
  {title} {\bibinfo {title} {Quantum mechanical evolution towards thermal
  equilibrium},\ }\href {https://link.aps.org/doi/10.1103/PhysRevE.79.061103}
  {\bibfield  {journal} {\bibinfo  {journal} {Phys. Rev. E}\ }\textbf {\bibinfo
  {volume} {79}},\ \bibinfo {pages} {061103} (\bibinfo {year}
  {2009})}\BibitemShut {NoStop}%
\bibitem [{\citenamefont {Kaneko}\ \emph {et~al.}(2020)\citenamefont {Kaneko},
  \citenamefont {Iyoda},\ and\ \citenamefont
  {Sagawa}}]{kaneko2020characterizing}%
  \BibitemOpen
  \bibfield  {author} {\bibinfo {author} {\bibfnamefont {K.}~\bibnamefont
  {Kaneko}}, \bibinfo {author} {\bibfnamefont {E.}~\bibnamefont {Iyoda}},\ and\
  \bibinfo {author} {\bibfnamefont {T.}~\bibnamefont {Sagawa}},\ }\bibfield
  {title} {\bibinfo {title} {Characterizing complexity of many-body quantum
  dynamics by higher-order eigenstate thermalization},\ }\href
  {https://link.aps.org/doi/10.1103/PhysRevA.101.042126} {\bibfield  {journal}
  {\bibinfo  {journal} {Phys. Rev. A}\ }\textbf {\bibinfo {volume} {101}},\
  \bibinfo {pages} {042126} (\bibinfo {year} {2020})}\BibitemShut {NoStop}%
\bibitem [{\citenamefont {Huang}(2021)}]{huang2021extensive}%
  \BibitemOpen
  \bibfield  {author} {\bibinfo {author} {\bibfnamefont {Y.}~\bibnamefont
  {Huang}},\ }\bibfield  {title} {\bibinfo {title} {Extensive entropy from
  unitary evolution},\ }\href {https://arxiv.org/abs/2104.02053} {\bibfield
  {journal} {\bibinfo  {journal} {arXiv:2104.02053}\ } (\bibinfo {year}
  {2021})}\BibitemShut {NoStop}%
\bibitem [{\citenamefont {Essler}\ \emph {et~al.}(2005)\citenamefont {Essler},
  \citenamefont {Frahm}, \citenamefont {G{\"o}hmann}, \citenamefont
  {Kl{\"u}mper},\ and\ \citenamefont {Korepin}}]{essler2005one}%
  \BibitemOpen
  \bibfield  {author} {\bibinfo {author} {\bibfnamefont {F.~H.}\ \bibnamefont
  {Essler}}, \bibinfo {author} {\bibfnamefont {H.}~\bibnamefont {Frahm}},
  \bibinfo {author} {\bibfnamefont {F.}~\bibnamefont {G{\"o}hmann}}, \bibinfo
  {author} {\bibfnamefont {A.}~\bibnamefont {Kl{\"u}mper}},\ and\ \bibinfo
  {author} {\bibfnamefont {V.~E.}\ \bibnamefont {Korepin}},\ }\href
  {https://doi.org/10.1017/CBO9780511534843} {\emph {\bibinfo {title} {The
  one-dimensional {H}ubbard model}}}\ (\bibinfo  {publisher} {Cambridge
  University Press},\ \bibinfo {year} {2005})\BibitemShut {NoStop}%
\bibitem [{\citenamefont {Daley}(2014)}]{daley2014quantum}%
  \BibitemOpen
  \bibfield  {author} {\bibinfo {author} {\bibfnamefont {A.~J.}\ \bibnamefont
  {Daley}},\ }\bibfield  {title} {\bibinfo {title} {Quantum trajectories and
  open many-body quantum systems},\ }\href
  {https://doi.org/10.1080/00018732.2014.933502} {\bibfield  {journal}
  {\bibinfo  {journal} {Adv. Phys.}\ }\textbf {\bibinfo {volume} {63}},\
  \bibinfo {pages} {77} (\bibinfo {year} {2014})}\BibitemShut {NoStop}%
\bibitem [{\citenamefont {Mi}\ \emph {et~al.}(2021{\natexlab{b}})\citenamefont
  {Mi}, \citenamefont {Roushan}, \citenamefont {Quintana}, \citenamefont
  {Mandr{\`a}}, \citenamefont {Marshall}, \citenamefont {Neill}, \citenamefont
  {Arute}, \citenamefont {Arya}, \citenamefont {Atalaya}, \citenamefont
  {Babbush} \emph {et~al.}}]{mi2021information}%
  \BibitemOpen
  \bibfield  {author} {\bibinfo {author} {\bibfnamefont {X.}~\bibnamefont
  {Mi}}, \bibinfo {author} {\bibfnamefont {P.}~\bibnamefont {Roushan}},
  \bibinfo {author} {\bibfnamefont {C.}~\bibnamefont {Quintana}}, \bibinfo
  {author} {\bibfnamefont {S.}~\bibnamefont {Mandr{\`a}}}, \bibinfo {author}
  {\bibfnamefont {J.}~\bibnamefont {Marshall}}, \bibinfo {author}
  {\bibfnamefont {C.}~\bibnamefont {Neill}}, \bibinfo {author} {\bibfnamefont
  {F.}~\bibnamefont {Arute}}, \bibinfo {author} {\bibfnamefont
  {K.}~\bibnamefont {Arya}}, \bibinfo {author} {\bibfnamefont {J.}~\bibnamefont
  {Atalaya}}, \bibinfo {author} {\bibfnamefont {R.}~\bibnamefont {Babbush}},
  \emph {et~al.},\ }\bibfield  {title} {\bibinfo {title} {Information
  scrambling in quantum circuits},\ }\href
  {https://doi.org/10.1126/science.abg5029} {\bibfield  {journal} {\bibinfo
  {journal} {Science}\ }\textbf {\bibinfo {volume} {374}},\ \bibinfo {pages}
  {1479} (\bibinfo {year} {2021}{\natexlab{b}})}\BibitemShut {NoStop}%
\bibitem [{\citenamefont {Short}(2011)}]{short2011equilibration}%
  \BibitemOpen
  \bibfield  {author} {\bibinfo {author} {\bibfnamefont {A.~J.}\ \bibnamefont
  {Short}},\ }\bibfield  {title} {\bibinfo {title} {Equilibration of quantum
  systems and subsystems},\ }\href
  {https://doi.org/10.1088/1367-2630/13/5/053009} {\bibfield  {journal}
  {\bibinfo  {journal} {N. J. Phys.}\ }\textbf {\bibinfo {volume} {13}},\
  \bibinfo {pages} {053009} (\bibinfo {year} {2011})}\BibitemShut {NoStop}%
\bibitem [{\citenamefont {Cotler}\ \emph {et~al.}(2023)\citenamefont {Cotler},
  \citenamefont {Mark}, \citenamefont {Huang}, \citenamefont {Hern\'andez},
  \citenamefont {Choi}, \citenamefont {Shaw}, \citenamefont {Endres},\ and\
  \citenamefont {Choi}}]{cotler2021emergent}%
  \BibitemOpen
  \bibfield  {author} {\bibinfo {author} {\bibfnamefont {J.~S.}\ \bibnamefont
  {Cotler}}, \bibinfo {author} {\bibfnamefont {D.~K.}\ \bibnamefont {Mark}},
  \bibinfo {author} {\bibfnamefont {H.-Y.}\ \bibnamefont {Huang}}, \bibinfo
  {author} {\bibfnamefont {F.}~\bibnamefont {Hern\'andez}}, \bibinfo {author}
  {\bibfnamefont {J.}~\bibnamefont {Choi}}, \bibinfo {author} {\bibfnamefont
  {A.~L.}\ \bibnamefont {Shaw}}, \bibinfo {author} {\bibfnamefont
  {M.}~\bibnamefont {Endres}},\ and\ \bibinfo {author} {\bibfnamefont
  {S.}~\bibnamefont {Choi}},\ }\bibfield  {title} {\bibinfo {title} {Emergent
  quantum state designs from individual many-body wave functions},\ }\href
  {https://doi.org/10.1103/PRXQuantum.4.010311} {\bibfield  {journal} {\bibinfo
   {journal} {PRX Quantum}\ }\textbf {\bibinfo {volume} {4}},\ \bibinfo {pages}
  {010311} (\bibinfo {year} {2023})}\BibitemShut {NoStop}%
\bibitem [{\citenamefont {Sarkar}\ \emph {et~al.}(2014)\citenamefont {Sarkar},
  \citenamefont {Langer}, \citenamefont {Schachenmayer},\ and\ \citenamefont
  {Daley}}]{sarkar2014light}%
  \BibitemOpen
  \bibfield  {author} {\bibinfo {author} {\bibfnamefont {S.}~\bibnamefont
  {Sarkar}}, \bibinfo {author} {\bibfnamefont {S.}~\bibnamefont {Langer}},
  \bibinfo {author} {\bibfnamefont {J.}~\bibnamefont {Schachenmayer}},\ and\
  \bibinfo {author} {\bibfnamefont {A.~J.}\ \bibnamefont {Daley}},\ }\bibfield
  {title} {\bibinfo {title} {Light scattering and dissipative dynamics of many
  fermionic atoms in an optical lattice},\ }\href
  {https://doi.org/10.1103/PhysRevA.90.023618} {\bibfield  {journal} {\bibinfo
  {journal} {Phys. Rev. A}\ }\textbf {\bibinfo {volume} {90}},\ \bibinfo
  {pages} {023618} (\bibinfo {year} {2014})}\BibitemShut {NoStop}%
\bibitem [{\citenamefont {Turner}\ \emph {et~al.}(2018)\citenamefont {Turner},
  \citenamefont {Michailidis}, \citenamefont {Abanin}, \citenamefont {Serbyn},\
  and\ \citenamefont {Papi\ifmmode~\acute{c}\else
  \'{c}\fi{}}}]{turner2018quantum}%
  \BibitemOpen
  \bibfield  {author} {\bibinfo {author} {\bibfnamefont {C.~J.}\ \bibnamefont
  {Turner}}, \bibinfo {author} {\bibfnamefont {A.~A.}\ \bibnamefont
  {Michailidis}}, \bibinfo {author} {\bibfnamefont {D.~A.}\ \bibnamefont
  {Abanin}}, \bibinfo {author} {\bibfnamefont {M.}~\bibnamefont {Serbyn}},\
  and\ \bibinfo {author} {\bibfnamefont {Z.}~\bibnamefont
  {Papi\ifmmode~\acute{c}\else \'{c}\fi{}}},\ }\bibfield  {title} {\bibinfo
  {title} {Quantum scarred eigenstates in a rydberg atom chain: Entanglement,
  breakdown of thermalization, and stability to perturbations},\ }\href
  {https://doi.org/10.1103/PhysRevB.98.155134} {\bibfield  {journal} {\bibinfo
  {journal} {Phys. Rev. B}\ }\textbf {\bibinfo {volume} {98}},\ \bibinfo
  {pages} {155134} (\bibinfo {year} {2018})}\BibitemShut {NoStop}%
\bibitem [{\citenamefont {Bernien}\ \emph {et~al.}(2017)\citenamefont
  {Bernien}, \citenamefont {Schwartz}, \citenamefont {Keesling}, \citenamefont
  {Levine}, \citenamefont {Omran}, \citenamefont {Pichler}, \citenamefont
  {Choi}, \citenamefont {Zibrov}, \citenamefont {Endres}, \citenamefont
  {Greiner} \emph {et~al.}}]{bernien2017probing}%
  \BibitemOpen
  \bibfield  {author} {\bibinfo {author} {\bibfnamefont {H.}~\bibnamefont
  {Bernien}}, \bibinfo {author} {\bibfnamefont {S.}~\bibnamefont {Schwartz}},
  \bibinfo {author} {\bibfnamefont {A.}~\bibnamefont {Keesling}}, \bibinfo
  {author} {\bibfnamefont {H.}~\bibnamefont {Levine}}, \bibinfo {author}
  {\bibfnamefont {A.}~\bibnamefont {Omran}}, \bibinfo {author} {\bibfnamefont
  {H.}~\bibnamefont {Pichler}}, \bibinfo {author} {\bibfnamefont
  {S.}~\bibnamefont {Choi}}, \bibinfo {author} {\bibfnamefont {A.~S.}\
  \bibnamefont {Zibrov}}, \bibinfo {author} {\bibfnamefont {M.}~\bibnamefont
  {Endres}}, \bibinfo {author} {\bibfnamefont {M.}~\bibnamefont {Greiner}},
  \emph {et~al.},\ }\bibfield  {title} {\bibinfo {title} {Probing many-body
  dynamics on a 51-atom quantum simulator},\ }\href
  {https://www.nature.com/articles/nature24622} {\bibfield  {journal} {\bibinfo
   {journal} {Nature (London)}\ }\textbf {\bibinfo {volume} {551}},\ \bibinfo
  {pages} {579} (\bibinfo {year} {2017})}\BibitemShut {NoStop}%
\bibitem [{\citenamefont {Vasseur}\ and\ \citenamefont
  {Moore}(2016)}]{vasseur2016nonequilibrium}%
  \BibitemOpen
  \bibfield  {author} {\bibinfo {author} {\bibfnamefont {R.}~\bibnamefont
  {Vasseur}}\ and\ \bibinfo {author} {\bibfnamefont {J.~E.}\ \bibnamefont
  {Moore}},\ }\bibfield  {title} {\bibinfo {title} {Nonequilibrium quantum
  dynamics and transport: from integrability to many-body localization},\
  }\href {https://iopscience.iop.org/article/10.1088/1742-5468/2016/06/064010}
  {\bibfield  {journal} {\bibinfo  {journal} {J. Stat. Mech.}\ }\textbf
  {\bibinfo {volume} {2016}},\ \bibinfo {pages} {064010} (\bibinfo {year}
  {2016})}\BibitemShut {NoStop}%
\bibitem [{\citenamefont {Dankert}\ \emph {et~al.}(2009)\citenamefont
  {Dankert}, \citenamefont {Cleve}, \citenamefont {Emerson},\ and\
  \citenamefont {Livine}}]{dankert2009exact}%
  \BibitemOpen
  \bibfield  {author} {\bibinfo {author} {\bibfnamefont {C.}~\bibnamefont
  {Dankert}}, \bibinfo {author} {\bibfnamefont {R.}~\bibnamefont {Cleve}},
  \bibinfo {author} {\bibfnamefont {J.}~\bibnamefont {Emerson}},\ and\ \bibinfo
  {author} {\bibfnamefont {E.}~\bibnamefont {Livine}},\ }\bibfield  {title}
  {\bibinfo {title} {Exact and approximate unitary 2-designs and their
  application to fidelity estimation},\ }\href
  {https://doi.org/10.1103/PhysRevA.80.012304} {\bibfield  {journal} {\bibinfo
  {journal} {Phys. Rev. A}\ }\textbf {\bibinfo {volume} {80}},\ \bibinfo
  {pages} {012304} (\bibinfo {year} {2009})}\BibitemShut {NoStop}%
\bibitem [{\citenamefont {Deutsch}(2018)}]{deutsch2018eigenstate}%
  \BibitemOpen
  \bibfield  {author} {\bibinfo {author} {\bibfnamefont {J.~M.}\ \bibnamefont
  {Deutsch}},\ }\bibfield  {title} {\bibinfo {title} {Eigenstate thermalization
  hypothesis},\ }\href {https://doi.org/10.1088/1361-6633/aac9f1} {\bibfield
  {journal} {\bibinfo  {journal} {Rep. Prog. Phys.}\ }\textbf {\bibinfo
  {volume} {81}},\ \bibinfo {pages} {082001} (\bibinfo {year}
  {2018})}\BibitemShut {NoStop}%
\bibitem [{\citenamefont {Collins}\ and\ \citenamefont
  {Nechita}(2016)}]{collins2016random}%
  \BibitemOpen
  \bibfield  {author} {\bibinfo {author} {\bibfnamefont {B.}~\bibnamefont
  {Collins}}\ and\ \bibinfo {author} {\bibfnamefont {I.}~\bibnamefont
  {Nechita}},\ }\bibfield  {title} {\bibinfo {title} {Random matrix techniques
  in quantum information theory},\ }\href {https://doi.org/10.1063/1.4936880}
  {\bibfield  {journal} {\bibinfo  {journal} {Journal of Mathematical Physics}\
  }\textbf {\bibinfo {volume} {57}},\ \bibinfo {pages} {015215} (\bibinfo
  {year} {2016})}\BibitemShut {NoStop}%
\bibitem [{\citenamefont {Nahum}\ \emph {et~al.}(2018)\citenamefont {Nahum},
  \citenamefont {Vijay},\ and\ \citenamefont {Haah}}]{nahum2018operator}%
  \BibitemOpen
  \bibfield  {author} {\bibinfo {author} {\bibfnamefont {A.}~\bibnamefont
  {Nahum}}, \bibinfo {author} {\bibfnamefont {S.}~\bibnamefont {Vijay}},\ and\
  \bibinfo {author} {\bibfnamefont {J.}~\bibnamefont {Haah}},\ }\bibfield
  {title} {\bibinfo {title} {Operator spreading in random unitary circuits},\
  }\href {https://doi.org/10.1103/PhysRevX.8.021014} {\bibfield  {journal}
  {\bibinfo  {journal} {Phys. Rev. X}\ }\textbf {\bibinfo {volume} {8}},\
  \bibinfo {pages} {021014} (\bibinfo {year} {2018})}\BibitemShut {NoStop}%
\bibitem [{\citenamefont {Khemani}\ \emph {et~al.}(2018)\citenamefont
  {Khemani}, \citenamefont {Vishwanath},\ and\ \citenamefont
  {Huse}}]{khemani2018operator}%
  \BibitemOpen
  \bibfield  {author} {\bibinfo {author} {\bibfnamefont {V.}~\bibnamefont
  {Khemani}}, \bibinfo {author} {\bibfnamefont {A.}~\bibnamefont
  {Vishwanath}},\ and\ \bibinfo {author} {\bibfnamefont {D.~A.}\ \bibnamefont
  {Huse}},\ }\bibfield  {title} {\bibinfo {title} {Operator spreading and the
  emergence of dissipative hydrodynamics under unitary evolution with
  conservation laws},\ }\href {https://doi.org/10.1103/PhysRevX.8.031057}
  {\bibfield  {journal} {\bibinfo  {journal} {Phys. Rev. X}\ }\textbf {\bibinfo
  {volume} {8}},\ \bibinfo {pages} {031057} (\bibinfo {year}
  {2018})}\BibitemShut {NoStop}%
\bibitem [{\citenamefont {Jaksch}\ \emph {et~al.}(1998)\citenamefont {Jaksch},
  \citenamefont {Bruder}, \citenamefont {Cirac}, \citenamefont {Gardiner},\
  and\ \citenamefont {Zoller}}]{jaksch1998cold}%
  \BibitemOpen
  \bibfield  {author} {\bibinfo {author} {\bibfnamefont {D.}~\bibnamefont
  {Jaksch}}, \bibinfo {author} {\bibfnamefont {C.}~\bibnamefont {Bruder}},
  \bibinfo {author} {\bibfnamefont {J.~I.}\ \bibnamefont {Cirac}}, \bibinfo
  {author} {\bibfnamefont {C.~W.}\ \bibnamefont {Gardiner}},\ and\ \bibinfo
  {author} {\bibfnamefont {P.}~\bibnamefont {Zoller}},\ }\bibfield  {title}
  {\bibinfo {title} {Cold bosonic atoms in optical lattices},\ }\href
  {https://link.aps.org/doi/10.1103/PhysRevLett.81.3108} {\bibfield  {journal}
  {\bibinfo  {journal} {Phys. Rev. Lett.}\ }\textbf {\bibinfo {volume} {81}},\
  \bibinfo {pages} {3108} (\bibinfo {year} {1998})}\BibitemShut {NoStop}%
\bibitem [{\citenamefont {Pichler}\ \emph {et~al.}(2010)\citenamefont
  {Pichler}, \citenamefont {Daley},\ and\ \citenamefont
  {Zoller}}]{pichler2010nonequilibrium}%
  \BibitemOpen
  \bibfield  {author} {\bibinfo {author} {\bibfnamefont {H.}~\bibnamefont
  {Pichler}}, \bibinfo {author} {\bibfnamefont {A.~J.}\ \bibnamefont {Daley}},\
  and\ \bibinfo {author} {\bibfnamefont {P.}~\bibnamefont {Zoller}},\
  }\bibfield  {title} {\bibinfo {title} {Nonequilibrium dynamics of bosonic
  atoms in optical lattices: {D}ecoherence of many-body states due to
  spontaneous emission},\ }\href
  {https://link.aps.org/doi/10.1103/PhysRevA.82.063605} {\bibfield  {journal}
  {\bibinfo  {journal} {Phys. Rev. A}\ }\textbf {\bibinfo {volume} {82}},\
  \bibinfo {pages} {063605} (\bibinfo {year} {2010})}\BibitemShut {NoStop}%
\bibitem [{\citenamefont {Bluvstein}\ \emph {et~al.}(2021)\citenamefont
  {Bluvstein}, \citenamefont {Omran}, \citenamefont {Levine}, \citenamefont
  {Keesling}, \citenamefont {Semeghini}, \citenamefont {Ebadi}, \citenamefont
  {Wang}, \citenamefont {Michailidis}, \citenamefont {Maskara}, \citenamefont
  {Ho} \emph {et~al.}}]{bluvstein2021controlling}%
  \BibitemOpen
  \bibfield  {author} {\bibinfo {author} {\bibfnamefont {D.}~\bibnamefont
  {Bluvstein}}, \bibinfo {author} {\bibfnamefont {A.}~\bibnamefont {Omran}},
  \bibinfo {author} {\bibfnamefont {H.}~\bibnamefont {Levine}}, \bibinfo
  {author} {\bibfnamefont {A.}~\bibnamefont {Keesling}}, \bibinfo {author}
  {\bibfnamefont {G.}~\bibnamefont {Semeghini}}, \bibinfo {author}
  {\bibfnamefont {S.}~\bibnamefont {Ebadi}}, \bibinfo {author} {\bibfnamefont
  {T.~T.}\ \bibnamefont {Wang}}, \bibinfo {author} {\bibfnamefont {A.~A.}\
  \bibnamefont {Michailidis}}, \bibinfo {author} {\bibfnamefont
  {N.}~\bibnamefont {Maskara}}, \bibinfo {author} {\bibfnamefont {W.~W.}\
  \bibnamefont {Ho}}, \emph {et~al.},\ }\bibfield  {title} {\bibinfo {title}
  {Controlling quantum many-body dynamics in driven rydberg atom arrays},\
  }\href {https://www.science.org/doi/10.1126/science.abg2530} {\bibfield
  {journal} {\bibinfo  {journal} {Science}\ }\textbf {\bibinfo {volume}
  {371}},\ \bibinfo {pages} {1355} (\bibinfo {year} {2021})}\BibitemShut
  {NoStop}%
\bibitem [{\citenamefont {K\"uhner}\ and\ \citenamefont
  {Monien}(1998)}]{kuhner1998phases}%
  \BibitemOpen
  \bibfield  {author} {\bibinfo {author} {\bibfnamefont {T.~D.}\ \bibnamefont
  {K\"uhner}}\ and\ \bibinfo {author} {\bibfnamefont {H.}~\bibnamefont
  {Monien}},\ }\bibfield  {title} {\bibinfo {title} {Phases of the
  one-dimensional {B}ose-{H}ubbard model},\ }\href
  {https://doi.org/10.1103/PhysRevB.58.R14741} {\bibfield  {journal} {\bibinfo
  {journal} {Phys. Rev. B}\ }\textbf {\bibinfo {volume} {58}},\ \bibinfo
  {pages} {R14741} (\bibinfo {year} {1998})}\BibitemShut {NoStop}%
\bibitem [{\citenamefont {Gross}\ \emph {et~al.}(2007)\citenamefont {Gross},
  \citenamefont {Audenaert},\ and\ \citenamefont {Eisert}}]{gross2007evenly}%
  \BibitemOpen
  \bibfield  {author} {\bibinfo {author} {\bibfnamefont {D.}~\bibnamefont
  {Gross}}, \bibinfo {author} {\bibfnamefont {K.}~\bibnamefont {Audenaert}},\
  and\ \bibinfo {author} {\bibfnamefont {J.}~\bibnamefont {Eisert}},\
  }\bibfield  {title} {\bibinfo {title} {Evenly distributed unitaries: On the
  structure of unitary designs},\ }\href {https://doi.org/10.1063/1.2716992}
  {\bibfield  {journal} {\bibinfo  {journal} {Journal of mathematical physics}\
  }\textbf {\bibinfo {volume} {48}},\ \bibinfo {pages} {052104} (\bibinfo
  {year} {2007})}\BibitemShut {NoStop}%
\end{thebibliography}
\end{document}